\newtheorem{theorem}{Theorem}[section]
\newtheorem{lemma}[theorem]{Lemma}
\newtheorem{meta-theorem}[theorem]{Meta-Theorem}
\newtheorem{remark}[theorem]{Remark}
\newtheorem{corollary}[theorem]{Corollary}
\newtheorem{proposition}[theorem]{Proposition}
\newtheorem{observation}[theorem]{Observation}
\definecolor{darkgreen}{rgb}{0,0.5,0}
\crefname{theorem}{Theorem}{Theorems}
\crefname{proposition}{Proposition}{Propositions}
\crefname{observation}{Observation}{Observations}
\Crefname{lemma}{Lemma}{Lemmas}
\algnewcommand\algorithmicswitch{\textbf{switch}}
\algnewcommand\algorithmiccase{\textbf{case}}
\newcommand{\congest}{$\mathsf{CONGEST}$\xspace}
\newcommand{\local}{$\mathsf{LOCAL}$\xspace}
\newcommand{\poly}{\operatorname{\text{{\rm poly}}}}
\newcommand{\lev}{\textrm{lev}}
\newcommand{\fC}{{\mathcal{C}}}
\newcommand{\id}{{\textrm{id}}}
\renewcommand{\paragraph}[1]{\vspace{0.15cm}\noindent {\bf #1}:}
\newcommand{\FullOrShort}{full}
  \newcommand{\fullOnly}[1]{#1}
  \newcommand{\shortOnly}[1]{}
    \newcommand{\fullOnly}[1]{}
    \newcommand{\IncludePictures}[1]{}
\begin{document}
\title{Improved Deterministic Network Decomposition}
\author{
Mohsen Ghaffari\\ \small{ETH Zurich} \\ \small{ghaffari@inf.ethz.ch}
\and Christoph Grunau \\ \small{ETH Zurich} \\ \small{cgrunau@student.ethz.ch}
\and V\'{a}clav Rozho\v{n} \\ \small{ETH Zurich} \\ \small{rozhonv@ethz.ch}
}

\date{}
\maketitle

\begin{abstract}
    Network decomposition is a central tool in distributed graph algorithms. We present two improvements on the state of the art for network decomposition, which thus lead to improvements in the (deterministic and randomized) complexity of several well-studied graph problems.

    \begin{itemize}
        \item[-] We provide a deterministic distributed network decomposition algorithm with $O(\log^5 n)$ round complexity, using $O(\log n)$-bit messages. This improves on the $O(\log^7 n)$-round algorithm of Rozho\v{n} and Ghaffari [STOC'20], which used large messages, and their $O(\log^8 n)$-round algorithm with $O(\log n)$-bit messages. This directly leads to similar improvements for a wide range of deterministic and randomized distributed algorithms, whose solution relies on network decomposition, including the general distributed derandomization of Ghaffari, Kuhn, and Harris [FOCS'18].
        \medskip
        
        
        \item[-] One drawback of the algorithm of Rozho\v{n} and Ghaffari, in the \congest  model, was its dependence on the length of the identifiers. Because of this, for instance, the algorithm could not be used in the shattering framework in the \congest  model. Thus, the state of the art randomized complexity of several problems in this model remained with an additive $2^{O(\sqrt{\log\log n})}$ term, which was a clear leftover of the older network decomposition complexity [Panconesi and Srinivasan STOC'92]. We present a modified version that remedies this, constructing a decomposition whose quality does not depend on the identifiers,  and thus improves the randomized round complexity for various problems.
    \end{itemize}
\end{abstract}

{
    \setcounter{page}{0}
    \thispagestyle{empty}
}
\newpage
{   
    \hypersetup{linkcolor=blue}
    \tableofcontents
    \setcounter{page}{0}
    \thispagestyle{empty}
}

\newpage

\section{Introduction and Related Work}
\label{sec:intro}
Network decomposition is a central tool in distributed graph algorithms that was first introduced in the seminal work of Awerbuch, Goldberg, Luby, and Plotkin~\cite{awerbuch89}. Currently, the complexity of a wide range of deterministic and randomized distributed algorithms for various local graph problems rests on the complexity of network decomposition. In this work, we present (quantitative and qualitative) improvements on the state of the art network decomposition algorithm.

\subsection{Background}
\paragraph{Distributed Model} We work with the standard synchronous message passing modeling of distributed algorithms on networks. The network is abstracted as an $n$-node graph and there is one processor on each node of the graph. Per round, each processor/node can send one message to each neighbor. If the message size is unbounded, the this is known as the \local model~\cite{linial1987LOCAL}. If the message size is bounded, to some $B$ bits, this is known as the \congest model; the typical assumption then is that $B=\Theta(\log n)$. Initially, nodes 
do not know the topology of the network $G$, except for potentially some estimates on basic global parameters such as the number of nodes $n$ (which is tight up to a polynomial). 
When discussing deterministic algorithms, we assume that each node has a unique $b$-bit identifier, and again the most typical case is to assume $b=\Theta(\log n)$. At the end of the algorithm, each node should know its own part of the output, e.g., its own color when coloring the vertices. The main measure of interest is the round complexity of the algorithm, i.e., the number of rounds until all nodes have finished their computation. 

\paragraph{Network Decomposition} A $(C, D)$ network decomposition of a graph $G=(V, E)$ is a partition of the vertices into disjoint clusters such that each cluster has diameter at most $O(D)$ and where clusters are colored with $O(C)$ colors in a way that adjacent clusters have different colors. A small subtlety is in the definition of the term ``diameter", according to which we can categorize decompositions into two types: (A) in a \textit{strong-diameter decomposition}, any two vertices of a cluster have distance $O(D)$ in the subgraph induced by that cluster, (B) in a \textit{weak-diameter decomposition}, any two vertices of a cluster have distance $O(D)$ in the base graph $G$.

For any $n$-node graph, there is an $(\log n, \log n)$ network decomposition, and this can be computed sequentially via a simple ball carving algorithm~\cite{Awerbuch-Peleg1990, linial93}. Network decomposition is immediately useful for distributed algorithms. As a simple example, given a $(C, D)$ network decomposition (even with weak-diameter), we can compute a maximal independent set (MIS) of the graph in $O(CD)$ rounds in the \local model, by simulating the corresponding sequential greedy algorithm, as follows: We process the colors one by one. Per color, each cluster gathers the topology of the cluster and its immediate neighborhood to the center of the cluster, in $O(D)$ rounds, and decides which vertices of the cluster can be added to the MIS. See \cite{ghaffari2017complexity} for a more general explanation of how one can transform a certain class of sequential algorithms (formally, in the $\mathsf{SLOCAL}$-model) to distributed algorithms in the \local model, using network decomposition, with an $O(CD)$ overhead in locality. See also \Cref{subsec:OtherRelated} for other related work.

\subsection{State of the Art} 
\paragraph{Deterministic Algorithms} Awerbuch et al.~\cite{awerbuch89} gave an algorithm that deterministically computes $(C, D)$ strong-diameter network decomposition in $T$ rounds (even in the \congest model), where $C=D=T=2^{O(\sqrt{\log n \log\log n})}$. Panconesi and Srinivasan~\cite{panconesi-srinivasan} provided a variant of this deterministic algorithm (for the \local model) that improved the bounds to $C=D=T=2^{O(\sqrt{\log n})}$.  However, this $2^{O(\sqrt{\log n})}$ bound remained the state of the art complexity for network decomposition for over 25 years. It also remained the state of the art deterministic complexity for a long list of other fundamental graph problems whose solutions deterministic relied on network decomposition, including maximal independent set, $\Delta+1$ coloring, Lov\'{a}sz Local Lemma, etc, and which were known to admit $\poly(\log n)$ round randomized algorithms. This significant gap between randomized and deterministic algorithms was a central open problem in distributed graph algorithms; see, e.g., the open problems chapter of the 2013 book by Barenboim and Elkin book~\cite{barenboimelkin_book}. Surprisingly, it was also (a provable) bottleneck in the complexity of many randomized algorithms, as shown by Chang, Kopelowitz, and Pettie~\cite{chang16}. See \Cref{subsec:OtherRelated} for some other related work.

Recently, Rozho\v{n} and Ghaffari~\cite{RozhonG19} presented the first deterministic decomposition algorithm with poly-logarithmic parameters and complexity. Concretely, they obtained a $(\log n, \log^3 n)$ weak-diameter decomposition in $O(\log^7 n)$ rounds of the \local model or $O(\log^8 n)$ rounds of the \congest model. 
They also explained how this leads to a $(\log n, \log n)$ strong-diameter decomposition in $O(\log^8 n)$ rounds of the \local model. These results led to the first $\poly(\log n)$ round deterministic distributed algorithms for a wide range of local graph problems, as well as significant improvements for many randomized algorithms (in the shattering framework,  see, e.g.,~\cite{barenboim_symmbreaking, ghaffari2016MIS, chang2018optimal, chang2019coloring}).

\subsection{Our Contributions}
\label{sec:our_contributions}
Our contributions provide improvements on the result of Rozho\v{n} and Ghaffari~\cite{RozhonG19}, in two essentially-orthogonal directions:

\paragraph{Direction 1 -- Faster Decomposition, and Applications}
Our first contribution is to present a faster algorithm that also computes a qualitatively better network decomposition:

\begin{theorem}[Informal Version of \cref{thm:log5}]
\label{thm:log5informal}
There is a deterministic distributed algorithm, in the \congest model, that computes a $(\log n, \log^2 n)$ network decomposition in $O(\log^5 n)$ rounds.
\end{theorem}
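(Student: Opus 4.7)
The plan is to follow the color-by-color framework of Rozho\v{n}--Ghaffari~\cite{RozhonG19}. We process $O(\log n)$ colors sequentially; for each color we compute, in the subgraph induced by the currently unclustered vertices, a collection of pairwise non-adjacent clusters of weak-diameter $O(\log^2 n)$ that cover at least half of the remaining vertices. Because every color removes a constant fraction of the live vertices, after $O(\log n)$ colors every vertex belongs to some cluster, and distinct clusters sharing a color are non-adjacent by construction, so the output is a $(\log n, \log^2 n)$ weak-diameter network decomposition. It therefore suffices to build a single-color subroutine that runs in $O(\log^4 n)$ rounds of the \congest model and meets the above guarantee.

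The single-color subroutine will be a ball-carving procedure consisting of $O(\log^2 n)$ phases. Initially every live vertex is its own cluster. In each phase, every cluster attempts to grow by one additional hop into the still-unassigned live vertices; whenever two growing clusters reach the same boundary vertex, or two clusters would become adjacent, a priority rule based on leader identifiers and current cluster sizes decides which cluster keeps the disputed vertex and which cluster must cede those boundary vertices, marking them as killed for this color. Because every surviving cluster gains at most one hop per phase, after $O(\log^2 n)$ phases the weak diameter is $O(\log^2 n)$. The clusters that remain at the end form the output for this color, and are pairwise non-adjacent in $G$ since any attempted adjacency would have been resolved during the phase in which the two frontiers met. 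The key combinatorial invariant is that each phase kills at most a $1/\Theta(\log^2 n)$ fraction of the currently live vertices, so after $O(\log^2 n)$ phases a constant fraction of the initial live set is still clustered; this is established by charging each killed vertex to the growth of the cluster that won the conflict and bounding the total cluster growth through a potential-function argument in the spirit of~\cite{RozhonG19}.

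The main obstacle is implementing a single phase in $O(\log^2 n)$ \congest rounds, which is essentially tight since a cluster's radius may already be $\Theta(\log^2 n)$ when the phase begins. The plan is to maintain for each cluster a BFS tree rooted at its leader and to pipeline $O(\log n)$-bit tokens (leader identifier, priority tag, accept/reject verdict, count of requesting boundary vertices) along this tree: the leader aggregates boundary-conflict outcomes upward, decides the merges, and broadcasts the verdicts downward, all within $O(\log^2 n)$ \congest rounds. Multiplying the three factors, $O(\log n)$ colors $\times$ $O(\log^2 n)$ phases per color $\times$ $O(\log^2 n)$ rounds per phase, yields the claimed $O(\log^5 n)$ total. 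The hardest step will be tuning the priority rule and the charging scheme so that the per-phase killed mass is simultaneously compatible with (a) the $1/\Theta(\log^2 n)$ upper bound required for the half-survives invariant, (b) the pairwise non-adjacency of final same-color clusters, and (c) the simple $O(\log n)$-bit \congest aggregation along each cluster's BFS tree, independently of the identifier bit-length of the input graph.
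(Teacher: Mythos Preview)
Your outer reduction is exactly right and matches the paper: apply a ball-carving subroutine $O(\log n)$ times, one per color, each time clustering at least half of the remaining live vertices into non-adjacent clusters. The issue is entirely inside the subroutine.

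First, a small inconsistency: you start with every live vertex in its own singleton cluster, yet you have clusters ``grow by one additional hop into the still-unassigned live vertices.'' There are no unassigned vertices at the start. In the Rozho\v{n}--Ghaffari framework (and in this paper), every live vertex is always in some cluster; growth means vertices \emph{move} from one cluster to another, not that clusters absorb free vertices. Your conflict-resolution picture (two frontiers meeting in unclaimed territory) does not apply.

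Second, and more seriously, you have not supplied the mechanism that gets you down to $O(\log^2 n)$ total growth steps. The potential argument ``in the spirit of~\cite{RozhonG19}'' gives $b$ phases times $O(b\log n)$ steps per phase, i.e.\ $O(\log^3 n)$ steps and hence $O(\log^3 n)$ weak diameter; that is exactly the bound of~\cite{RozhonG19}, not the improved one. Cutting the step count to $O(\log^2 n)$ is the whole content of the theorem, and it requires genuinely new ideas that your proposal does not contain: the paper assigns each cluster its own \emph{level} in $\{0,\dots,b\}$ (so progress is measured per cluster rather than globally by the current identifier bit), and tracks for each cluster a \emph{token count} that increases when vertices join but does \emph{not} decrease when vertices leave. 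The token count gives a monotone size proxy against which the grow/kill threshold is taken, and a potential $\Phi(\fC)=3i-2\,\mathrm{lev}(\fC)+\mathrm{id}_{\mathrm{lev}(\fC)+1}(\fC)$ bounds how many times a vertex can change clusters. Non-adjacency at the end is not obtained from a single phase ``resolving'' a frontier collision; it is proved via a separate \emph{transcript-tree} invariant showing that two neighboring clusters always sit in ancestor/descendant relation in an auxiliary tree of depth $b$. None of this is captured by an unspecified ``priority rule based on leader identifiers and current cluster sizes.''

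Two smaller points. In the \congest implementation, the trees the paper aggregates over are Steiner trees that may overlap (each vertex lies in $O(\log n)$ of them, since vertices can leave a cluster but remain in its tree); the $O(\log^2 n)$-round aggregation therefore needs a pipelining lemma for overlapping trees, not just a per-cluster BFS. And your final remark about independence from the identifier bit-length is not true of this theorem: the basic algorithm has $O(b+\log n)$ phases and steps, so it does depend on $b$; removing that dependence is a separate construction in the paper (the balanced-coloring variant).
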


This should be contrasted with the $(\log n, \log^3 n)$ network decomposition of Rozho\v{n} and Ghaffari~\cite{RozhonG19} that had a $O(\log^7 n)$ round complexity in the \local model and $O(\log^8 n)$ round complexity in the \congest model. As in their work, in the \local model, one can turn this into a strong-diameter $(\log n, \log n)$ network decomposition, in $O(\log^6 n)$ rounds.

Our faster $O(\log^5 n)$-round algorithm immediately leads to a similar round complexity improvement for all the applications of deterministic network decomposition. As concrete examples, we show how we can deterministically solve maximal independent set and $\Delta+1$ coloring problems in $O(\log^5 n)$ and $O(\log^6 n)$ rounds of the \congest model, respectively. These algorithms improve on the $O(\log^7 n)$-round \local model algorithms of Rozho\v{n} and Ghaffari~\cite{RozhonG19} for the \local model, as well as the $O(\log^8 n)$-round \congest-model algorithms of Censor-Hillel et al.~\cite{censor2017derandomizing, RozhonG19} for MIS and of Bamberger et al.~\cite{bamberger2020efficient} for coloring. We comment that these improvements, besides the new network decomposition, also use some other ideas for pipelining information in the \congest model to save an additional factor of $\log n$. 
\begin{restatable}{corollary}{deterministicmis}
There is a deterministic distributed algorithm, in the \congest model, that computes a maximal independent set in $O(\log^5 n)$ rounds.
\end{restatable}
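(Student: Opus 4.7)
The plan is to combine the new decomposition algorithm promised by \Cref{thm:log5informal} with a standard derandomization of a logarithmic-round randomized MIS procedure. First, apply \Cref{thm:log5informal} to $G^2$, the square of the communication graph, to obtain a $(\log n, \log^2 n)$ weak-diameter decomposition in $O(\log^5 n)$ rounds; a round of \congest on $G^2$ is simulated by $O(1)$ rounds on $G$ for the bandwidth-friendly operations we need (broadcasting $O(\log n)$-bit messages to $2$-neighborhoods) so the decomposition cost is preserved. Working in $G^2$ ensures that vertices belonging to different clusters of the same color are non-adjacent in $G$, which is what makes it safe to process clusters of the same color in parallel when constructing an independent set.

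Next, take a pairwise-independent version of Luby's MIS algorithm, which runs in $O(\log n)$ phases and in each phase uses a shared seed of only $O(\log n)$ bits to decide random marks. Derandomize each phase by the method of conditional expectations with respect to an additive potential $\Phi = \sum_v \phi_v$ measuring the expected progress (removed vertices or edges) of that phase. For each of the $O(\log n)$ seed bits, the conditional expectation splits as a sum over clusters, and each cluster can compute its local contribution in $O(D) = O(\log^2 n)$ rounds by aggregating along its weak-diameter BFS tree in $G$. Clusters of the same color fix their local best value in parallel, then pass residual conditional-expectation information to neighboring colors, so one seed bit is fixed in $O(CD) = O(\log^3 n)$ rounds. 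Fixing all $O(\log n)$ bits of a phase and then running $O(\log n)$ phases gives $O(\log^5 n)$ rounds for the MIS computation itself, matching the decomposition cost.

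The main obstacle is the bandwidth bottleneck inside each weak-diameter BFS tree: the derandomization must deliver, for each vertex, conditional-probability data that depends on the partial seed chosen by its neighbors, and the naive schedule pays a logarithmic factor because it sequentially aggregates one scalar at a time up and down each tree of depth $\log^2 n$. The remedy, mentioned in the introduction, is to pipeline: rather than handling one seed bit (or one Luby phase) in full before starting the next, we stream the $O(\log n)$ relevant $O(\log n)$-bit messages through each cluster's BFS in parallel, so that the aggregation for $\log n$ bits costs $O(D + \log n) = O(\log^2 n)$ instead of $O(D \log n)$. This saves exactly one $\log n$ factor, keeping the total for the MIS step within $O(\log^5 n)$ and yielding the claimed bound when added to the $O(\log^5 n)$ decomposition round complexity.
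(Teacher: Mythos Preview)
Your plan has two genuine gaps.

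First, the claim that a round of \congest on $G^2$ can be simulated by $O(1)$ rounds on $G$ is false for the operations the decomposition algorithm needs: in \Cref{thm:log5}, vertices propose to neighboring clusters and clusters aggregate proposal counts, and on $G^2$ a vertex may have $\Theta(\Delta^2)$ neighbors, so these steps cannot be carried out with $O(\log n)$-bit messages in $O(1)$ rounds. The paper avoids $G^2$ entirely and decomposes $G$ itself. Since clusters of the same color are already non-adjacent in $G$, the colors are processed sequentially: before handling color $i$, remove every vertex adjacent to the current MIS; then each color-$i$ cluster independently computes an MIS of its remaining vertices using its \emph{own} local random seed (derandomized via the algorithm of Censor-Hillel et al.). There is no global seed and no ``passing residual conditional-expectation information to neighboring colors.''

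Second, your pipelining is aimed at the wrong bottleneck and cannot work as described. You propose to stream the $O(\log n)$ conditional-expectation bit-fixings of one phase, but these are inherently sequential: the pessimistic estimator for bit $j+1$ depends on the value fixed for bit $j$, so the $(j{+}1)$st aggregation cannot begin before the $j$th has completed its full up-and-down traversal of the tree; hence $O(\log n)$ bits genuinely cost $O(D\log n)$, not $O(D+\log n)$. The bottleneck the paper actually handles is different. Aggregation for a cluster runs over that cluster's Steiner tree (there is no BFS tree \emph{inside} the cluster, since the diameter guarantee is only weak), and a single vertex may lie in $O(\log n)$ such Steiner trees, so aggregations for different same-color clusters collide on shared edges. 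The pipelining of \Cref{cor:pipelining_with_overlap} resolves this overlap in $O(\log^2 n)$ rounds per single bit-fixing, yielding $O(\log n)\cdot O(\log n)\cdot O(\log^2 n)=O(\log^4 n)$ rounds per color and $O(\log^5 n)$ overall.
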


\begin{restatable}{corollary}{deterministiccoloring}
There is a deterministic distributed algorithm, in the \congest model, that computes a $\Delta+1$ coloring, where $\Delta$ is an upper bound on the maximum degree, in $O(\log^6 n)$ rounds.
\end{restatable}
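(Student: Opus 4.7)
The plan is to combine the network decomposition of \cref{thm:log5informal} with a cluster-by-cluster sequential greedy list-coloring, executed in \congest via pipelined palette handling. First, apply \cref{thm:log5informal} to obtain, in $O(\log^5 n)$ rounds of \congest, a weak-diameter $(C, D)$ decomposition with $C = O(\log n)$ and $D = O(\log^2 n)$; then spend $O(D)$ further rounds to build, for each cluster $\fC$, a Steiner tree $T_\fC$ of $\fC$ in $G$ of depth $O(D)$, which will serve as the intra-cluster communication backbone. Second, process the $C$ cluster color classes one at a time; within a single color class the clusters are pairwise non-adjacent in $G$ and so are handled in parallel.

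When a cluster $\fC$ is processed, its uncolored vertices face a residual $(\mathrm{deg}+1)$-list-coloring instance: each $v \in \fC$ has a palette $L(v) \subseteq \{1, \ldots, \Delta+1\}$ consisting of the colors not used by its already-colored neighbors (from earlier-class clusters), satisfying $|L(v)| \ge d_{\mathrm{uncol}}(v) + 1$. The plan is to simulate the sequential greedy algorithm along a DFS traversal of $T_\fC$: vertices commit to colors in traversal order, and each one broadcasts its $O(\log n)$-bit committed color along $T_\fC$ so that later vertices can incorporate it into their palettes and choose a conflict-free color from $L(v)$.

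The main obstacle will be the \congest bandwidth: the palette $L(v)$ may contain up to $\Delta + 1 = \Omega(n)$ entries and cannot be transmitted in $O(\log n)$-bit messages. The plan is to follow the palette-handling procedure of Bamberger et al.~\cite{bamberger2020efficient}, in which no vertex ever ships $L(v)$ itself: each vertex stores $L(v)$ implicitly as the complement, inside $\{1, \ldots, \Delta+1\}$, of the committed colors learned from its neighbors, and extracts a free color via a binary-search-style interaction with its neighbors that costs $O(\log^2 n)$ \congest rounds per commit. Pipelining these commits along $T_\fC$ yields an in-cluster cost of $O(D \cdot \log^3 n) = O(\log^5 n)$ per color class; summed over the $C = O(\log n)$ color classes and added to the $O(\log^5 n)$ decomposition cost, this totals $O(\log^6 n)$ rounds of \congest, as claimed. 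The bookkeeping in the pipelining step (in particular, scheduling the color announcements of many cluster vertices concurrently on a tree of small depth but potentially many nodes) is what consumes the extra $\log n$ factor compared to the MIS corollary.
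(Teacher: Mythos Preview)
There is a genuine gap in the in-cluster coloring step. Your plan is to simulate sequential greedy along a DFS traversal of $T_\fC$, committing vertices one at a time. But a cluster may contain $\Theta(n)$ vertices, and the commit of a vertex $v$ depends on the committed colors of all of $v$'s \emph{graph} neighbors that precede it in the DFS order---not just its tree neighbors. These dependencies can form chains of length $\Theta(n)$: if $v_1,v_2,\ldots,v_k$ is a path in $G[\fC]$ with each $v_i$ earlier than $v_{i+1}$ in the traversal, then $v_{i+1}$ cannot pick its color until $v_i$ has committed and its color has propagated. No amount of pipelining along $T_\fC$ collapses such a chain, so the per-cluster cost is $\Omega(n)$ in the worst case rather than $O(D\cdot\log^3 n)$. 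Your attribution of a ``binary-search palette extraction'' to \cite{bamberger2020efficient} is also off; that paper does not implement sequential greedy.

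The paper's proof avoids this entirely by treating the per-cluster coloring as a black box: it reuses the MIS proof verbatim, replacing the \congest derandomization of Censor-Hillel et al.\ by the \congest $(\mathrm{deg}+1)$-list-coloring derandomization of Bamberger et al.~\cite{bamberger2020efficient}. That algorithm derandomizes a \emph{parallel} randomized coloring (many vertices commit in each step) via the method of conditional expectations, aggregating a pessimistic estimator over the cluster's Steiner tree; hence there is no $\Theta(n)$ dependency chain. The aggregation over overlapping Steiner trees is handled exactly as in the MIS corollary via \cref{cor:pipelining_with_overlap}. The extra $\log n$ factor over MIS stems from the higher internal cost of the Bamberger et al.\ derandomization, not from any sequential traversal of the cluster.
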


\paragraph{Direction 2 -- Identifier-Independent Decomposition, with Application}
One drawback of the construction of Rozho\v{n} and Ghaffari~\cite{RozhonG19} in the \congest model was that the quality of the obtained network decomposition depends on the length of the identifiers provided. For instance, in a network with $b$-bit identifiers --- and where thus $b$-bit messages are permitted---their algorithm computes a $(\log n, b^2 \log n)$ decomposition, in $O(b^4\log^3 n)$ rounds. This bad dependency on the length of the identifiers becomes a bottleneck in some applications: in particular, it was not possible to use their algorithm in the shattering framework for randomized algorithms with small messages, and the best known algorithm in the \congest-model remained with an $2^{O(\sqrt{\log \log n})}$ term in the round complexity, which was a clear remnant of the old $2^{O(\sqrt{\log n})}$ round complexity of deterministic network  decomposition~\cite{panconesi-srinivasan, ghaffari2019MIS}. We present a variant of their algorithm that computes a $(\log n, \log^3 n)$ decomposition in $O(\log^8 n + \log^7 n \log^* b)$ rounds  in the setting with $b$-bit identifiers and using $b$-bit messages. This is achieved by replacing the reliance of the construction's invariant on the bits of the identifiers by some semi-balanced $2$-coloring of the clusters, which is computed in the course of the construction.

Furthermore, we show that this second improvement is compatible with the first, in the sense that we can put the two ideas together and get a faster algorithm that constructs an identifier-independent network decomposition. In particular, we get an algorithm that computes a $(\log n, \log^2 n)$ decomposition in $O(\log^5 n + \log^4 n \log^* b)$ rounds  in the setting with $b$-bit identifiers and using $b$-bit messages.

\begin{theorem}[Informal Version of \cref{thm:log5balanced}]
\label{thm:log5balanced_informal}
There is a deterministic distributed algorithm, in the \congest model, that computes a $(\log n, \log^2 n)$ network decomposition in $O(\log^5 n + \log^4 n \log^* b)$ rounds in the setting with $b$-bit identifiers and using $b$-bit messages.
\end{theorem}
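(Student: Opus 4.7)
The plan is to combine the two improvements: take the faster $O(\log^5 n)$-round algorithm underlying \cref{thm:log5informal} and retrofit it with the identifier-independence mechanism sketched in Direction~2. The key hope is that the two modifications are essentially orthogonal: the speedup of Direction~1 changes the granularity of the recursion (fewer outer phases, shorter per-phase cost), while the identifier-independence idea changes only what drives the per-phase splitting decisions (semi-balanced $2$-colorings of clusters in place of identifier bits). If the splitting invariants of the fast algorithm can be restated to depend only on the ``bit'' returned by a $2$-coloring and not on a specific identifier bit, the two ideas simply compose.

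Concretely, I would first isolate the sub-procedure of the faster algorithm in which the $b$-bit identifier is actually read. In the Rozho\v{n}--Ghaffari framework, this happens inside each color phase, where at the $i$-th splitting sub-step the $i$-th bit of each cluster's identifier determines which side of a cut the cluster joins. The faster algorithm of \cref{thm:log5informal} performs $O(\log n)$ such sub-steps per color phase, at a total per-phase cost of $O(\log^4 n)$. I would replace each identifier-bit access by a semi-balanced $2$-coloring of the current cluster graph, which yields exactly one bit per cluster per sub-step. One then argues, analogously to the slower identifier-independent variant advertised in Direction~2, that a semi-balanced $2$-coloring suffices to drive the invariant (i.e., a constant fraction of remaining adjacencies is resolved in each sub-step), so the original analysis of the fast algorithm goes through with only cosmetic changes.

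The remaining ingredient is producing those $2$-colorings without blowing up the round complexity. The approach is to maintain ``virtual identifiers'' for clusters and shrink them in Linial-style across levels: a single compression step applied to virtual identifiers of length $\ell$ produces distinguishing labels of length $O(\log \ell)$, from which a semi-balanced $2$-coloring follows by one round of the decomposition construction. Chaining $O(\log^* b)$ such compression steps reduces the initial $b$-bit identifiers down to $O(\log n)$ bits, at a cost of $O(\log^4 n)$ rounds per compression step, which accounts for the additive $O(\log^4 n \log^* b)$ term. Once the virtual identifier length has stabilized at $O(\log n)$, each further $2$-coloring can be produced at amortized cost that fits inside the $O(\log^4 n)$ budget of the enclosing color phase, contributing only $O(\log^5 n)$ in total. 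Since the compressions can be pipelined across the outer color phases, the two running-time contributions add rather than multiply.

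The main obstacle I anticipate is verifying that the splitting invariant of the faster algorithm tolerates the weaker ``semi-balanced'' guarantee of a deterministically computed $2$-coloring: the original analysis may implicitly rely on bit positions of identifiers being ``generic'' in a sense that does not translate verbatim to a single bit delivered by a $2$-coloring. Re-examining the potential function that bounds the number of splitting sub-steps per color phase, and certifying that it still decays geometrically under this weaker bit source, is the technically delicate part. A secondary subtlety is to carefully pipeline the recursive $2$-coloring computations with the outer decomposition phases so that the additive $\log^* b$ overhead does not multiply into the leading $\log^5 n$ term.
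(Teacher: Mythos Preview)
Your high-level plan is exactly the paper's: take the fast algorithm of \cref{thm:log5informal} and, in each phase, replace the identifier-bit lookup by a freshly computed semi-balanced $2$-coloring of the current cluster graph; you also correctly anticipate that the delicate part is re-verifying the separation invariant of the fast algorithm under this weaker bit source.

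The gap is in your mechanism for \emph{producing} the $2$-colorings. You propose Linial-style compression of ``virtual identifiers'' from length $\ell$ to $O(\log \ell)$, iterated $O(\log^* b)$ times. But Linial's one-round step takes a proper $k$-coloring to a proper $O(\Delta^2 \log k)$-coloring; reaching label length $O(\log \ell)$ requires the underlying (cluster) graph to have bounded degree, which neither $G$ nor the cluster graph has. Without a degree bound the compression stalls, and your accounting for the $O(\log^4 n \log^* b)$ term collapses. Even if the compression went through, its output would be only a proper coloring, not globally unique identifiers, so feeding it back into the fast algorithm---whose transcript-tree correctness argument uses uniqueness of cluster identifiers---is not immediate either.

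The paper instead computes each balanced $2$-coloring \emph{directly} on the cluster graph (\cref{thm:balanced_coloring2}, \cref{prop:balanced_coloring_insane_version}), never shortening identifiers. Every cluster picks one outgoing edge; clusters of in-degree at least $10$ are heavy, the rest light. The subgraph of light clusters has maximum degree at most $11$, and it is \emph{only on this bounded-degree piece} that Linial's $O(\log^* b)$-round MIS (of the square) is invoked; each MIS cluster then colors itself and its neighbors with discrepancy at most $1$, while heavy clusters and their isolated light dependents are handled by a token-pairing scheme along Steiner trees. One invocation costs $O((R+\log n)\log^* b)$ with $R=O(\log^2 n)$, and one invocation per phase per color class yields the additive $O(\log^4 n \log^* b)$. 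A final subtlety missing from your sketch: the balanced coloring is only \emph{partial} (clusters isolated in the relevant cluster graph stay uncolored), so the transcript tree of \cref{sec:tree} is widened to three children per phase, and one must argue separately that an uncolored cluster never again neighbors a same-level cluster.
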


This leads to improvements for randomized algorithms in the \congest-model, in the shattering framework. For instance, for MIS, we get this result:

\begin{restatable}{corollary}{randomizedmis}
\label{thm:mis_randomized}
There is a randomized distributed algorithm that computes a maximal independent set in $O(\log \Delta \cdot \log\log n + \log^6 \log n)$ rounds of the \congest model, with high probability.
\end{restatable}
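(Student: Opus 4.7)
The plan is to apply the standard shattering framework for randomized MIS in \congest, using \Cref{thm:log5balanced_informal} as the deterministic post-shattering engine. This directly replaces the $2^{O(\sqrt{\log\log n})}$ additive term that has been inherited from the old Panconesi--Srinivasan decomposition with a $\poly(\log\log n)$ term.

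\emph{First phase (randomized shattering).} Run the \congest-model randomized MIS algorithm of \cite{ghaffari2016MIS, ghaffari2019MIS} for $\Theta(\log\Delta \cdot \log\log n)$ rounds. The analysis there shows that, with high probability, every connected component of ``unfinished'' nodes has at most $N = \poly(\log n)$ vertices, and can be simulated in the original network with only $\poly(\log n)$ hop-diameter overhead per round.

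\emph{Second phase (deterministic post-shattering).} On each residual component, we solve MIS deterministically. The component has $N = \poly(\log n)$ vertices, but the identifiers retained from the original graph still have $b = \Theta(\log n)$ bits, so we cannot afford a deterministic algorithm whose complexity depends on $b$. This is precisely the setting \Cref{thm:log5balanced_informal} is designed for: invoking it with the role of $n$ played by $N$ yields a $(\log N, \log^2 N)$ network decomposition in
\[
O(\log^5 N + \log^4 N \cdot \log^* b) \;=\; O\bigl(\log^5 \log n + \log^4 \log n \cdot \log^* \log n\bigr) \;=\; O(\log^5 \log n)
\]
rounds, since $\log^* \log n$ is dwarfed by $\log\log n$. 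Feeding this decomposition into the cluster-by-cluster deterministic MIS routine used to derive the deterministic MIS corollary above costs an additional $O(\log\log n)$ factor for iterating through the color classes, for a grand total of $O(\log^6 \log n)$ rounds in this phase.

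Summing the two phases yields the claimed $O(\log\Delta \cdot \log\log n + \log^6 \log n)$ bound. The main obstacle I expect is the usual \congest-model subtlety in running an algorithm in parallel across disjoint shattered components: different components may share original-graph edges and short paths, and one must verify that the simultaneous execution only incurs constant congestion per edge. This can be handled by a standard low-congestion scheduling argument, using the fact that each component only communicates within a $\poly(\log n)$-hop neighborhood in the original network. A secondary technical point is ensuring that the deterministic MIS bound genuinely scales with the component size $N$ rather than the global $n$; the identifier-independence guaranteed by \Cref{thm:log5balanced_informal} is exactly what makes this possible.
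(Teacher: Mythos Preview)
Your shattering guarantee is wrong, and this is the crux of the argument. Running Ghaffari's algorithm for $O(\log\Delta)$ (or even $O(\log\Delta\cdot\log\log n)$) rounds does \emph{not} leave components with $\poly(\log n)$ vertices; it leaves components with up to $O(\Delta^{4}\log n)$ vertices. The quantity that is $O(\log n)$ is the size of any $5$-independent set inside a component, not the component itself. If you apply \Cref{thm:log5balanced_informal} directly to a component with $N=O(\Delta^{4}\log n)$ nodes, the decomposition alone costs $O(\log^{5} N)=O(\log^{5}\Delta)$ rounds, which destroys the bound whenever $\Delta$ is super-polylogarithmic in $n$.

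The paper closes this gap by inserting a ruling-set step: after the $O(\log\Delta)$-round shattering, compute a $(6,O(\log\log n))$ ruling set in each component (randomized, $O(\log\log n)$ rounds). This partitions the component into at most $O(\log n)$ clusters of strong radius $O(\log\log n)$, and the identifier-independent decomposition is run on the resulting \emph{cluster graph}, which now genuinely has $O(\log n)$ virtual vertices. Because each virtual vertex is a radius-$R_0=O(\log\log n)$ tree rather than a single node, the decomposition picks up an extra $R_0$ factor (cf.\ \Cref{rem:change_nodes_to_trees}), giving $O(R_0\cdot\log^{5}\log n)=O(\log^{6}\log n)$; this is where the sixth power actually comes from. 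Finally, the per-cluster MIS is not solved via the Censor-Hillel derandomization you invoke (its cost would again scale with $\log\Delta$), but by running $O(\log n)$ independent one-bit-message randomized instances of \cite{ghaffari2016MIS} in parallel for $O(\log\Delta+\log\log n)$ rounds and aggregating a success bit over the cluster; summed over the $O(\log\log n)$ colors this yields the $O(\log\Delta\cdot\log\log n)$ term.
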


In contrast, the previous best algorithm had complexity $O(\log \Delta \cdot \log\log n) + 2^{O(\sqrt{\log\log n})}$~\cite{ghaffariPortmann2019}.
We get a similar result for $\Delta + 1$ coloring. 
\begin{restatable}{corollary}{randomizedcoloring}
\label{thm:coloring-in-shattering}
There is a randomized distributed algorithm, in the \congest model, that computes a $\Delta+1$ coloring in any $n$-node graph with maximum degree at most $\Delta$ in $O(\log \Delta + \log^6\log n)$ rounds, with high probability.
\end{restatable}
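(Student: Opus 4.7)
The plan is to run the standard randomized shattering framework for $\Delta+1$ coloring, but with an identifier-independent deterministic post-processing driven by \cref{thm:log5balanced_informal}. In the pre-shattering stage, a known randomized \congest algorithm for $\Delta+1$ coloring (e.g.\ the approach of Halld\'orsson--Kuhn--Maus--Tonoyan, or the degree-plus-one list coloring framework initiated by Chang--Li--Pettie) runs for $O(\log \Delta)$ rounds and colors almost every vertex. With high probability, the uncolored set $B$ induces components of size $n' = \poly(\log n)$, and on each such component every unfinished vertex retains a palette strictly larger than its residual degree, so the remaining task is a degree-plus-one list coloring on a small component.

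For the post-shattering stage, on each residual component (in parallel) we run the identifier-independent variant of the deterministic $\Delta+1$ coloring algorithm. This variant is obtained by plugging \cref{thm:log5balanced_informal} into the same \congest coloring pipeline that underlies our $O(\log^6 n)$-round deterministic coloring corollary: \cref{thm:log5balanced_informal} produces a $(\log n', \log^2 n')$ decomposition in $O(\log^5 n' + \log^4 n' \log^* b)$ rounds, and then processing the $O(\log n')$ color classes one at a time, using the same pipelined CONGEST coloring idea, adds another $O(\log^6 n' + \log^5 n' \log^* b)$ rounds. Plugging in $n' = \poly(\log n)$ and $b = O(\log n)$, one has $\log n' = O(\log \log n)$ and $\log^* b = O(\log^* \log n)$, so both contributions collapse to $O(\log^6 \log n)$. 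Summing the two stages yields the claimed $O(\log \Delta + \log^6 \log n)$ round complexity in \congest, with high probability.

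The main obstacle is this second stage. Although each residual component is tiny, $n' = \poly(\log n)$, the identifiers inherited from the input are $b = O(\log n)$ bits long, which is large compared to $\log n'$. If one drove the post-processing by the identifier-dependent decomposition of \cite{RozhonG19}, its bound $O(b^4 \log^3 n')$ would contribute a $\poly(\log n)$ additive term that dominates and reintroduces exactly the $2^{O(\sqrt{\log \log n})}$ leftover the paper is trying to eliminate. Invoking \cref{thm:log5balanced_informal} is what confines the identifier dependence to a benign $\log^* b$ factor, which is absorbed by $O(\log^6 \log n)$. Verifying that the identifier-independent decomposition really does compose with the \congest pipelining used in the deterministic coloring algorithm, so that no hidden $\poly(b)$ sneaks back in, is the main technical point of this composition.
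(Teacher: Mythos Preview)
Your proposal is correct and follows essentially the same approach as the paper: shatter with a randomized $O(\log\Delta)$-round phase, then clean up the residual pieces deterministically using the identifier-independent decomposition of \cref{thm:log5balanced_informal} so that the $b=O(\log n)$ identifiers contribute only a $\log^* b$ factor. The paper's own proof is even terser than yours --- the fast version carries no proof at all, and the slow version (\cref{thm:coloring-in-shattering_slow}) simply points to the \congest shattering-based coloring of \cite{ghaffari2019MIS} with the new decomposition substituted in.

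One difference worth flagging: you assume the pre-shattering directly leaves components of size $n'=\poly(\log n)$. The paper's route, by analogy with the MIS argument in \cref{thm:mis_randomized_slow}, does not rely on this. There the shattered components can have up to $\poly(\Delta)\cdot\log n$ vertices; the paper only uses that any $O(1)$-independent set inside a component has size $O(\log n)$, then runs a randomized ruling-set step to contract each component into a virtual cluster graph with $O(\log n)$ virtual nodes of diameter $O(\log\log n)$, and finally builds the decomposition on that virtual graph via \cref{rem:change_nodes_to_trees}. Your variant is cleaner provided the pre-shattering you cite really delivers $\poly(\log n)$-size components in \congest; if you only have the weaker shattering guarantee used in \cite{ghaffari2019MIS}, the ruling-set intermediate step is needed to keep the post-shattering instance small.
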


\subsection{Other Related Work}
\label{subsec:OtherRelated}
Here, we discuss some of the other related work that were not mentioned before. 

\paragraph{Usages of Decompositions} Network decomposition has been a central algorithm tool in distributed graph algorithms, since the work of Awerbuch et al. \cite{awerbuch89}. The work of \cite{ghaffari2017complexity, ghaffari2018derandomizing} generalized this much further: (1) \cite{ghaffari2017complexity} showed that one can use algorithms for $(C, D)$ decomposition to transform any sequential local algorithm (formally, in the $\mathsf{SLOCAL}$ model defined by \cite{ghaffari2017complexity}) to the \local model with only a slow down proportional to $CD$, when using a $(C, D)$ decomposition algorithm. (2) the work of \cite{ghaffari2018derandomizing} showed, how using the former together with the method of conditional expectation, one  can derandomize any $T$-round randomized \local model algorithm for any problem whose solution can be checked deterministically in $R$ rounds to a deterministic \local model algorithm with round complexity $O(CD(R+T))$ plus the time necessary to construct the network decomposition. 
Because of this, and the recent network decomposition algorithm of Rozho\v{n} and Ghaffari\cite{RozhonG19}, there is now a general efficient derandomization theorem for the \local model, which states that any $\poly(\log n)$-round randomized \local model algorithm for any locally checkable problem can be transformed to a deterministic \local model algorithm for the same problem, with only a $\poly(\log n)$ round slow down. With our improved network decomposition, the slow down is now improved  to $O(\log^5 n)$.
 
\paragraph{Decomposition Construction, Randomized Algorithms} Linial and Saks~\cite{linial93} gave a randomized algorithm that computes a $(\log n, \log n)$ weak-diameter network decomposition in $O(\log^2 n)$ rounds of the \congest model, with high probability. Elkin and Neiman~\cite{elkin16_decomp} presented a randomized algorithm that computes a $(\log n, \log n)$ strong-diameter network decomposition in $O(\log^2 n)$ rounds of the \congest model, with high probability.   
 
\paragraph{Decomposition Construction, Other Deterministic Results} Let us also mention some other deterministic results on constructing decompositions. As discussed before, the classic deterministic algorithm of Panconesi and Srinivasan provided a $(C, D)$ decomposition in $T$ rounds of the \local model where $C=D=T=2^{O(\sqrt{\log n})}$. Awerbuch et al.~\cite{awerbuch96} showed that, in the \local model, one can turn this into a $(\log n , \log n)$ decomposition in $2^{O(\sqrt{\log n})}$ rounds. Ghaffari~\cite{ghaffari2019MIS} gave a network decomposition algorithm matching the $C=D=T=2^{O(\sqrt{\log n})}$ bounds of Panconesi and Srinivasan in the \congest model. Ghaffari and Portmann \cite{ghaffariPortmann2019} gave an extension of this to power graphs $G^{k}$: in $k 2^{O(\sqrt{\log n})}$ rounds of the \congest model, their algorithm creates clusters colored with $2^{O(\sqrt{\log n})}$ colors, such that clusters of the same color have distance at least $k$, and each cluster has diameter at most $k 2^{O(\sqrt{\log n})}$ in graph $G$. They also discussed the applications of this power-graph decomposition for various problems including MIS, spanners, dominating set approximation, and neighborhood covers. 
The bounds were improved considerably in the work of Rozho\v{n} and Ghaffari \cite{RozhonG19}: in $k O(\log^8 n)$ rounds of the \congest model, their algorithm creates clusters colored with $O(\log n)$ colors, such that clusters of the same color have distance at least $k$, and each cluster has diameter  $O(k \log^3 n)$ in graph $G$. 

\section{Faster Network Decomposition}
\label{sec:log5}

In this section we state our first technical contribution, a faster network decomposition algorithm. 

\begin{theorem}
\label{thm:log5}
Let $G$ be a graph on $n$ nodes where each node has a unique $b = O(\log n)$-bit identifier. 
There is a deterministic distributed algorithm that computes a network decomposition of $G$ with $O(\log n)$ colors and weak-diameter $O(\log^2 n)$, in $O(\log^5 n)$ rounds of the \congest   model with $\Theta(\log n)$ sized messages. 

Moreover, for each cluster $\fC$ of vertices in the output network decomposition, we have a Steiner tree $T_\fC$ with radius $O(\log^2 n)$ in $G$, for which the set of terminal nodes is equal to $\fC$. 
Each vertex of $G$ is in $O(\log n)$ Steiner trees of any given color out of the $O(\log n)$ color classes.  
\end{theorem}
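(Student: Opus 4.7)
The plan is to build on the Rozhoň--Ghaffari framework and produce the decomposition one color class at a time, for a total of $O(\log n)$ color classes, while replacing the implicit ball-around-representative representation by an explicit Steiner tree that can be maintained efficiently under congestion.

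First I would process colors sequentially, devoting to each color an inner loop of $O(\log n)$ phases whose purpose is to halve the number of still-alive clusters of that color, either by merging a small cluster into a neighboring larger one or by permanently freezing a cluster whose neighborhood of uncolored vertices has already been exhausted. Concretely, phase $i$ is driven by bit $i$ of an $O(\log n)$-bit cluster label: a cluster issues a request to a neighbor whose label agrees with its own on bits $1,\dots,i-1$ and disagrees at bit $i$, the destination arbitrates the incoming requests, and accepted pairs are concatenated. A cluster that neither issued nor received an accepted request in this phase, and has no uncolored neighbor at distance $2$, is permanently colored with the current color. After $O(\log n)$ phases at least half of the currently uncolored vertices are permanently colored, so $O(\log n)$ color classes suffice; the distance-$2$ refusal rule gives a proper coloring between clusters of different colors.

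Each live cluster $\fC$ would be represented by a Steiner tree $T_\fC$ rooted at a fixed representative, extended in each phase by attaching the BFS trees of the $O(\log n)$-hop captured regions to the root through the edge along which the merge was accepted. This gives the inductive bound that after $O(\log n)$ phases the tree has radius $O(\log^2 n)$ in $G$, which is simultaneously the weak-diameter bound. The phase itself is implemented on $T_\fC$ by convergecasting the list of candidate neighbor clusters to the root, letting the root pick one, broadcasting the decision, and then executing a radius-$O(\log n)$ BFS to attach the new vertices; since each tree edge carries $O(1)$ distinct $O(\log n)$-bit messages per phase and aggregations can be pipelined, one phase runs in $O(\log^2 n)$ rounds of \congest. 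Summed over $O(\log n)$ phases and $O(\log n)$ colors this is $O(\log^4 n)$; the additional $\log n$ factor bringing the round complexity to $O(\log^5 n)$ comes from having to run $O(\log n)$ parallel aggregations inside each phase to evaluate the merging predicate on $O(\log n)$-bit labels in \congest. The Steiner-tree overlap bound is immediate: within one color class each vertex lies in at most one live cluster at any time and therefore in at most one Steiner tree per phase, so in $O(\log n)$ trees of the given color overall.

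The main obstacle I anticipate is the \congest implementation of the merging predicate. A cluster can have a boundary consisting of up to $n$ vertices and a large number of distinct neighboring clusters, so a naive convergecast of ``the full set of neighbor labels'' would blow up congestion. The resolution is to argue that the only information the root needs in phase $i$ is the single neighboring cluster, if any, whose label agrees with its own on bits $1,\dots,i-1$ and disagrees at bit $i$, together with a pointer to a witnessing boundary vertex; this is an order-statistic computation on $T_\fC$ and can be implemented by pipelined aggregation in $O(\log^2 n)$ rounds per phase using only $O(\log n)$-bit messages, thereby avoiding any dependence on cluster size and yielding exactly the claimed $O(\log^5 n)$ complexity and $O(\log^2 n)$ weak diameter.
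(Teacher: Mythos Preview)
Your proposal has a fundamental gap: the mechanism you describe does not actually separate clusters of the current color from one another, and the ``halving'' claim is unsupported.

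In the Rozho\v{n}--Ghaffari framework (and in the paper's improvement of it), a color class is produced by a \emph{ball carving} routine in which individual boundary vertices repeatedly propose to neighboring clusters; a cluster either absorbs the proposers or \emph{deletes} them. Deletion is what eventually isolates clusters from one another, and the growth/deletion trade-off is what bounds the fraction of killed vertices. Your scheme has no deletion at all: you merge whole clusters pairwise based on a single bit, then ``freeze'' a cluster when it has no uncolored neighbor at distance~$2$. But a single pairwise merge in phase~$i$ does not restore the invariant that neighboring clusters agree on bits $1,\dots,i$: a bit-$0$ cluster may have several bit-$1$ neighbors and can merge with only one, leaving the others adjacent; symmetrically, a bit-$0$ cluster with no accepting bit-$1$ neighbor simply stays put, still adjacent to bit-$1$ clusters. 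Consequently nothing forces clusters to become isolated, your freezing rule can deadlock (two adjacent eligible clusters each see the other as an uncolored neighbor, so neither freezes), and the assertion that ``after $O(\log n)$ phases at least half of the currently uncolored vertices are permanently colored'' has no proof.

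The paper achieves the $O(\log^5 n)$ bound by a genuinely new idea that your proposal does not touch. The ball-carving routine (\Cref{thm:log4}) runs for $O(\log n)$ phases of $O(\log n)$ steps each; crucially, clusters carry an individual \emph{level} and a \emph{token count}, and a cluster either advances a level (after deleting a thin boundary) or doubles its tokens in a phase. A potential function $\Phi_i(\fC)=3i-2\,\lev_i(\fC)+\id_{\lev_i(\fC)+1}(\fC)$ shows each vertex changes cluster only $O(\log n)$ times, which bounds both the total tokens created and the Steiner-tree overlap; a separate ``transcript tree'' invariant shows that once every cluster reaches level $b$ the clusters are pairwise non-adjacent. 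Each step is then implemented in $O(\log^2 n)$ \congest rounds via a pipelined aggregate over overlapping Steiner trees (\Cref{cor:pipelining_with_overlap}), giving $O(\log^4 n)$ per color and $O(\log^5 n)$ overall. Your accounting, by contrast, arrives at $O(\log^4 n)$ and then tacks on an ad hoc extra $\log n$ factor ``to evaluate the merging predicate on $O(\log n)$-bit labels''; that is not where the fifth $\log n$ comes from, and your Steiner-tree overlap argument (``one live cluster at a time $\Rightarrow$ one tree per phase'') ignores that a vertex remains a non-terminal node in every tree it ever belonged to.
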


Our improvement of the decomposition result of \cite{RozhonG19} comes from the improvement of their \emph{ball carving} algorithm. 
That is, we get a faster $O(\log^4 n)$-round algorithm that clusters at least half of the yet unclustered vertices into non-adjacent clusters, each cluster having a weak diameter of $O(\log^2 n)$. We remark that there is a randomized ball carving algorithm that, in $O(\log n)$ rounds of the \congest model, clusters at least half of the vertices into non-adjacent clusters with $O(\log n)$ weak-diameter in $O(\log n)$ rounds~\cite{linial93}, and one can also achieve the same with strong-diameter~\cite{elkin16_decomp}. These directly lead to $(\log n, \log n)$ weak and strong diameter decompositions in these two papers~\cite{linial93, elkin16_decomp}, in $O(\log^2 n)$ rounds of the \congest model.

\begin{theorem}
\label[theorem]{thm:log4}{(Ball carving algorithm)}
Consider an arbitrary $n$-node graph $G$ where each node has a unique $b = O(\log n)$-bit identifier, together with a subset $S \subseteq V$ of living vertices. 
There is a deterministic distributed algorithm that in $O(\log^4 n)$ rounds of the \congest  model finds a subset $S' \subseteq S$ of living vertices, where $|S'| \ge |S| / 2$, such that the subgraph $G[S']$ induced by $S'$ is partitioned into non-adjacent disjoint clusters, each of weak-diameter $O(\log^2 n)$ in $G$.  

Moreover, for each cluster $\fC$ of vertices, we have a Steiner tree $T_\fC$ with radius $O(\log^2 n)$ in $G$ for which the set of terminal nodes is equal to $\fC$. 
Each vertex in $G$ is in $O(\log n)$ Steiner trees.  
\end{theorem}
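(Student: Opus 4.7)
The plan is to build on the Rozho\v{n}--Ghaffari ball-carving framework, saving factors of $\log n$ both in the per-phase ball growth and in the cost of intra-cluster communication. I would run the algorithm in $b = O(\log n)$ outer \emph{bit-phases}, one per identifier bit, with each bit-phase consisting of $O(\log n)$ inner sub-phases of one-hop BFS growth. Initially every living vertex of $S$ is its own singleton cluster with a trivial Steiner tree. In bit-phase $i$, I partition the alive clusters into \emph{blue} (leader's $i$-th identifier bit is $0$) and \emph{red} (bit is $1$); blue clusters attempt to merge into an adjacent red cluster by extending their Steiner tree along a short path, while red clusters arbitrate among contending blue suitors via their leader.

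Inside a bit-phase, I run $O(\log n)$ sub-phases of Luby-style ball growing. In each sub-phase, each alive blue cluster extends its frontier by one hop; a blue cluster that locates a red cluster within the new frontier is absorbed into it and its Steiner tree attached to the red host, while a carefully chosen sub-fraction of still-stuck blue clusters is killed (their vertices removed from $S'$). The sub-phase count is set so that at the end of the bit-phase no alive blue cluster remains adjacent to any alive red cluster, and the overall fraction killed in the phase is at most $1/(2b)$; summing over the $b$ phases, at least $|S|/2$ vertices survive into $S'$. Any two distinct surviving clusters must disagree at some identifier bit, and at that bit-phase one was blue while the other was red, so the non-adjacency established at the end of that bit-phase persists (subsequent merges only extend Steiner trees inside already-chosen clusters, never bringing two distinct survivors closer). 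Since each bit-phase extends a Steiner tree by at most $O(\log n)$ hops, the final weak-diameter is $O(\log n) \cdot O(\log n) = O(\log^2 n)$.

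For the round complexity, each inner sub-phase is implemented by a constant number of broadcasts and convergecasts along the current Steiner trees, which have radius at most $O(\log^2 n)$ in $G$, so one sub-phase costs $O(\log^2 n)$ rounds of the \congest model. Multiplying $O(\log n)$ bit-phases by $O(\log n)$ sub-phases by $O(\log^2 n)$ rounds per sub-phase yields the claimed $O(\log^4 n)$ bound. The main technical obstacle is the joint control of two quantities: the fraction of vertices killed per sub-phase and the congestion on Steiner trees, since simultaneous broadcasts along many overlapping Steiner paths could inflate the per-sub-phase cost. I would handle this by maintaining the invariant that each vertex is a terminal of at most one alive cluster and an internal Steiner point of at most one pending merge path per bit-phase, which gives the theorem's bound of $O(\log n)$ Steiner trees per vertex and certifies that the pipelined broadcasts fit within the $O(\log n)$-bit bandwidth of each \congest edge.
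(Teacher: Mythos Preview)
Your proposal has a genuine gap at the heart of the argument: you assert that each bit-phase can be completed in $O(\log n)$ sub-phases while killing at most a $1/(2b)$ fraction of vertices, but you give no mechanism for achieving both simultaneously. In the Rozho\v{n}--Ghaffari scheme, a growing cluster expands by a $(1+\Theta(1/b))$ factor per step, so $\Theta(b\log n)$ steps are needed before every growing cluster must have separated (else it would exceed $n$ vertices). Cutting this to $O(\log n)$ steps per phase is exactly the nontrivial content of the theorem, and ``a carefully chosen sub-fraction of still-stuck blue clusters is killed'' is not an argument; you have to explain why, after only $O(\log n)$ growth steps, every surviving cluster can afford to delete its opposite-color boundary without the cumulative deletions exceeding $|S|/2$.

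The paper's solution abandons the global bit-phase invariant entirely. Each cluster carries an individual \emph{level} $\ell(\fC)\in\{0,\dots,b\}$ and a \emph{token count} $t(\fC)$; tokens increase when vertices join but do \emph{not} decrease when vertices leave, only when the cluster kills its boundary. The growth threshold is measured against $t(\fC)$ rather than $|\fC|$, and a cluster that kills its boundary stalls and increments its level. The analysis shows $t_i(\fC)\ge 2^{\,i-2\ell_i(\fC)-1}$, so after $O(b+\log n)$ phases every cluster reaches level $b$; a potential $\Phi_i(\fC)=3i-2\ell_i(\fC)+\id_{\ell_i(\fC)+1}(\fC)$ bounds the number of times any vertex changes cluster by $O(\log n)$, which both caps total tokens (hence total deletions) and gives the $O(\log n)$ Steiner-tree overlap. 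Non-adjacency is proved via a separate ``transcript tree'' invariant, not by the persistence argument you sketch; your claim that separation ``persists'' is false in general, since later growth can reconnect clusters that were previously non-adjacent, and the paper's level mechanism (lower-level clusters eat higher-level neighbors) is precisely what prevents this. Finally, the $O(\log^2 n)$ per-step \congest cost requires a pipelining lemma for overlapping Steiner trees (aggregation in $O(R+P)$ rounds with $P=O(\log n)$ overlap and $R=O(\log^2 n)$ depth), not merely the observation that overlap is bounded.
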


\cref{thm:log5} is obtained by $\log n$ applications of \cref{thm:log4}, starting from $S = V$. 
For each iteration $j \in [1, \log n]$, the set $S'$ are exactly nodes of color $j$ in the network decomposition, and we continue to the next iteration by setting $S \leftarrow S \setminus S'$.
The rest of this section describes the \textit{distributed ball carving}
 algorithm that proves \Cref{thm:log4}. 
\subsection{Intuition}
\label{sec:algorithm_intuition}

Our algorithm builds on the algorithm of Rozho\v{n} and Ghaffari \cite{RozhonG19}. Thus, before proving \cref{thm:log4}, we start by reviewing their algorithm. Afterwards, we discuss where their  algorithm has room for improvement and how our algorithm makes use of that. 

\paragraph{A Recap of the Ball Carving Algorithm of Rozho\v{n} and Ghaffari}
The ball carving algorithm of Rozho\v{n} and Ghaffari \cite{RozhonG19} that produces the clusters of one color class runs in $O(\log^6 n)$ rounds of the \local model and the weak-diameter of each cluster is bounded by $O(\log^3 n)$. \Cref{thm:log4} improves these two bounds to $O(\log^4 n)$ and $O(\log^2 n)$, respectively. 
In the original algorithm, at each point in time, a node in $S$ is either living or dead. Once a node is dead, it remains dead. Each living vertex is part of some cluster at every point in time, where each cluster is simply some set of vertices that changes over time.  At the beginning of the algorithm, each node forms a singleton cluster and the ID of that cluster is simply the $b$-bit identifier of the node. Throughout the algorithm, new nodes might join a given cluster, whereas other nodes might leave the cluster in order to join different clusters or because they got killed. The ID of the cluster does not change throughout the algorithm. A cluster might also cease to exist if all of its nodes either got killed or decided to join a different cluster. After the algorithm terminates, at least half of the vertices in $S$ are still alive. Moreover, each cluster is the union of one or more connected components in the graph induced by all the alive vertices. That is, there are no two neighboring nodes that are contained in different clusters.  

The algorithm consists of $b$ phases. The following is a a crucial invariant of the algorithm: at the end of the $i$-th phase, two neighboring clusters have the lowest $i$ bits of their ID in common. 
To preserve this invariant at the end of the $i$-th phase, given that it holds at the end of the $i-1$-th phase, clusters are split into blue and red clusters during the $i$-th phase based on their $i$-th bit. That is, if the $i$-th bit of the identifier is equal to $1$, we refer to a cluster as a blue cluster and otherwise, that is if the $i$-th bit is equal to $0$, we refer to a cluster as a red cluster. 
During the $i$-th phase, blue clusters can only grow, whereas red clusters can only shrink. At the end of the $i$-th phase, no blue cluster is neighboring with a red cluster. This suffices to preserve the invariant. Each phase consists of multiple steps. In each step, each node contained in a red cluster simply remains in the red cluster if it is not neighboring with any node in a blue cluster. Otherwise, the node in the red cluster proposes to join an arbitrary neighboring blue cluster. Thus, each blue cluster receives a certain number of proposals from neighboring nodes in red clusters. If the total number of proposals is at least a $1/(2b)$-fraction of the size of the blue cluster, all the proposing nodes join the blue cluster. Otherwise, the blue cluster decides to kill all proposing nodes and thus the blue cluster is not neighboring with any red cluster. The total number of killed vertices in each of the $b$ phases is at most a $1/(2b)$-fraction of the total number of nodes in $S$. Hence, throughout all of the $b$ phases, at most half of the vertices get killed. Moreover, each time a blue cluster does not kill all the proposing red nodes, its size increases by a $(1 + 1/(2b))$-factor. Thus, after $j$ such steps, the size of the blue cluster is at least $(1 + 1/(2b))^j$. As the size of each cluster is trivially bounded by $n$, each blue cluster can grow for at most $O(b \log n)$ steps and hence all blue clusters get separated from neighboring red clusters in at most $O(b \log n)$ steps. Hence, each of the $b = O(\log n)$ phases consists of $O(b \log n ) = O(\log^2 n)$ steps. As the weak diameter of each cluster grows by at most $2$ in each step, this directly implies that the weak diameter of each cluster is bounded by $O(b \cdot b\log n) = O(\log^3 n)$. Every single step can be implemented in $O(\log^3 n)$ rounds of the \local model, resulting in an overall round complexity of $O(\log^6 n)$ in the \local model.

\paragraph{Improved Version}
Next, we discuss on an intuitive level our improved algorithm compared to the original algorithm of Rozho\v{n} and Ghaffari\cite{RozhonG19}. 
Let us start with their algorithm and simply reduce the number of steps in each phase of the algorithm from $O(b \log n )$ down to $O(b)$. What would be the issue? 
The problem is that then, at the end of the phase, there might still be blue clusters neighboring red clusters.
However, each such blue cluster would have grown by a $(1 + 1/(2b))$-factor for all of the $O(b)$ steps in the phase, resulting in a constant factor increase of the cluster size. In some sense, this can also be seen as progress, as a constant factor growth can happen at most $O(\log n)$ times, at least if we assume that a cluster never shrinks (which it can). 
Alas, even assuming shrinking does not happen, the crucial invariant that after the $i$-th phase, the IDs of two neighboring clusters agree on the $i$ least significant bits does not hold anymore. 

We need, hence, a refined invariant. First, at each point in time, a given cluster $\fC$ is in some \textit{level} $\lev(\fC)$ that ranges from $0$ to $b$. 
The level is measuring the progress of a cluster in disconnecting itself from the other clusters; importantly, it is an individual measure for each cluster, whereas in the previous algorithm, this progress was measured for all clusters globally, by enforcing that at the end of the $i$-th phase, all clusters agree on the $i$ least significant bits in their identifier. 
Our new invariant, whose full statement is deferred to \cref{sec:tree}, implies that the identifiers of two neighboring clusters $\fC$ and $\fC'$ agree in the $\min(\lev(\fC),\lev(\fC'))$ least significant bits. For the purpose of this explanatory section, we call this property the \emph{level invariant}. 

Note that if at the end of the algorithm, each cluster is in level $b$, there are no two neighboring clusters, as desired. 
Furthermore, we would recover the old invariant if we assume that the level of each cluster increases by exactly one in each phase. 
But not every cluster's level will increase in each phase. Instead, in a given phase, the level of a cluster either increases by one or some other progress property happens: the cluster significantly ``grows'' in terms of the number of vertices that joined the cluster.

\paragraph{Growing Rule and Preserving the New Invariant}
We now describe our new algorithm in more detail: it has $O(b+ \log n)$ phases, each consisting of $O(b+\log n)$ steps. In each step, some vertices are proposing to join new clusters, according to the following rule. 
Recall that in the previous algorithm \cite{RozhonG19}, in phase $i$, vertices of clusters with the $i$-th bit equal to $0$ were proposing to join neighboring clusters with the $i$-th bit equal to $1$. 
Similarly, in our algorithm, vertices contained in some cluster $\fC$ that are neighboring with a cluster $\fC'$ having the same level as $\fC$ are proposing to join $\fC'$ if the $(\lev(C)+1)$-th bit of the identifier of $\fC$ is $0$, while the respective bit in the identifier of $\fC'$ is $1$. 
However, there is one more rule: if a vertex of $\fC$ neighbors with a cluster having a strictly smaller level than $\fC$, it prefers to propose to one such neighboring cluster $\fC'$ having the smallest level among all such neighboring clusters. 
As in the previous algorithm, if a sufficient amount of nodes propose to $\fC$, it decides to accept all proposals, while if there are not enough proposals, it kills proposing vertices, ``stalls" until the end of the phase and at the end of the phase increases its level.

The rule that a smaller level cluster $\fC$ is ``eating'' its higher level neighbor $\fC'$ is to enforce our level invariant: we know that the two clusters $\fC$ and $\fC'$, with $\fC$ having a strictly smaller level, agree on their $\lev(\fC)$ least significant bits. This invariant can fail once $\fC$ decides to increment its level. Hence, to justify going to the next level, $\fC$ also deletes the boundary with all higher-level neighboring clusters.
The level invariant follows from this new rule. The formal proof (of a more general invariant) is postponed to \cref{sec:tree}. 

\paragraph{Bounding the Number of Growing Steps} A crucial step in the analysis of the previous algorithm \cite{RozhonG19} is to argue that in each phase, each cluster can grow for at most  $O(b \log n)$ steps by a multiplicative factor of $1 + \Theta(1/b)$; otherwise, the cluster necessarily contains all the vertices of the graph. 
In our case, the picture is more complicated, as each cluster is eating the boundary vertices of its higher-level neighbors, while it is simultaneously eaten by its lower-level neighboring clusters. 
The rule that a cluster $\fC$ grows if the number of newly joined vertices is large with respect to the current number of vertices in $\fC$ does not work anymore. 

To remedy this problem, in our algorithm, each cluster $\fC$ possesses a certain number of tokens at every point in time. 
Initially, each cluster has a single token. 
During the course of the algorithm, $\fC$ obtains one token for every node that joins it. 
However, \emph{$\fC$ does not lose a token when a node leaves the cluster}. 
Instead, $\fC$ only loses tokens if it decides to kill all nodes proposing to it. 
In that case, $\fC$ pays a certain number of tokens (to be described later) for every node it kills. 

Each cluster decides to accept all proposals if the number of proposing nodes is a $\Omega(1/(b+\log n))$ fraction of its current number of tokens. 
Otherwise, the cluster kills all the proposing nodes. 
The parameters are set in such a way that the following holds: whenever a cluster is growing during the whole phase, the number of tokens it possesses at least doubles. 
On the other hand, if a cluster advances to the next level during a phase, then the number of its tokens remains at least half of what it was before (cf. Invariant 1 in \cref{sec:algorithm}). Notice that unlike this number of tokens, the size of the cluster can drop arbitrarily. 
Either way, each cluster progresses during each phase in terms of the number of tokens it possesses or by advancing to the next level. 

The final ingredient is that each cluster can create at most $O(b+\log n)$ tokens by joining new clusters. This will be proven later on. It implies that all clusters finish, i.e. are in the highest level, after $O(b+\log n)$ phases (cf. \cref{prop:clusters_finished}). If that would not be the case, then the total number of tokens an unfinished cluster would possess would exceed the total number of tokens that could possibly be created throughout the algorithm, a contradiction. Moreover, one can also show that at most half of the vertices get killed during the algorithm (cf. \cref{lem:small_num_of_deleted_nodes}). 

\subsection{Our Distributed Ball Carving Algorithm}

\label{sec:algorithm}

In this section we explain our algorithm for \cref{thm:log4}. 
Its analysis follows in \cref{sec:analysis,sec:tree}. 

\paragraph{Construction outline} The construction has $2(b+\log n) = O(\log n)$ phases. Each phase has $28(b + \log n) = O(\log n)$ steps. 
Initially, all nodes of $G$ are \emph{living}, during the construction some living nodes \emph{die}. 
Each living node is part of exactly one cluster. 
Initially, there is one cluster $\fC_v$ for each vertex $v \in V(G)$ and we define the identifier $\id(\fC)$ of $\fC$ as the unique identifier of $v$ and use $\id_{i}(\fC)$ to denote the $i$-th least significant bit of $\id(\fC)$. 
From now on, we talk only about identifiers of clusters and do not think of vertices as having identifiers, though they will still use them for simple symmetry breaking tasks. 
Also, at the beginning, the Steiner tree $T_{\fC_v}$ of a cluster $\fC_v$ contains just one node, namely $v$ itself, as a terminal node. 
Clusters will grow or shrink during the iterations, while their Steiner trees collecting their vertices can only grow. 
When a cluster does not contain any nodes, it does not  participate in the algorithm any more. 

\paragraph{Parameters of each cluster}
Each cluster $\fC$ keeps two other parameters besides its identifier $\id(\fC)$ to make its decisions: its number of tokens $t(\fC)$ and its level $\lev(\fC)$. 
The number of tokens can change in each step -- more precisely it is incremented by one whenever a new vertex joins $\fC$, while it does not decrease when a vertex leaves $\fC$. 
The number of tokens only decreases when $\fC$ actively deletes nodes. 
We define $t_i(\fC)$ as the number of tokens of $\fC$ at the beginning of the $i$-th phase and set $t_1(\fC) = 1$. 

Each cluster starts in level $0$. 
The level of each cluster does not change within a phase $i$ and can only increment by one between two phases; it is bounded by $b$. We denote with $\lev_i(\fC)$ the level of $\fC$ during phase $i$. 
Moreover, for the purpose of the analysis, we keep track of the potential $\Phi(\fC)$ of a cluster $\fC$ defined as $\Phi_i(\fC) = 3i - 2\lev_i(\fC) + \id_{\lev_i(\fC)+1}(\fC)$. 
The potential of each cluster stays the same within a phase. 


\paragraph{Description of a step}
In each step, first, each node $v$ of each cluster $\fC$ checks whether it is adjacent to a cluster $\fC'$ such that $\lev(\fC') < \lev(\fC)$. If so, then $v$ proposes to an arbitrary neighboring cluster $\fC'$ among the neighbors with the smallest level $\lev(\fC')$ and if there is a choice, it prefers to join clusters with $\id_{\lev(\fC') + 1}(\fC') = 1$. 
Otherwise, if there is a neighboring cluster $\fC'$ with $\lev(\fC') = \lev(\fC)$ and  $\id_{\lev(\fC') + 1}(\fC') = 1$, while  $\id_{\lev(\fC) + 1}(\fC) = 0$, then $v$ proposes to arbitrary such cluster. 

Second, each cluster $\fC$ collects the number of proposals it received. 
Once the cluster has collected the number of proposals, it does the following. 
If there are $p$ proposing nodes, then they join $\fC$ if and only if 
$p \ge t(\fC) / (28 (b + \log n))$. The denominator is equal to the number of steps. 
If $\fC$ accepts these proposals, then $\fC$ receives $p$ new tokens, one from each newly joined node. 
On the other hand, if $\fC$ does not accept the proposals as their number is not sufficiently large, then $\fC$ decides to kill all those proposing nodes. These nodes are then removed from $G$. Cluster $\fC$ pays $p \cdot 14 (b + \log n)$ tokens for this, i.e., it pays $14 (b + \log n)$ tokens for every vertex that it deletes. 
These tokens are forever gone. 
Then the cluster does not participate in growing anymore, until the end of the phase and throughout that time we call that cluster \emph{stalling}. 
The cluster tells that it is stalling to neighboring nodes so that they do not propose to it. 
At the end of the phase, each stalling cluster increments its level by one. 

If the cluster is in level $b-1$ and goes to the last level $b$, it will not grow anymore during the whole algorithm, and we say that it has \emph{finished}. Other neighboring clusters can still eat its vertices (by this we mean that vertices of the finished clusters may still propose to join other clusters). 
%

Whenever a node $u$ joins a cluster $\fC$ via a vertex $v \in \fC$, we add $u$ to the Steiner tree $T_\fC$ as a new terminal node and connect it via an edge $uv$. Whenever a node $u \in \fC$ is deleted or eaten by a different cluster, it stays in the Steiner tree $T_\fC$, but it is changed to a non-terminal node.

\paragraph{Construction invariants}
The construction is such that it preserves the following two invariants, as we formally prove in the next subsection.
\begin{enumerate}
    \item Invariant 1: At the beginning of each phase $i$, we have $t_i(\fC) \ge 2^{i - 2\lev_i(\fC) - 1}$ unless $\fC$ is finished. 
    \item Invariant 2:  Whenever a node $u$ changes its cluster during some step in phase $i$, say it goes from $\fC$ to $\fC'$, it is the case that $\Phi_i(\fC') > \Phi_i(\fC)$. 
    Whenever we go to the next phase, the potential $\Phi(\fC)$ of each cluster does not decrease, i.e., $\Phi_{i+1}(\fC) \ge \Phi_i(\fC)$. 
\end{enumerate}

\subsection{Proving the Two Invariants}
\label{sec:analysis}
In this subsection, we prove Invariants 1 and 2 and that they imply that our algorithm outputs clusters of $O(\log^2 n)$ weak-diameter, while deleting at most $1/2$ fraction of vertices. 
The important fact that the resulting clusters do not neighbor as well as the fact that Steiner trees are indeed trees are postponed to \cref{sec:tree}, since their proofs require additional definitions. 

\begin{proposition}
\label[proposition]{lem:num_of_tokens_increases}
Invariant 1 is satisfied. That is, at the beginning of phase $i$, the current number of tokens $t_i(\fC)$ satisfies $t_i(\fC) \ge 2^{i - 2\lev_{i}(\fC) - 1}$, unless cluster $\fC$ is finished.
\end{proposition}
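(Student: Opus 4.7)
The plan is to prove Invariant~1 by induction on the phase index $i$. The base case $i=1$ is immediate from the initialization $t_1(\fC)=1$ and $\lev_1(\fC)=0$, which matches the target value $2^{1-0-1}=1$. For the inductive step I would fix a cluster $\fC$ that has not finished at the start of phase $i+1$ and split on whether $\fC$ started stalling during some step of phase $i$; this is exactly the dichotomy $\lev_{i+1}(\fC) = \lev_i(\fC)+1$ versus $\lev_{i+1}(\fC) = \lev_i(\fC)$.

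In the stalling case, I would let $s^\star$ denote the first step of phase $i$ at which $\fC$ stalls and track the token count through $s^\star$. For every step $s < s^\star$, the proposal count $p$ must satisfy $p \ge t(\fC)/(28(b+\log n))$, otherwise $\fC$ would have stalled earlier; so all such steps are growth steps and in particular $t_{s^\star}(\fC) \ge t_i(\fC)$. At step $s^\star$ itself, $p < t_{s^\star}(\fC)/(28(b+\log n))$, and $\fC$ pays $p\cdot 14(b+\log n) < t_{s^\star}(\fC)/2$ tokens to kill the proposing nodes. After $s^\star$ no further token changes happen, since stalling precludes growth or further kills and losing vertices to lower-level neighbors is token-free by construction. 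Chaining these facts gives $t_{i+1}(\fC) \ge t_{s^\star}(\fC)/2 \ge t_i(\fC)/2 \ge 2^{i-2\lev_i(\fC)-2} = 2^{(i+1)-2\lev_{i+1}(\fC)-1}$, as required.

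In the non-stalling case, the cluster must have received $p \ge t(\fC)/(28(b+\log n))$ proposals at each of the $k := 28(b+\log n)$ steps of phase $i$; in particular, any step with zero proposals trivially fails the acceptance threshold and would have triggered stalling, contradicting the case assumption. Each such step multiplies $t(\fC)$ by at least $1+1/k$, so the standard inequality $(1+1/k)^k \ge 2$ (valid for all $k \ge 1$) gives $t_{i+1}(\fC) \ge 2\,t_i(\fC) \ge 2^{(i+1)-2\lev_i(\fC)-1} = 2^{(i+1)-2\lev_{i+1}(\fC)-1}$.

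The main obstacle I anticipate is the non-stalling case and, more precisely, pinning down the convention that a step with zero proposals counts as a failed acceptance (and hence triggers stalling): without this reading, a cluster could slip through an entire phase without either gaining tokens or advancing in level, and the invariant would break. Once this subtlety is isolated, both halves of the induction reduce to routine token accounting balancing the kill cost $14(b+\log n)$ per vertex against the threshold $1/(28(b+\log n))$.
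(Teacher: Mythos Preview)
Your proof is correct and follows essentially the same induction as the paper: base case from initialization, then split on whether the cluster stalled (level increments, tokens at least halve) or grew throughout (level stays, tokens at least double via $(1+1/k)^k \ge 2$). Your treatment of the stalling case is in fact slightly more careful than the paper's, explicitly tracking the token count up to the stalling step $s^\star$ before applying the half-loss bound, and your remark about the zero-proposal convention correctly identifies the reading the algorithm needs.
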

\begin{proof}
At the beginning of phase $1$, we have $\lev_1(\fC) = 0$ and $t_1(\fC) = 1$, hence Invariant 1 is satisfied. 
Now fix a phase $i$ and a cluster $\fC$ that is not finished at the end of the $i$-th phase. 
If the cluster decided to go to the next level during this phase, we have at the beginning of the phase $i+1$ that 
$
\lev_{i+1}(\fC) = \lev_{i}(\fC) + 1
$ 
 and, moreover, for the number of tokens $t_{i}(\fC)$, we have 
\[
t_{i+1}(\fC) 
\ge t_{i}(\fC) - \left( |t_{i}(\fC)|/ (28 (b + \log n)) \right) \cdot (14(b+\log n)) 
= t_{i}(\fC)/2, 
\]
because a given cluster can delete its boundary at most once in a given phase. 
Hence, by induction,
\[
t_{i+1}(\fC) \ge t_{i}(\fC)/2 
\geq  2^{i - 2\lev_{i}(\fC) - 1} /2
= 2^{i - 2\lev_{i+1}(\fC) + 2 - 1}/2
= 2^{(i+1) - 2\lev_{i+1}(\fC) -1}.
\]

Otherwise, we know that $\lev_{i+1}(\fC) = \lev_{i}(\fC)$ and $\fC$ was growing for all of the $28(b+\log n)$ steps of phase $i$. 
Hence, the number of tokens $t_{i}(\fC)$ at the beginning of phase $i+1$ satisfies
\[
t_{i+1}(\fC) \ge \left( 1 + 1/(28(b+\log n)) \right)^{28(b+\log n)} t_{i}(\fC) 
\ge 2t_{i}(\fC).
\]
This implies by the induction hypothesis that
\[
t_{i+1}(\fC) 
\ge 2 \cdot t_{i}(\fC) 
= 2 \cdot 2^{i - 2\lev_{i}(\fC) - 1 }
= 2^{(i+1) - 2\lev_{i}(\fC) - 1}.\qedhere
\]

\end{proof}

\begin{proposition}
\label[proposition]{lem:potential_increases}
Invariant 2 is satisfied. That is, whenever node $u$ changes its cluster during some step, say goes from $\fC$ to $\fC'$, it is the case that $\Phi_i(\fC') > \Phi_i(\fC)$. Moreover, whenever we go to the next phase, we have $\Phi_{i+1}(\fC) \ge \Phi_i(\fC)$. \end{proposition}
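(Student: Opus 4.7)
The plan is to verify the invariant directly from the definition $\Phi_i(\fC) = 3i - 2\lev_i(\fC) + \id_{\lev_i(\fC)+1}(\fC)$, splitting into the two stated parts: (a) a node $u$ migrating from $\fC$ to $\fC'$ within a single step of phase $i$, and (b) the transition from phase $i$ to phase $i+1$. The argument is essentially a case analysis matching the proposal rule and the level-increment rule; I do not expect any step to require a deep idea, but one needs to be careful that a bit can flip from $1$ to $0$ when the level advances, and that the level of a neighbor can be strictly smaller (so the $-2\lev(\fC)$ term can drop by $2$ when migrating).

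For part (a), I would recall the proposal rule and split into two subcases. In the first, $\lev_i(\fC') < \lev_i(\fC)$; then $-2\lev_i(\fC') \ge -2\lev_i(\fC) + 2$, and even in the worst choice of identifier bits ($\id$-bit of $\fC'$ equal to $0$ and of $\fC$ equal to $1$) we gain at least $+1$ on $\Phi_i$. In the second subcase, $\lev_i(\fC') = \lev_i(\fC)$ and the proposing rule forces $\id_{\lev(\fC')+1}(\fC')=1$ while $\id_{\lev(\fC)+1}(\fC)=0$, giving a clean gain of exactly $+1$ on $\Phi_i$. In both subcases, $\Phi_i(\fC') > \Phi_i(\fC)$ as required.

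For part (b), I would separately treat the two possibilities for what happens to $\lev(\fC)$ between phase $i$ and phase $i+1$. If $\fC$ did not stall, then $\lev_{i+1}(\fC)=\lev_i(\fC)$ and the identifier bit considered is the same, so $\Phi$ increases by exactly $+3$. If $\fC$ did stall, then $\lev_{i+1}(\fC) = \lev_i(\fC)+1$; the $3i \to 3(i+1)$ term contributes $+3$, the level term contributes $-2$, and the bit-term contributes at least $-1$, for a net change of at least $0$. This is exactly the balance which motivates the constant $3$ in front of $i$ in the definition of $\Phi$.

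The main (and only) subtle point will be this last edge case: when a cluster increments its level and simultaneously the relevant bit of its identifier flips from $1$ to $0$, the potential does not strictly increase between phases, it merely stays the same. So I want to emphasize that the statement is $\Phi_{i+1}(\fC) \ge \Phi_i(\fC)$ (not strict), and that this is tight. Apart from this, the proof is a short bookkeeping check, and I would finish by observing that both parts together say that $\Phi$ is monotone nondecreasing along any trajectory of a token through clusters over time, which will later be used to bound the number of clusters that any single vertex can traverse.
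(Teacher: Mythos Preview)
Your proposal is correct and follows essentially the same route as the paper: a direct case analysis on the proposal rule for part (a) (splitting into $\lev_i(\fC') < \lev_i(\fC)$ versus $\lev_i(\fC') = \lev_i(\fC)$ with the bit forcing) and on whether the level increments for part (b). If anything, your treatment is slightly more explicit than the paper's, which bundles the two subcases of (b) into the single inequality $\lev_{i+1}(\fC) \le \lev_i(\fC)+1$; the content is the same.
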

\begin{proof}
If $u$ goes from cluster $\fC$ to some cluster $\fC'$, then it is either because $\lev_i(\fC') < \lev_i(\fC)$, or because $\lev_i(\fC') = \lev_i(\fC)$ and $\id_{\lev_i(\fC) + 1}(\fC) = 0$ while $\id_{\lev_i(\fC') + 1}(\fC') = 1$. 
In the first case, 
\begin{align*}
    \Phi_i(\fC') 
    = 3i - 2\lev_i(\fC') + \id_{\lev_i(\fC') + 1}(\fC) 
    \ge 3i - 2(\lev_i(\fC) - 1) + \id_{\lev_i(\fC') + 1}(\fC)
    > 3i - 2\lev_i(\fC) . 
\end{align*}
In the second case,
\begin{align*}
    \Phi_i(\fC') 
    = 3i - 2\lev_i(\fC') + \id_{\lev_i(\fC') + 1}(\fC') 
    > 3i - 2\lev_i(\fC) + \id_{\lev_i(\fC) + 1}(\fC).
\end{align*}
Whenever we go from phase $i$ to phase $i+1$, we have
\begin{align*}
\Phi_{i+1}(\fC) 
&= 3(i+1) - 2\lev_{i+1}(\fC) + \id_{\lev_i(\fC) + 2}(\fC)
\ge 3i + 3 - 2(\lev_i(\fC) + 1) \\&
\ge 3i - 2\lev_i(\fC) + \id_{\lev_i(\fC) + 1}(\fC) = \Phi_i(\fC).\qedhere
\end{align*}
\end{proof}

\begin{proposition}
\label[proposition]{lem:node_in_few_clusters}
Each node can change its cluster at most $6(b + \log n) + 1$ times. 
\end{proposition}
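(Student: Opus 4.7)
The plan is to use Invariant 2 (\cref{lem:potential_increases}) as a monovariant attached to the node $u$ itself: track, over time, the potential $\Phi_i(\fC)$ of whichever cluster currently contains $u$, and argue that every cluster change of $u$ strictly increases this integer-valued quantity by at least $1$. Then the total number of cluster changes of $u$ is immediately bounded by the range of this potential.

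Concretely, for a fixed node $u$ alive at time $t$, I would define $\Phi(u,t) := \Phi_{i(t)}(\fC(u,t))$, where $\fC(u,t)$ is the cluster containing $u$ at time $t$ and $i(t) \in [1, 2(b+\log n)]$ is the current phase. By the first clause of \cref{lem:potential_increases}, each time $u$ moves from $\fC$ to $\fC'$ within some phase $i$, we have $\Phi_i(\fC') > \Phi_i(\fC)$; since $\Phi_i$ is integer-valued, this is a jump of at least $1$. By the second clause, crossing from phase $i$ to phase $i+1$ cannot decrease $\Phi(\fC)$ for the cluster in which $u$ remains (handling stalling and non-stalling clusters uniformly). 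Combining these two facts, $\Phi(u,\cdot)$ is weakly non-decreasing in time and jumps by at least $1$ at every single cluster change of $u$.

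The second step is to bound the range of $\Phi(u,\cdot)$ uniformly. Since every cluster starts at level $0$ and the level increments by at most one per phase, we always have $\lev_i(\fC) \le i-1 \le 2(b+\log n)-1$. Plugging $i \le 2(b+\log n)$, $\lev_i(\fC) \ge 0$, and $\id_{\lev_i(\fC)+1}(\fC) \le 1$ into the definition of $\Phi$ gives the uniform upper bound $\Phi_i(\fC) \le 6(b+\log n)+1$; conversely, $\lev_i(\fC) \le i-1$ and $\id_{\lev_i(\fC)+1}(\fC) \ge 0$ give the uniform lower bound $\Phi_i(\fC) \ge 3i - 2(i-1) = i+2 \ge 3$. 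Hence the number of strict unit jumps of $\Phi(u,\cdot)$ over the entire execution is at most $6(b+\log n)+1 - 3 \le 6(b+\log n)+1$, which is exactly the claimed bound.

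I do not foresee a serious obstacle here: \cref{lem:potential_increases} does essentially all of the heavy lifting, and the remaining argument is routine bookkeeping of the extremal values of $i$, $\lev_i(\fC)$, and $\id_{\lev_i(\fC)+1}(\fC)$. The only subtle point worth mentioning is that no separate case analysis is required for a node that stays in a stalling cluster across a phase boundary: the inter-phase clause of Invariant 2 is stated uniformly for all clusters, so the combined monotonicity of $\Phi(u,\cdot)$ requires no extra work to verify.
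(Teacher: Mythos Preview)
Your proposal is correct and follows essentially the same approach as the paper: both arguments track the potential of the cluster currently containing $u$, use Invariant~2 to show it is integer-valued, non-decreasing across phase boundaries, and strictly increasing at each cluster change, and then bound the number of changes by the range of $\Phi$. Your lower bound $\Phi_i(\fC)\ge 3$ is in fact slightly sharper than the paper's (which just notes $\Phi_1(\fC)\ge 0$), but since the claimed bound is $6(b+\log n)+1$ either suffices.
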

\begin{proof}
At the beginning of phase $1$ of the algorithm each node $u$ in a cluster $\fC$ has $\Phi_1(\fC) \ge 0$. 
On the other hand, during any phase $i$, if $u \in \fC$, then $$\Phi_i(\fC) := 3i - 2\lev_i(\fC) + \id_{\lev_i(\fC) + 1}(\fC) \le 3i + 1.$$
Since the number of phases is equal to $2(b + \log n)$, we have $
\Phi_i(\fC) \le 6(b + \log n) + 1. 
$
Then, due to Invariant 2 (\cref{lem:potential_increases}), this means that $u$ changed its cluster at most  $6(b + \log n) + 1$ times, as whenever it changed its cluster, it went from $\fC$ to $\fC'$ such that $\fC'$ satisfies $\Phi_i(\fC') > \Phi_i(\fC)$ and when a new phase starts, we have for all clusters $\fC$ that $\Phi_{i+1}(\fC) \ge \Phi_i(\fC)$. 
\end{proof}

\begin{proposition}
\label[proposition]{lem:bound_overall_num_of_tokens}
The total number of tokens generated by nodes throughout the algorithm is at most $7|S| (b + \log n)$. 
\end{proposition}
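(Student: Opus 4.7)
The plan is a direct token-accounting argument that uses \cref{lem:node_in_few_clusters} as the only real input. Tokens are created in only two ways in the algorithm: (i) when the initial clusters are formed, where each of the $|S|$ singleton clusters starts with a single token, contributing a total of $|S|$ tokens; and (ii) whenever a node joins a cluster, the receiving cluster gains one token. So the total number of tokens ever created equals $|S|$ plus the total number of cluster-join events summed over all nodes.

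Next, I would invoke \cref{lem:node_in_few_clusters} to bound the number of join events per node. Each node can change its cluster at most $6(b+\log n)+1$ times throughout the execution, and each cluster change is exactly one ``joining'' event that creates one token. Summing over all at most $|S|$ living nodes at the start (dead nodes never rejoin), the total number of tokens generated by joining is at most $|S|\bigl(6(b+\log n)+1\bigr)$.

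Adding the two contributions gives a total token count of at most
\[
|S| + |S|\bigl(6(b+\log n)+1\bigr) = |S|\bigl(6(b+\log n)+2\bigr) \le 7|S|(b+\log n),
\]
where the last inequality uses $b+\log n \ge 2$, which is satisfied in our regime ($b,\log n \ge 1$). This yields the claimed bound.

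There is essentially no obstacle here: the only subtle point is to make sure no other rule in the algorithm silently creates tokens. Re-reading the construction, tokens are explicitly introduced only at initialization (one per initial singleton cluster) and each time a node is accepted into a cluster; tokens are only ever destroyed (never created) when a cluster stalls and pays $14(b+\log n)$ tokens per killed vertex, and tokens are neither gained nor lost when a node merely leaves a cluster. So the counting above is exhaustive, and the proposition follows.
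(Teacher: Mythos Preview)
Your proof is correct and follows essentially the same approach as the paper: count the $|S|$ initial tokens plus one token per cluster-change event, bound the latter by $6(b+\log n)+1$ per node via \cref{lem:node_in_few_clusters}, and sum to get $|S|(6(b+\log n)+2)\le 7|S|(b+\log n)$. Your write-up is in fact slightly more careful than the paper's in explicitly verifying that no other mechanism creates tokens and in justifying the final inequality.
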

\begin{proof}
Each node generates a token at the very beginning of the algorithm and then it generates one token whenever it changes its cluster. 
By \cref{lem:node_in_few_clusters}, each node can generate at most $6(b + \log n) + 1$ tokens by changing a cluster. 
Hence, the total number of tokens generated is at most $|S| (6(b + \log n) + 2) \le 7|S| (b + \log n)$. 
\end{proof}

\begin{proposition}
\label[proposition]{lem:small_num_of_deleted_nodes}
In the end, the number of deleted vertices is at most $|S| / 2$. 
\end{proposition}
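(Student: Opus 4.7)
The plan is to use a simple token accounting argument, leveraging the two preceding propositions. The key observation is that the algorithm never deletes a vertex for free: each deletion is charged exactly $14(b+\log n)$ tokens from the cluster performing the deletion, and spent tokens are permanently destroyed. Consequently, the total number of deleted vertices is at most the total number of tokens ever created in the algorithm divided by $14(b+\log n)$.

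First I would bound the number of tokens spent on deletions: if $d$ denotes the total number of deletions performed throughout the algorithm, then by the rule that $\fC$ pays $14(b+\log n)$ tokens per deleted vertex, a total of $14 d (b+\log n)$ tokens are consumed. Second, I would upper bound the total number of tokens that exist at any point in the execution. Tokens come from only two sources: the single token each cluster $\fC_v$ starts with (contributing $|S|$ tokens total), and one token per cluster change (since $\fC$ gains one token for each node that joins it). By \cref{lem:bound_overall_num_of_tokens}, the total supply of tokens created during the entire algorithm is at most $7|S|(b+\log n)$.

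Combining these two bounds yields $14 d (b+\log n) \le 7|S|(b+\log n)$, and dividing by $14(b+\log n)$ gives $d \le |S|/2$, as required. I expect no real obstacle: the only mildly delicate point is to make sure that tokens spent on deletion are counted within the tokens ``generated'' in the sense of \cref{lem:bound_overall_num_of_tokens}, but this is immediate from the algorithm description since the initial token and the token received upon joining are exactly the tokens that sit in a cluster's budget and can later be spent on deleting boundary vertices. No additional structural properties of clusters, Steiner trees, or the potential function are needed for this particular statement — it is purely an amortized counting argument.
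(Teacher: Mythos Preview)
Your proposal is correct and follows essentially the same approach as the paper's proof: both use the token accounting that each deletion costs $14(b+\log n)$ tokens, combine this with the bound of $7|S|(b+\log n)$ total tokens from \cref{lem:bound_overall_num_of_tokens}, and divide to obtain $|S|/2$. The paper's version is terser but the argument is identical.
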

\begin{proof}
Whenever a node is deleted from $S$, we permanently set aside $14 (b + \log n)$ tokens. Hence, by \cref{lem:bound_overall_num_of_tokens}, the total number of nodes deleted is at most
\[
\frac{7|S| (b + \log n)}{14 (b + \log n)} 
= |S| / 2.\qedhere
\]
\end{proof}

\begin{proposition}
\label[proposition]{lem:small_weak_diameter}
Per step, the diameter of every Steiner tree $T_\fC$ grows additively by at most $2$. Hence, in the end of the algorithm, the diameter of each graph $T_\fC$ and, therefore, the weak-diameter of each $\fC$, is bounded by $O(\log^2(n))$. 
Moreover, each vertex of $G$ is in at most $O(\log n)$ different Steiner trees $T_\fC$. 
\end{proposition}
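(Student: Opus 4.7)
The plan is to handle the three assertions of the proposition in sequence, each of which reduces to a short structural argument once the appropriate earlier result is invoked.

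First, for the per-step additive-$2$ diameter growth, I would argue directly from the update rule for $T_\fC$. Within a single step, the only modifications to $T_\fC$ are insertions of pairs $(u, uv)$ where $v$ is already in $\fC$, so every newly joined vertex attaches as a leaf via a single new edge. Thus, for any two vertices in the updated tree, their distance is at most the previous diameter plus $2$: in the worst case each of them is a freshly joined leaf whose single incident edge connects it to an anchor lying on a diameter-realizing path of the old tree. (The fact that $T_\fC$ really remains a tree, rather than acquiring cycles, is part of the more delicate analysis deferred to \cref{sec:tree}, which I would cite here for the structural guarantee.)

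Second, for the $O(\log^2 n)$ diameter bound, I would simply count steps. The algorithm runs $2(b+\log n)$ phases, each consisting of $28(b+\log n)$ steps, so the total number of steps is $O((b+\log n)^2) = O(\log^2 n)$ because $b = O(\log n)$. Combining this with the per-step bound yields a final $T_\fC$-diameter of $O(\log^2 n)$. Since every edge of $T_\fC$ is also an edge of $G$, the distance in $G$ between any two vertices of $\fC$ is bounded by their distance in $T_\fC$, so the weak-diameter of $\fC$ in $G$ is also $O(\log^2 n)$.

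Third, for the bound on the number of Steiner trees containing a given vertex, I would invoke \cref{lem:node_in_few_clusters}. By construction, a vertex $u$ is inserted into $T_\fC$ only at the moment it joins $\fC$; if $u$ later leaves $\fC$ or is killed, it simply becomes a non-terminal of $T_\fC$ without giving rise to any new tree-membership. Hence the number of Steiner trees containing $u$ equals the number of distinct clusters $u$ has ever been part of, which by \cref{lem:node_in_few_clusters} is at most $6(b+\log n)+2 = O(\log n)$.

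The only genuine subtlety is the implicit reliance in the first step on $T_\fC$ actually being a tree so that the notion of ``diameter'' and the leaf-attachment argument make sense. This tree-preservation is proved in \cref{sec:tree}; once granted, the rest of the proposition follows from the elementary counting above.
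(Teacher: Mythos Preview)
Your proposal is correct and follows essentially the same approach as the paper: both argue that per step only leaves are attached (deferring the tree property to \cref{sec:tree}), multiply the $O(\log n)$ phases by $O(\log n)$ steps to get the $O(\log^2 n)$ diameter, and invoke \cref{lem:node_in_few_clusters} for the Steiner-tree membership bound. Your treatment is in fact slightly more careful on the off-by-one (you correctly count $6(b+\log n)+2$ clusters ever visited, i.e., one more than the number of cluster changes), but the argument is otherwise the same.
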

\begin{proof}
In one step of a phase, we increase the Steiner tree $T_\fC$ only by adding new leaves to it (though the fact that each vertex is added to $T_\fC$ at most once and hence it is a tree is proved only in \cref{prop:trees_are_trees}). 
We have $O(\log n)$ phases and each phase has $O(\log n)$ steps, hence the diameter of each $T_\fC$ is bounded by $O(\log^2 n)$, in the end. 
The last part follows from the fact that whenever a vertex $u$ is added to a new Steiner tree, $u$ changes its cluster. 
This can happen at most $6(b + \log n) + 1 = O(\log n)$ times, by \cref{lem:node_in_few_clusters}. 
\end{proof}

\begin{proposition}
\label[proposition]{prop:clusters_finished}
At the end of phase $i_{last} = 2(b + \log n)$, the level of each cluster $\fC$ is equal to $\lev_{i_{last}}(\fC) = b$, i.e., $\fC$ is finished. 
\end{proposition}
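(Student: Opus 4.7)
The plan is to derive a contradiction by playing the lower bound on tokens from Invariant 1 against the global upper bound on total tokens ever generated. Concretely, suppose for contradiction that some cluster $\fC$ is not finished at the end of phase $i_{last} = 2(b+\log n)$, so $\lev(\fC) \le b - 1$ at that point. Then Invariant 1 (\cref{lem:num_of_tokens_increases}) still applies to $\fC$, and its inductive proof extends through all $i_{last}$ phases, giving
\[
t_{i_{last}+1}(\fC) \;\ge\; 2^{(i_{last}+1) - 2\lev_{i_{last}+1}(\fC) - 1} \;\ge\; 2^{2(b+\log n) + 1 - 2(b-1) - 1} \;=\; 2^{2\log n + 2} \;=\; 4n^2.
\]

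On the other hand, \cref{lem:bound_overall_num_of_tokens} bounds the total number of tokens ever generated by $7|S|(b+\log n) \le 7n(b+\log n)$. Since tokens are only created (either initially or when a node joins a new cluster) and otherwise either held by a cluster or destroyed when a cluster kills proposing nodes, the token count $t(\fC)$ of any single cluster at any time is at most this global count. Hence $t_{i_{last}+1}(\fC) \le 7n(b+\log n)$, which for $b = O(\log n)$ and sufficiently large $n$ contradicts the lower bound $4n^2$ derived above.

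The argument is essentially a one-line calculation once the two invariants are in hand, so the only thing to verify carefully is that Invariant 1 indeed still applies to $\fC$ after phase $i_{last}$, which is immediate from the assumption that $\fC$ has not yet finished (finished clusters are precisely those at level $b$, and the inductive proof of \cref{lem:num_of_tokens_increases} only requires the cluster to be unfinished). I do not anticipate any real obstacle here beyond bookkeeping the constants; the crucial design choice making this work — namely that the number of phases $2(b+\log n)$ is exactly large enough so that Invariant 1 forces an exponentially growing token count that outstrips the linear-in-$n$ token budget — was already baked into the parameter choices in \cref{sec:algorithm}.
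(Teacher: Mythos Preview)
Your proposal is correct and follows essentially the same approach as the paper: derive a contradiction by combining the exponential lower bound on $t(\fC)$ from Invariant~1 with the global upper bound on tokens from \cref{lem:bound_overall_num_of_tokens}. The only cosmetic difference is that you evaluate Invariant~1 at the (hypothetical) start of phase $i_{last}+1$ and use $\lev \le b-1$, obtaining $4n^2$, whereas the paper evaluates at the start of phase $i_{last}$ and uses the cruder bound $\lev \le b$, obtaining $n^2/2$; both contradict $7n(b+\log n)$ for $b=O(\log n)$.
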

\begin{proof}
The first part follows from Invariant 1 (\cref{lem:num_of_tokens_increases}) as follows. 
Unless $\fC$ is finished, Invariant 1 maintains that $t_i(\fC) \ge 2^{i - 2 \lev_i(\fC) - 1}$. This means that if $\fC$ is still not finished at the end of the phase $i_{last} = 2(b + \log n)$, then we would have 
\[
t_{i_{last}}(\fC)
\ge 2^{{i_{last}} - 2 \lev_{i_{last}}(\fC) - 1}
\ge 2^{2(b + \log n) - 2 b - 1}
\ge n^2/2, 
\]
a contradiction with \cref{lem:bound_overall_num_of_tokens}. 
%
\end{proof}

\subsection{Transcript Tree and Isolating Clusters}
\label{sec:tree}

In this subsection, we show that the final clustering produced by the algorithm described in \cref{sec:algorithm} satisfies that there are no two neighboring clusters. This is stated as the following proposition. 

\begin{proposition}
\label{prop:clusters_nonadjacent}
At the end of the algorithm, resulting clusters are nonadjacent. 
\end{proposition}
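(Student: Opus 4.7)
\medskip
\noindent\textbf{Proof plan.}
The natural approach is to deduce Proposition~\ref{prop:clusters_nonadjacent} from what I will call the \emph{level invariant}: at every moment during the algorithm, whenever two clusters $\fC$ and $\fC'$ are adjacent in $G$ (i.e.\ some living vertex of $\fC$ has a $G$-edge to some living vertex of $\fC'$), the identifiers $\id(\fC)$ and $\id(\fC')$ agree on their $\min(\lev(\fC), \lev(\fC'))$ least significant bits. Once this is established, the proposition is immediate: by \cref{prop:clusters_finished}, at termination every surviving cluster is at level $b$, so any two adjacent clusters would need to agree on all $b$ bits of their identifier; since cluster identifiers inherit the unique vertex identifiers, adjacent clusters must coincide, giving the claim.

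To prove the level invariant I would introduce a structural bookkeeping object---the \emph{transcript tree} promised by the section title---which, for each cluster $\fC$, records the full history of who joined $\fC$, who left $\fC$, and via which boundary edges each change occurred. This lets us identify, for any current $\fC$-vertex $u$ and any current neighbor $v$ in some $\fC''$, a previous moment at which $u$'s cluster and $v$'s cluster were also adjacent, so that one can reason inductively on time. I would then prove the invariant by induction on steps, handling the only two event types that can change adjacency between clusters: (a) a vertex $u$ changes cluster from $\fC$ to $\fC'$ inside a step, and (b) a stalling cluster $\fC$ increments its level from $\ell$ to $\ell+1$ at the end of a phase.

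Event (a) is the easier one and reduces to a transitivity argument on shared bit prefixes. The proposal rule guarantees that if $u$ moves to $\fC'$ then either $\lev(\fC') < \lev(\fC)$ and $\fC'$ has the smallest level among $u$'s neighbors, or $\lev(\fC') = \lev(\fC)$ with $\id_{\lev(\fC)+1}(\fC)=0$ and $\id_{\lev(\fC')+1}(\fC')=1$. Any cluster $\fC''$ that becomes adjacent to $\fC'$ through $u$ was previously adjacent to $\fC$ via $u$, so by the inductive hypothesis $\id(\fC)$ and $\id(\fC'')$ already agreed on $\min(\lev(\fC),\lev(\fC''))$ bits and $\id(\fC)$ and $\id(\fC')$ agreed on $\min(\lev(\fC),\lev(\fC'))$ bits; combining these and noting that $\lev(\fC'') \ge \lev(\fC')$ in the first case (by the ``lowest-level neighbor'' rule) yields the required agreement on $\min(\lev(\fC'),\lev(\fC''))$ bits.

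Event (b), the level increment of a stalling cluster $\fC$, is the main obstacle and needs the transcript tree together with a careful pass through the phase. For a neighbor $\fC''$ of $\fC$ at phase end with $\lev(\fC'')<\ell$ or with $\lev(\fC'')\ge \ell$ and $\id_{\ell+1}(\fC)=\id_{\ell+1}(\fC'')$ the required agreement on $\min(\ell+1,\lev(\fC''))$ bits follows from the inductive invariant. The delicate case is $\lev(\fC'')\ge \ell$ with differing $(\ell+1)$-th bits; I would argue that this case cannot survive to phase end by tracking who proposes across the phase. In every step before $\fC$ stalls, each living boundary vertex witnessing this adjacency would have to have proposed along the $\fC,\fC''$ boundary (depending on which bit is $1$ and on the ``strictly lower-level neighbor'' tiebreak), so it would have either been accepted into the other cluster (destroying the $\fC$-side boundary from within) or killed, because the ``proposals'' are always killed or accepted together in a step; once $\fC$ announces stalling, the rule that neighboring vertices do not propose to a stalled cluster, combined with the fact that the $O(b+\log n)$ steps per phase are enough to drain the boundary on each side, closes the argument. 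The hard part is therefore to show that the stalling announcement plus repeated proposal attempts really do remove the entire ``bad-bit'' boundary before the phase ends, which is exactly the role the transcript tree plays: it gives us a disciplined way to track how each boundary edge at phase end came to exist, and to exhibit a prior instant at which either the bits already matched or the witnessing vertex was doomed.
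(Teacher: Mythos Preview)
Your high-level plan---prove a ``level invariant'' (adjacent clusters agree on the first $\min(\lev(\fC),\lev(\fC'))$ bits of their identifiers) by induction on steps and phase transitions, then combine with \cref{prop:clusters_finished}---is correct and does establish \cref{prop:clusters_nonadjacent}. But two things differ from the paper, one conceptual and one technical.

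\emph{The transcript tree is not what you think.} In the paper, there is a single fixed $4(b+\log n)$-ary rooted tree $T$ of depth $b$, and a time-varying map $\pi$ sending each cluster to a node of $T$ at depth equal to its level: when $\fC$ increments its level at the end of phase $i$, it moves to the $(2i+\id_{\lev(\fC)+1}(\fC))$-th child of $\pi(\fC)$. The invariant proved (\cref{prop:tree_invariant}) is that adjacent clusters are always in an ancestor/descendant relation in $T$. This is strictly stronger than your level invariant because it also records \emph{in which phase} each level was attained; that extra strength is what drives the proof of \cref{prop:trees_are_trees}, not \cref{prop:clusters_nonadjacent}. There is no per-cluster history object.

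\emph{Your ``delicate'' case is actually the easy one and needs no bookkeeping.} The situation you worry about---a stalling cluster $\fC$ at level $\ell$ still adjacent at phase end to some $\fC''$ of level $\ge \ell$ with the wrong $(\ell+1)$-th bit---is ruled out in a single step by the proposal rules, not by tracking boundaries over the phase. At the very step $\fC$ kills and starts stalling, every vertex neighboring $\fC$ that \emph{could} propose to $\fC$ proposed somewhere of level $\le \ell$ (and with $(\ell+1)$-th bit $\ge \id_{\ell+1}(\fC)$ if equal level), and hence either died or moved to such a cluster; any new adjacency created later can only be to a cluster of the same or smaller level, by the same rule applied to the eater. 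This is exactly \cref{prop:tree_stalling}, which is proved directly from the proposal mechanics and does not use the transcript tree at all. There is no ``draining over $O(b+\log n)$ steps''.

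There is also a small gap in your event (a): when $u$ moves from $\fC$ to $\fC'$, the new neighbor of $\fC'$ across the edge $\{u,v\}$ is $v$'s \emph{new} cluster, which need not have been adjacent to $\fC$. The fix is the obvious chain $\fC'\!-\!\fC\!-\!\fC_v\!-\!\fC''$; since levels only drop when a vertex moves, the minimum prefix length along the chain is still $\min(\lev(\fC'),\lev(\fC''))$, so transitivity goes through. The paper's ancestor formulation makes this one line: all four images $\pi(\fC'),\pi(\fC),\pi(\fC_v),\pi(\fC'')$ lie on a single root-to-leaf path.
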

That is, once the algorithm terminates, there does not exist an edge with both endpoints being alive and contained in different clusters. 
We also prove the following fact. 
\begin{proposition}
\label{prop:trees_are_trees}
Each vertex $v$ is added at most once to each $T_\fC$, hence, the graphs $T_\fC$ are trees. 
\end{proposition}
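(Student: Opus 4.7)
The plan is to prove the proposition by contradiction: assume some vertex $u$ is added to $T_\fC$ twice, at times $t_1 < t_2$ in phases $p_1 \le p_2$. Since $T_\fC$ grows only when $u$ joins $\fC$, this means $u$ was in $\fC$ at $t_1$, left $\fC$ at some intermediate time, and rejoined $\fC$ at $t_2$. Denote $u$'s cluster sequence between $t_1$ and $t_2$ as $\fC = \fE_0, \fE_1, \ldots, \fE_m = \fC$ with $m \ge 1$, where the $j$-th transition happens at some time $\tau_j$ in phase $q_j$ (with $q_1 \le q_2 \le \ldots \le q_m$).

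The first step is to chain Invariant~2 (\cref{lem:potential_increases}) along this trajectory. Within each phase $q_j$ we get the strict inequality $\Phi_{q_j}(\fE_j) > \Phi_{q_j}(\fE_{j-1})$, and between consecutive transitions $u$ sits in a fixed cluster while the phase counter advances, which by the second part of Invariant~2 gives $\Phi_{q_{j+1}}(\fE_j) \ge \Phi_{q_j}(\fE_j)$. Chaining these yields $\Phi_{q_m}(\fC) > \Phi_{q_1}(\fC)$.

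The main obstacle is that this inequality alone is not yet a contradiction, because $\Phi_i(\fC)$ for a fixed cluster is allowed to grow across phases (by up to $3$ per phase, as a direct computation from the definition $\Phi_i(\fC) = 3i - 2\lev_i(\fC) + \id_{\lev_i(\fC)+1}(\fC)$ shows). To rule out the scenario, the plan is to introduce the auxiliary \emph{transcript tree} promised by the section heading: a combinatorial record of cluster formation and vertex migration that refines the scalar potential $\Phi_i$ with the identifier and level history of each cluster. The idea is to attach to every visit of $u$ to a cluster a tuple $(\lev, \id_{1..\lev+1}, \ldots)$ and show that along $u$'s trajectory this tuple is strictly increasing in an appropriate lexicographic order, even across phase boundaries. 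The key technical step is to verify that whenever a cluster advances its level at the end of a phase (the only event where $\phi_i(\fC) := \Phi_i(\fC) - 3i$ can decrease), this level advance is accompanied by a structural change to the transcript tree that forbids previously-departed vertices from ever reconnecting to the cluster as a neighbor — because every vertex that was adjacent to $\fC$ at its old level either joined $\fC$, was killed by $\fC$, or moved to a cluster whose transcript position strictly dominates $\fC$'s. With that stronger monotonicity in hand, a second visit of $u$ to $\fC$ would force the refined tuple of $u$'s cluster to cycle back, contradicting strict monotonicity and completing the proof.
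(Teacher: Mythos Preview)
Your diagnosis is right up to a point: Invariant~2 alone yields only $\Phi_{q_m}(\fC) > \Phi_{q_1}(\fC)$, which is no contradiction, and the transcript tree is indeed the missing ingredient. But from there on you have a plan, not a proof, and the plan has a real gap. You promise a ``refined tuple'' $(\lev, \id_{1..\lev+1}, \ldots)$ that is strictly monotone in some lexicographic order along $u$'s trajectory; you never say what the tuple actually is, and there is no obvious such totally-ordered quantity, because within a step $u$ can move to a cluster of \emph{strictly smaller} level (an ancestor in the transcript tree), while across phases its current cluster can move to a \emph{child}. Your justification for the ``key technical step'' is also incorrect as stated: when $\fC$ stalls and then advances a level, \cref{prop:tree_stalling} still permits $\fC$ to be adjacent to strictly lower-level clusters, so it is not true that every vertex adjacent to $\fC$ at its old level ``either joined $\fC$, was killed by $\fC$, or moved to a cluster whose transcript position strictly dominates $\fC$'s.''

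The paper does not search for a monotone scalar. It fixes the moment $u$ leaves $\fC$ for some $\fC'$ and argues structurally via the tree invariant (\cref{prop:tree_invariant}), splitting into two cases. If $\lev(\fC') < \lev(\fC)$, then $\pi(\fC')$ is a strict ancestor of $\pi(\fC)$; since no new clusters can ever be remapped to the node $\pi(\fC)$, and since by the tree invariant clusters in the subtree of $\pi(\fC)$ are adjacent only to clusters on the path from $\pi(\fC)$ to the root, whose vertices (being in smaller-level clusters) never propose into that subtree, $u$ can never re-enter the subtree --- hence never rejoin $\fC$. If $\lev(\fC') = \lev(\fC)$, then $\pi(\fC') = \pi(\fC)$ with $\id_{\lev+1}(\fC)=0$ and $\id_{\lev+1}(\fC')=1$; one waits until either $u$ drops to a strictly smaller level or $\fC$ advances, reducing to the first case, and handles the residual situation with \cref{prop:tree_stalling} plus the tree invariant to show that $\fC$ and the branch containing $u$ can never become adjacent again. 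You should replace the unspecified lexicographic-monotonicity idea with this concrete two-case structural argument.
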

To that end, we define an invariant that holds throughout the execution of the algorithm and which implies the properties stated above.
To define the invariant, we consider a fixed $4(b+\log n)$-ary rooted tree (i.e., the branching factor is twice the number of phases) of depth $b$ called \emph{the transcript tree $T$}\footnote{Try saying it three times in a row. }, where the root is defined to have depth $0$. Throughout the course of the algorithm, we map each non-empty cluster to one of the nodes in the tree $T$ by a mapping $\pi$. 
At the beginning, each cluster simply maps to the root of $T$. 
A cluster only changes the node it maps to when its level is increased, using the following rule. 
If a cluster $\fC$ advances from level $\lev(\fC)$ to level $\lev(\fC) + 1$ between phases $i$ and $i+1$, it is remapped to the  $(2i + \id_{lev(\fC) + 1}(\fC))$-th child of the node it previously mapped to. 
Notice that for each non-root node of $T$, there is only one phase when new clusters can be mapped to it (if the node is the $(2i)$-th or $(2i+1)$-th child, it is phase $i$). 
From that time on, unless the node is a leaf node of $T$, the clusters are gradually reassigned to its children or completely deleted from $T$ if they become empty. 
Notice that the current level of each cluster is equal to the depth of the node that this cluster currently maps to. Finally, our construction satisfies the following two properties:

\begin{observation}
\label[observation]{obs:tree_agree_on_bits}
The identifiers of all clusters that map to a given node at depth $d$ agree on the $d$ least significant bits.
\end{observation}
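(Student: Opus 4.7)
The plan is to prove the observation by induction on the depth $d$ of the node in the transcript tree $T$, proving the slightly stronger statement that every cluster which is \emph{ever} mapped to a depth-$d$ node $v$ has its $d$ least significant bits equal to a common string $s(v)$ that depends only on $v$. This is strictly stronger than just saying the identifiers agree at any fixed moment in time, and it is needed because different clusters can map to the same $v$ at different phases.

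For the base case $d = 0$, the only depth-$0$ node is the root, every cluster starts out mapped to it, and the claim over zero bits is vacuous. For the inductive step, fix a non-root node $v'$ of depth $d + 1$ and let $v$ be its parent, which has depth $d$. By the construction of $T$, $v'$ is the $(2i + b')$-th child of $v$ for some phase index $i$ and some bit $b' \in \{0,1\}$, and these are uniquely determined by $v'$. A cluster $\fC$ can map to $v'$ only by being remapped from $v$ at the transition between phases $i$ and $i+1$, when its level advances from $d$ to $d+1$; the remapping rule sends $\fC$ to the $(2i + \id_{\lev(\fC)+1}(\fC))$-th child of $v$. For that child to be $v'$, we need $\id_{d+1}(\fC) = b'$. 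Combined with the inductive hypothesis applied to $v$, which forces the $d$ least significant bits of $\id(\fC)$ to equal $s(v)$, this shows that the $d+1$ least significant bits of $\id(\fC)$ equal the string $s(v')$ obtained from $s(v)$ by appending $b'$ in position $d+1$, proving the inductive step.

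Two small auxiliary facts are used implicitly and should be pointed out along the way: first, the identifier $\id(\fC)$ is fixed at creation and never changes, so ``the bits of $\fC$'' is a well-defined notion across phases; and second, the depth of $\pi(\fC)$ in $T$ equals $\lev(\fC)$ at all times, which follows by induction from the fact that remapping happens exactly when the level is incremented by one and always moves $\fC$ to a child of its current node. This second fact is what justifies writing $\lev(\fC) = d$ at the moment $\fC$ is remapped from the depth-$d$ node $v$.

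I do not expect any real obstacle: the observation is essentially a direct bookkeeping consequence of the remapping rule, and the induction goes through with no case analysis beyond distinguishing ``left'' and ``right'' children of a given phase.
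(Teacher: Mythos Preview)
Your proof is correct; the paper states this observation without proof, treating it as an immediate consequence of the remapping rule (each time a cluster moves from a node to its $(2i + \id_{\lev(\fC)+1}(\fC))$-th child, the child index encodes the next bit of the identifier). Your induction on the depth is precisely the natural way to make this rigorous, and the two auxiliary facts you flag---that $\id(\fC)$ is fixed and that the depth of $\pi(\fC)$ always equals $\lev(\fC)$---are exactly the ones the paper explicitly notes in the paragraph defining the transcript tree.
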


\begin{proposition}
\label[proposition]{prop:tree_stalling}
Suppose that $\fC$ is a stalling cluster. 
Then it does not  neighbor with higher level clusters and if $\id_{\lev_i(\fC)+1}(\fC) = 1$, it does not neighbor with any cluster $\fC'$ of the same level with $\id_{\lev_i(\fC')+1}(\fC') = 0$. 
\end{proposition}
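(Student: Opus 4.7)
The plan is to prove the proposition in two stages: first establish both conclusions at the exact step $s$ in which $\fC$ rejects its proposals and enters the stalling state, and then show these conclusions are preserved throughout the remainder of the phase while $\fC$ remains stalling.

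For the first stage, consider any vertex $v$ that is alive, not in $\fC$, and adjacent to $\fC$ immediately after step $s$. Since $\fC$ accepts no new members during step $s$, any witness $w\in\fC$ to this adjacency was already in $\fC$ at the start of step $s$, and so $v$ was physically adjacent to $\fC$ at the start of step $s$ as well. Denote by $\fC''$ the cluster of $v$ at the start of step $s$. If $\lev(\fC'')>\lev(\fC)$, then $\fC$ is a strictly lower-level neighbor of $v$, so the proposal rule forces $v$ into the ``lower-level'' branch and makes it propose to a smallest-level neighbor; proposing to $\fC$ yields death (since $\fC$ rejects), and proposing elsewhere places $v$ into a cluster of level $\le\lev(\fC)$ (if accepted) or kills it. Either way, $v$'s post-step cluster level is at most $\lev(\fC)$, giving the first conclusion.

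For the second conclusion, suppose $\id_{\lev_i(\fC)+1}(\fC)=1$ and, toward a contradiction, that $v$ ends up in a cluster $\fC''$ of the same level as $\fC$ with opposite bit. Either $v$ was in $\fC''$ already at the start of step $s$, or $v$ joined $\fC''$ during step $s$. In the first sub-case, $\fC$ serves as a same-level bit-$1$ neighbor of $v$; if $v$ had a strict-lower-level neighbor, branch (a) would force $v$ into a strictly lower-level cluster or death, contradicting $v\in\fC''$. Otherwise branch (b) applies (because $\fC''$ has bit $0$ and $\fC$ is a same-level bit-$1$ neighbor), so $v$ proposes to a same-level bit-$1$ cluster; proposing to $\fC$ kills $v$, and proposing to a different same-level bit-$1$ cluster either kills $v$ or removes $v$ from $\fC''$, contradicting the assumption. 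In the second sub-case, $v$ joined $\fC''$ via branch (a) (branch (b) requires the joined cluster to have bit $1$), meaning $\fC''$ was tied for smallest level among $v$'s neighbors; but $\fC$ is a neighbor of $v$ at the same level with bit $1$, and the tiebreak rule would have forced $v$ to prefer $\fC$ (or another bit-$1$ smallest-level neighbor) over the bit-$0$ cluster $\fC''$, contradiction.

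For the second stage, observe that the set of vertices adjacent to $\fC$ cannot grow while $\fC$ stalls. No outside vertex can acquire a new neighbor inside $\fC$, since $\fC$ gains no members. A vertex $w$ leaving $\fC$ can only do so via branch (a), proposing to a strictly lower-level cluster; if accepted, $w$ is now in a lower-level cluster and is therefore harmless for both conclusions. Moreover, within a single phase each vertex's cluster level is non-increasing (every proposal goes to a same-or-lower-level cluster), and once a vertex is in a same-level bit-$1$ cluster, branch (b) no longer applies, so it can only drop to a strictly lower level. Consequently the conditions established at step $s$ are maintained until the end of the phase.

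The main obstacle is the sub-case in the second part of the first stage where $v$ hops into a same-level opposite-bit cluster $\fC''$ during the very step that $\fC$ rejects; the key leverage is the tiebreak preference for bit-$1$ clusters in branch (a), which rules out such a hop because $\fC$ itself qualifies as a bit-$1$ smallest-level alternative for $v$.
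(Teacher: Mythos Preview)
Your proof is correct and follows essentially the same two-stage approach as the paper: first argue that at the step when $\fC$ rejects and begins stalling, every surviving neighbor is either at a strictly smaller level or at the same level with bit at least that of $\fC$ (the tiebreak preference for bit-$1$ clusters in branch~(a) being the key point), and then argue that subsequent steps cannot undo this because levels are non-increasing and bit-$1$ same-level neighbors can only drop to strictly lower levels. One small imprecision: your claim that a vertex $w$ leaving $\fC$ ``can only do so via branch (a)'' is literally true only when $\id_{\lev_i(\fC)+1}(\fC)=1$; if $\fC$ has bit $0$, a node of $\fC$ could still leave via branch (b) to a same-level bit-$1$ cluster, but this is equally harmless for the first conclusion (the new level is still $\le\lev(\fC)$) and the second conclusion is vacuous in that case, so the argument is unaffected.
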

\begin{proof}
Whenever a cluster $\fC$ deletes its boundary and starts stalling, each neighboring node $u$ that considered proposing to $\fC$, but did not, either proposed to a cluster of level strictly smaller than $\lev(\fC)$, or it proposed to a cluster $\fC'$ in the same level, but then $\id_{\lev(\fC')+1}(\fC') \ge \id_{\lev(\fC)+1}(\fC)$. 
Then, $u$ is either deleted, or it joins $\fC'$. 
So, a cluster $\fC$ that starts stalling can be neighboring with another cluster $\fC'$, but then the level of $\fC'$ is either strictly smaller, or it is the same, but $\id_{\lev(\fC')+1}(\fC') \ge \id_{\lev(\fC)+1}(\fC)$. 

In the following steps, a node of $\fC$ can be eaten by one of the neighboring clusters, but this does not create new neighbors of $\fC$, or a connection with a different cluster $\fC''$ is created by that cluster eating a node of some neighboring cluster $\fC'$. 
However, $\fC''$ is either of smaller level than $\fC'$, or it is the same level, but with $\id_{\lev(\fC'')+1}(\fC'') \ge \id_{\lev(\fC')+1}(\fC')$. Hence, this new connection is still allowed. 
\end{proof}

We now prove that the algorithm described in \cref{sec:algorithm} satisfies the following crucial invariant throughout the course of the algorithm. \cref{fig:tree} might help to obtain a better intuition.

\begin{proposition}
\label[proposition]{prop:tree_invariant}
 Whenever two clusters $\fC$ and $\fC'$ are neighboring, then either $\pi(\fC)$ is an ancestor of $\pi(\fC')$---i.e., $\fC$ is mapped to a node that lies on the unique path between the node $\fC'$ maps to and the root in $T$---or $\pi(\fC')$ is an ancestor of $\pi(\fC)$.
\end{proposition}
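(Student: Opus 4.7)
The plan is to establish the invariant by induction on time, advancing through each step within a phase and through each phase transition. The base case is immediate: initially every cluster maps to the root, which is its own ancestor.

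For the inductive step within a single step of a phase, the map $\pi$ does not change; only cluster composition does. Consider any two clusters $\fC, \fC'$ that are neighboring after the step via some edge $(v, v')$ with $v \in \fC$ and $v' \in \fC'$. Immediately before the step, $v$ was in some $\fC_v$ and $v'$ in some $\fC_{v'}$, and $\fC_v, \fC_{v'}$ were neighboring via the same edge, so by the induction hypothesis $\pi(\fC_v)$ and $\pi(\fC_{v'})$ are comparable in $T$. The crux is a subclaim: if $v$ actually moved ($\fC_v \neq \fC$), then $\pi(\fC)$ is an ancestor of $\pi(\fC_v)$. Indeed, the proposal rule forces either $\lev(\fC) < \lev(\fC_v)$ (and since $\fC, \fC_v$ were already neighboring, the induction hypothesis plus the depth gap gives strict ancestry), or equal levels with $\id_{\lev(\fC)+1}(\fC)=1$ and $\id_{\lev(\fC_v)+1}(\fC_v)=0$ (equal depth and comparable forces $\pi(\fC)=\pi(\fC_v)$). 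Symmetrically for $v'$. WLOG taking $\pi(\fC_v)$ to be the ancestor of $\pi(\fC_{v'})$, the single root-to-$\pi(\fC_{v'})$ path then contains all four of $\pi(\fC), \pi(\fC_v), \pi(\fC'), \pi(\fC_{v'})$, so $\pi(\fC)$ and $\pi(\fC')$ are comparable.

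For the phase transition, only stalling clusters get remapped, each to a child of its current node. Consider a pair $\fC, \fC'$ of neighboring clusters at the end of phase $i$. If neither stalls, $\pi$ is unchanged. If only $\fC$ stalls, then \cref{prop:tree_stalling} gives $\lev(\fC') \le \lev(\fC)$, so the old $\pi(\fC')$ is an ancestor of the old $\pi(\fC)$; since the new $\pi(\fC)$ is a child of the old, $\pi(\fC')$ remains an ancestor of it. If both stall, applying \cref{prop:tree_stalling} in both directions forces equal levels and equal bits $\id_{\lev(\fC)+1}$ (a bit-$0$/bit-$1$ mismatch would be excluded by the statement applied to the bit-$1$ cluster), so the old $\pi$ values coincide and both are remapped to the same child.

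The two companion propositions then fall out. \cref{prop:trees_are_trees} is immediate from Invariant 2: the potential $\Phi$ strictly increases whenever a vertex changes cluster and never decreases between phases, so a vertex never re-enters a previously occupied cluster and is thus added to each $T_\fC$ at most once. \cref{prop:clusters_nonadjacent} follows from the tree invariant at termination: by \cref{prop:clusters_finished} every surviving cluster sits at depth $b$ in $T$ and so maps to a leaf of $T$, and two comparable leaves must coincide, ruling out distinct adjacent clusters. The main subtlety to anticipate is the simultaneous movement of many vertices inside one step; the \emph{both images lie on a single root-to-leaf path} observation cleanly absorbs this without needing to serialize the moves.
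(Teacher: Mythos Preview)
Your argument for \cref{prop:tree_invariant} itself is correct and essentially mirrors the paper's proof: induction on time, with the within-step case handled by showing that any destination cluster's image is an ancestor of the source's image (so all four images lie on one root-to-leaf path), and the phase-transition case handled via \cref{prop:tree_stalling}.

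Your companion sketches, however, have gaps. For \cref{prop:clusters_nonadjacent}, concluding that two adjacent finished clusters map to the \emph{same} leaf of $T$ is not yet a contradiction: nothing so far prevents several distinct clusters from mapping to the same node of $T$. The missing step (the paper's \cref{obs:tree_agree_on_bits}) is that clusters mapped to a common depth-$b$ node agree on all $b$ bits of their identifier and are therefore identical by uniqueness of identifiers.

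More seriously, your argument for \cref{prop:trees_are_trees} via Invariant~2 does not go through. Invariant~2 guarantees that the potential of $u$'s current cluster is non-decreasing in time and strictly jumps whenever $u$ moves; but the potential $\Phi_i(\fC)$ of a cluster $\fC$ that $u$ has \emph{left} also increases between phases (indeed $\Phi_{i+1}(\fC)\ge \Phi_i(\fC)+1$ always). So if $u$ leaves $\fC$ for $\fC'$ in phase $i$ and later sits in some $\fC''$ in phase $j>i$, there is no contradiction in having $\Phi_j(\fC)>\Phi_j(\fC'')$: the strict chain from $\Phi_i(\fC)$ to $\Phi_j(\fC'')$ says nothing about $\Phi_j(\fC)$. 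Invariant~2 alone therefore does not rule out $u$ rejoining $\fC$. The paper's proof of \cref{prop:trees_are_trees} instead uses the tree invariant you just proved: once $u$ leaves $\fC$ for a strictly lower-level cluster, $u$ can never re-enter the subtree of $\pi(\fC)$ because, by \cref{prop:tree_invariant}, clusters in that subtree only neighbor clusters on the path to the root, whose vertices never propose into the subtree; the equal-level case is handled separately by tracking the $(\lev+1)$-th bit and the stalling boundary deletion.
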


\begin{figure}
    \centering
    \includegraphics[width=\textwidth]{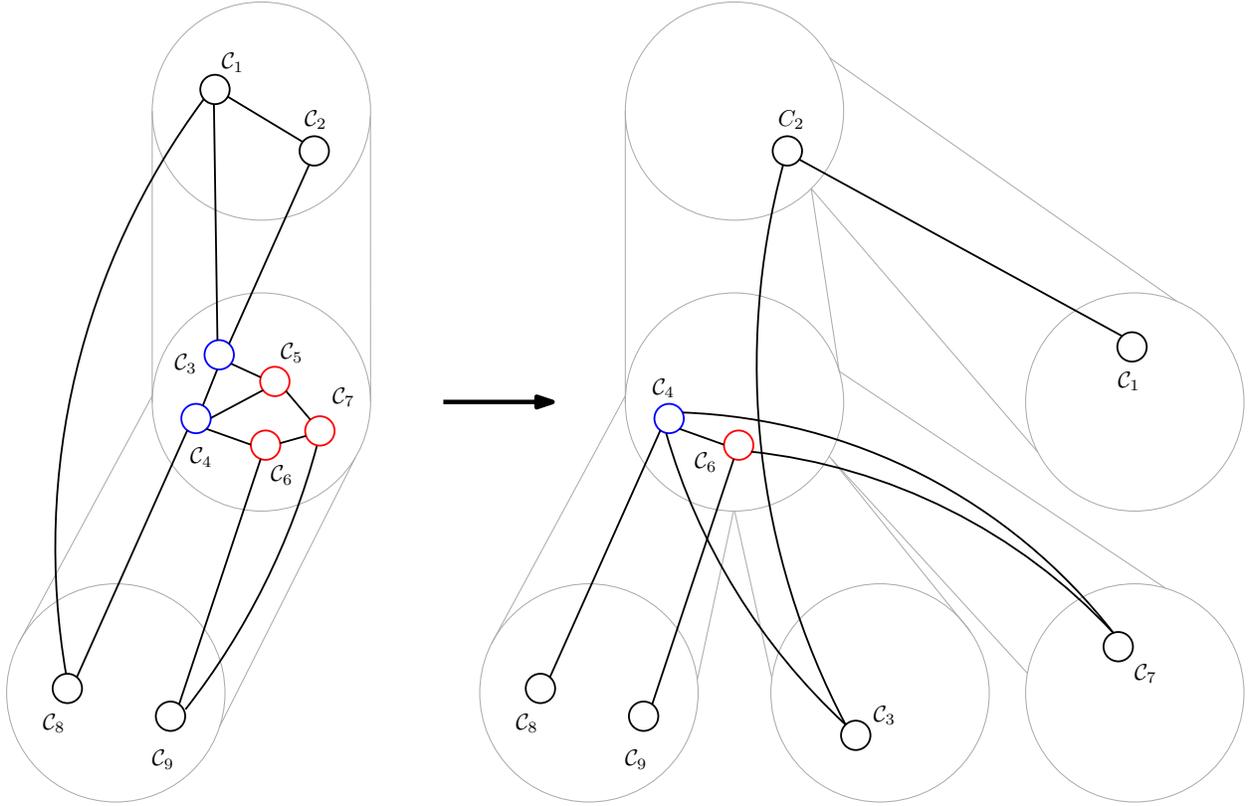}
    \caption{The figure captures a possible change in cluster mapping between the beginning of phase $i$ and the beginning of phase $i+1$ of the algorithm, with focus on one node of the transcript tree $T$ in depth $d$ containing clusters $\fC_3, \fC_4, \fC_5, \fC_6, \fC_7$ that are colored blue if their $(d+1)$'th bit is equal to $1$ and red otherwise (i.e., red vertices are proposing to blue clusters). Two clusters are connected by an edge in the figure if they are neighboring. \\
    The cluster $\fC_1$ is eating clusters $\fC_3$ and $\fC_8$ (by this we mean their vertices propose to $\fC_1$) and during the phase, it decides to delete its boundary with $\fC_3$ and $\fC_8$ and to go to the next level $d$ at the end of the phase -- it is reassigned to a node of $T$ in depth $d$. 
    The cluster $\fC_2$ is eating $\fC_3$ during the whole phase and it will continue eating it even in the next phase. 
    The cluster $\fC_3$ is eating $\fC_5$, until it decides to delete its boundary with it and to go to the next level $d+1$. 
    The cluster $\fC_4$ is eating $\fC_5, \fC_6, \fC_8$ and later in the phase also $\fC_7$. 
    All vertices of $\fC_5$ leave that cluster at some point during this phase so the whole cluster is dissolved and we do not map it to $T$ any more. 
    The cluster $\fC_6$ is eating $\fC_9$ and later in the phase also $\fC_7$, after $\fC_7$ decides to delete its boundary to $\fC_9$. 
    }
    \label{fig:tree}
\end{figure}

\begin{proof}
We prove \cref{prop:tree_invariant} by induction on the number of executed steps of the algorithm.
We prove that it stays satisfied after every step of the algorithm, and also between any two phases, when stalling clusters go to the next level. 
We note that the property to prove holds at the beginning of the algorithm, since all the clusters are mapped to the root node of $T$.

Next, fix a step $j$ of some phase $i$ and assume that the property to prove is satisfied right at the beginning of the step. We now consider some arbitrary edge $\{u,v\}$, where both $u$ and $v$ have not been deleted. In order to prove that the invariant holds after step $j$, it suffices to show that after step $j$, nodes $u$ and $v$ are not contained in two different clusters such that none of the two clusters is an ancestor of the other cluster.

This holds because whenever $u \in \fC_u$ or $v\in \fC_v$, respectively, proposes to some cluster $\fC'_u$ or $\fC'_v$, respectively, by the induction hypothesis, $\pi(\fC'_u)$ is an ancestor of $\pi(\fC_u)$ (possibly, $\pi(\fC'_u) = \pi(\fC_u)$) and similarly we have that $\pi(\fC'_v)$ is an ancestor of $\pi(\fC_v)$. 
By the induction hypothesis, we also know that either $\pi(\fC_u)$ is an ancestor of $\pi(\fC_v)$, or the other way around. 
Putting these facts together, we get that either $\pi(\fC'_u)$ is the ancestor of $\pi(\fC'_v)$, or the other way around, as desired. 

Second, we show that the property stays satisfied between two phases $i$ and $i+1$. 
We again consider an arbitrary edge $\{u, v\}$ with $u \in \fC_u$ and $v \in \fC_v$. 
If neither $u$ nor $v$ stalled, there is nothing to prove. If both $\fC_u$ and $\fC_v$ stalled, by \cref{prop:tree_stalling}, we have $\lev_i(\fC_u) = \lev_i(\fC_v) = \lev$ and $\id_{\lev+1}(\fC_u) = \id_{\lev+1}(\fC_v)$. By the induction hypothesis, $\pi(\fC_u) = \pi(\fC_v)$, hence both $\fC_u$ and $\fC_v$ are remapped to the same node of the transcript tree $T$ between the two phases. 
If $u$ stalled but $v$ did not, by \cref{prop:tree_stalling} and the induction hypothesis, $\pi(\fC_v)$ is an ancestor of $\pi(\fC_u)$. Hence, after remapping $\fC_u$ to one of the children of the node it previously mapped to, the induction hypothesis is still satisfied. 
\end{proof}

Now, we are ready to prove \cref{prop:clusters_nonadjacent} and \cref{prop:trees_are_trees}. 
\begin{proof}[Proof of \cref{prop:clusters_nonadjacent}]
By \cref{prop:clusters_finished}, at the end of the algorithm, all resulting clusters are in level $b$. 
Hence, by \cref{prop:tree_invariant}, two adjacent clusters need to map to the same node of $T$ at depth $b$. 
However, by \cref{obs:tree_agree_on_bits}, the two clusters then agree on their identifiers, which is a contradiction with their uniqueness. 
\end{proof}

\begin{proof}[Proof of \cref{prop:trees_are_trees}]
Fix some $T_\fC$ and a vertex $u$ that was added to $\fC$ at some point during the algorithm. 
Suppose $u$ leaves $\fC$ and joins some cluster $\fC'$. We prove that $u$ cannot join $\fC$ in the future. 
First, suppose $\fC'$ is currently in strictly smaller level than $\fC$. Then we claim $u$ cannot join a cluster from the subtree of $\pi(\fC)$, and $\fC$ in particular, anymore. 
This is because clusters cannot be remapped to $\pi(\fC)$ anymore and clusters from the subtree of $\pi(\fC)$ do not have any connections to other clusters beside clusters in the path from $\pi(\fC)$ to the root, by \cref{prop:tree_invariant}. But vertices in those clusters never propose to clusters in the subtree of $\pi(\fC)$, since they have a smaller level. 

Similarly, if $u$ leaves $\fC$ and joins a cluster $\fC'$ that is currently in the same level $d$, by \cref{prop:tree_invariant} we have $\pi(\fC) = \pi(\fC')$ and $\id_{d+1}(\fC) = 0$ while $\id_{d+1}(\fC') = 1$. 
Whenever $u$ is later eaten by a cluster with strictly smaller level than $d$ or $\fC$ goes to the next level, we argue as in the previous case. 
Otherwise, after $\fC'$ deletes its boundary to $\fC$ and starts stalling, we have that $\fC'$ cannot become adjacent to $\fC$ during this phase and this holds also  during next phases, since, by induction, $\fC$ can eat only vertices from some other branches of the subtree of $\pi(\fC)$ than the branch of $\pi(\fC')$ and clusters in those branches are not adjacent to $\pi(\fC')$ by \cref{prop:tree_invariant}. 
Hence, each vertex is added to $T_\fC$ at most once and $T_\fC$ is a tree. 
\end{proof}

\subsection{Wrapping up}

We are now ready to wrap up the analysis of our distributed ball carving algorithm and present the proof of \Cref{thm:log4}.
\begin{proof}[Proof of \Cref{thm:log4} ]
The total number of deleted nodes is at most $|S|/2$ by \cref{lem:small_num_of_deleted_nodes}. 
The fact that the resulting clusters are not neighboring follows from \cref{prop:clusters_nonadjacent}.  
The corresponding Steiner trees are trees via \cref{prop:trees_are_trees}, have weak-diameter $O(\log^2 n)$ and each edge is in at most $O(\log n)$ Steiner trees by \cref{lem:small_weak_diameter}. 

Finally, we bound the running time. 
In the \local model, it is bounded by $O(\log^4 n)$, since the algorithm has $O(\log n)$ phases, each having $O(\log n)$ steps and each step can be implemented in the number of rounds proportional to the weak diameter of each cluster, which is bounded by $O(\log^2 n)$. 

In the \congest  model, we first verify that an $O(\log^5 n)$ upper bound holds because each step can be implemented in $O(\log^3 n)$ rounds as follows: 
First, every step starts by nodes proposing to join a neighboring cluster, provided there is a suitable one. This step is implemented in two \congest  model rounds. 
Second, each root of the Steiner tree $\fC$ needs to collect how many nodes are proposing to the cluster. 
Since each edge is contained in $O(\log n)$ Steiner trees and the diameter of each Steiner tree is $O(\log^2 n)$, this can be done in $O(\log^3 n)$ steps. Finally, the cluster $\fC$ needs to decide whether it will grow or not and this information is then broadcasted via $T_\fC$ to all proposing nodes. This can again be done in $O(\log^3 n)$ rounds.  

Using \cref{cor:pipelining_with_overlap} from \cref{sec:aggregating}, we can speed up the aggregation of the summation and the broadcast in every cluster so that it runs in parallel for all the clusters in  $O(\log^2 n)$  rounds. This recovers the same round complexity of $O(\log^4 n)$ for the \congest model, matching that of the \local model. 
\end{proof}

\subsection{Example Applications: MIS and Coloring}
\label{sec:mis}
As two prominent examples of applications, below we mention how we obtain $O(\log^5 n)$ round deterministic \congest model algorithms for the maximal independent set and $\Delta+1$ coloring problems. These improve on the $O(\log^7 n)$-round \local model and $O(\log^8 n)$-round \congest model algorithms of Rozho\v{n} and Ghaffari~\cite{RozhonG19}. We note that similar polynomial improvements happen for all other applications of network decompositon, many of which are discussed in~\cite{RozhonG19}.

\deterministicmis
\begin{proof}[Proof Sketch]
We process the color classes of the network decomposition, one by one. When processing clusters of color $i$, first, we remove each vertex that is adjacent to a node that is already in the MIS. Then, for each cluster, we run the deterministic MIS algorithm of Censor-Hillel et al.\cite{censor2017derandomizing}, which computes an MIS in $O(D\log^2 n)$ rounds in any $n$-node graph of diameter $D$. Since each cluster has weak diameter $O(\log^2 n)$, running this algorithm in one cluster would be doable in $O(\log^4 n)$ rounds. Running the algorithm for different clusters needs more care, as their Steiner trees are not edge disjoint: The MIS algorithm of Censor-Hillel et al.\cite{censor2017derandomizing} is based on derandomizing the $O(\log n)$ round algorithm of \cite{ghaffari2016MIS}. They observe that each round needs only pairwise independence, which thus means only $O(\log n)$ bits of randomness. Then, these bits are fixed one by one, using the method of conditional expectation. To perform this, the key step is to determine how to fix each single bit (conditioned on the bits fixed so far). For that, each node needs to compute (a certain pessimistic estimator of) the probability of it being in the MIS or neighboring an MIS node, under the two possibilities of the single randomness bit that we are examining. This is done via $1$ round of communication with the neighbors in the MIS problem, and that part we can easily do in our setting as the nodes of different clusters are disjoint (even though their Steiner trees are not). Then, the algorithm of Censor-Hillel et al.\cite{censor2017derandomizing} aggregates the sum of these probability estimators, using a convergecast on the global BFS tree of the network, with depth $D$, in $D$ rounds. To perform this part, we make each cluster use its Steiner tree. These Steiner trees are not disjoint, but fortunately, each vertex is in at most $O(\log n)$ Steiner trees. Hence, we can apply the pipelining of \Cref{cor:pipelining_with_overlap}, which allows us to aggregate the summations for different clusters in parallel, in $O(\log^2 n + \log n) = O(\log^2 n)$ rounds. Once these sums are gathered at the center, it can be decided how to fix this one bit of the randomness of this round of  \cite{ghaffari2016MIS}, and we can proceed to the next bit. There are $O(\log n)$ rounds and we need to fix $O(\log n)$ bits for each. Hence, overall, the round complexity of computing an MIS for each cluster of one color class, all at the same time, is $O(\log^4 n)$ rounds of the \congest model. This is the complexity for one color class of the decompositon. Since the decomposition has $O(\log n)$ colors, the overall complexity of solving MIS, given the network decomposition, is $O(\log^5 n)$. When put together with the $O(\log^5 n)$ round complexity needed for computing the decomposition via \cref{thm:log5}, we have a deterministic MIS algorithm that runs in $O(\log^5 n)$ rounds.
\end{proof}

\deterministiccoloring

\begin{proof}
The proof is similar to the MIS result, with only one exception: when solving the problem in each cluster, instead of the \congest-model MIS algorithm of Censor-Hillel et al.~\cite{censor2017derandomizing}, we apply the \congest-model list-coloring algorithm of Bamberger et al.~\cite{bamberger2020efficient}.
\end{proof}


\section{Identifier-Independent Network Decomposition}
\label{sec:handling_identifiers}

In this section, we explain how one can obtain a much milder dependence of the round complexity on the length of identifiers (and bit capacity of each edge) $b$. 
Specifically, the round complexity $\poly(b \cdot \log n)$ is improved to $(\log^* b) \cdot \poly(\log n)$. 
Note that in the \local model, i.e., without constraints on the capacity of edges, this is a direct implication of distance coloring (cf. Remark 2.10 in \cite{RozhonG19}). 

In standard, deterministic, applications, we have $b = \Theta(\log n)$, so we do not get an improvement over the previous formulation of the algorithm.
However, in the shattering framework, we have $N = O(\log n)$ and $b=\Theta(\log n)$, so we get an improved complexity from $\poly(\log n)$ down to $\poly(\log \log n)$. 

In this section, the idea of our improvement is explained by modifying the algorithm of \cite{RozhonG19} explained in \cref{sec:algorithm_intuition}. 
The complexity of their algorithm is $O(b^4 \log^3n)$ and we show how to change it to $O(\log^7 n + (\log^* b)\cdot \log^6 n)$. 
In \cref{sec:faster_construction_without_identifiers}, we improve the round-complexity of the algorithm from \cref{thm:log5} from $O(b^4 \log n)$ to $O(\log^5 n + (\log^*b)\cdot \log^4 n )$.
\subsection{Balanced Coloring}
\label{sec:balanced_coloring}
\begin{lemma}
\label{thm:balanced_coloring}
Consider a graph $G=(V, E)$ that has no isolated vertices and where each node has a $b$-bit identifier. There is a deterministic distributed algorithm in the \congest model that, in $O(\log^* b)$ rounds, colors the vertices of $V$ blue or red such that each color has at most $3|V|/4$ vertices. 
\end{lemma}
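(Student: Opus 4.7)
The plan is to reduce the balanced 2-coloring problem to a proper 3-coloring of a forest, which can be computed in $O(\log^* b)$ rounds via Cole-Vishkin, and then post-process using local fan-in information.

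First, in one round each vertex $v$ computes $p(v) = \arg\min_{u \in N(v)} \mathrm{ID}(u)$, which is well-defined because $G$ has no isolated vertex. The resulting functional digraph $H$ on $V$ with edges $\{(v, p(v)) : v \in V\}$ has every vertex with out-degree $1$. A short argument using the fact that $\mathrm{ID}(p(v)) \le \mathrm{ID}(u)$ for every $u \in N(v)$ shows that any directed cycle in $H$ has length exactly $2$: a longer cycle would force a strict cyclic inequality on IDs. Hence each cycle is a mutually-pointing pair of vertices (a \emph{matched pair}), and so as an undirected graph $H$ is a forest whose trees are rooted at matched pairs.

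Second, I would apply Cole-Vishkin's classical $O(\log^* b)$-round label-reduction to the rooted forest $H$ (with $p$ acting as parent pointer, breaking the matched-pair symmetry by rooting at the larger-ID endpoint) to obtain a proper 3-coloring $c \colon V \to \{0,1,2\}$. One additional round lets each $v$ also learn $c(p(v))$.

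Third, in $O(1)$ more rounds I would define the final coloring by having each owner $u$ collect its follower set $F_u = \{v : p(v) = u\}$ and assign colors to its followers according to three local rules: (i) for each matched pair $\{u, u'\}$ with $\mathrm{ID}(u) < \mathrm{ID}(u')$, color $u$ red and $u'$ blue; (ii) for each $u$ with $|F_u| \ge 2$, sort $F_u \setminus \{\text{matched partner}\}$ by ID and assign the first half red and the second half blue; (iii) for a singleton $F_u = \{v\}$ with $v$ not $u$'s matched partner, color $v$ red if $c(v) < c(p(v))$ and blue otherwise. Each $v$ receives its color from $p(v)$.

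The hard part will be proving the balance bound $|R|, |B| \le 3|V|/4$ in every structural configuration. The argument decomposes the contribution to each color into three categories. Matched pairs always contribute one of each color. Groups with $|F_u| \ge 2$ split evenly by ID, so their contributions are balanced up to at most one vertex per group. The delicate case is the singletons, which form chain-like substructures of $H$; here one uses that in any proper 3-coloring of a chain, the comparison rule $c(v) < c(p(v))$ yields each color on at least a $1/3$-fraction of the chain, because among any three consecutive vertices the comparison must go each direction at least once. Summing the three contributions then shows each color class has size in $[|V|/4, 3|V|/4]$.
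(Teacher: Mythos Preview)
Your reduction to a rooted forest via minimum-ID pointers is correct and elegant (the 2-cycle argument is fine), and invoking Cole--Vishkin to 3-color that forest in $O(\log^* b)$ rounds is legitimate. The gap is entirely in Step~4, the balance analysis.

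The sentence ``among any three consecutive vertices the comparison must go each direction at least once'' is false: three vertices give only two comparisons, and $c(a)=0,c(b)=1,c(c)=2$ makes both comparisons go the same way. The correct statement is that among any three consecutive \emph{comparisons} (four vertices) at least one goes each way, which only yields the $1/3$--$2/3$ bound \emph{asymptotically on a single long chain}. It says nothing about short chains, and your singletons may lie on many parallel short chains rather than one long one.

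Here is a concrete instance that breaks the $3|V|/4$ bound for the rules (i)--(iii) exactly as you stated them. Take a matched pair $\{r_1,r_2\}$, give $r_2$ children $w_1,\dots,w_k$, give each $w_i$ a single child $x_i$, and each $x_i$ a single child $y_i$ (so $|V|=2+3k$). Assign IDs increasing along $r_1,r_2,w_i,x_i,y_i$; then the min-ID forest is exactly this tree. Take the proper 3-coloring $c(r_1)=0$, $c(r_2)=1$, $c(w_i)=0$, $c(x_i)=1$, $c(y_i)=2$. Rule~(i) colors $r_1$ red, $r_2$ blue. Rule~(ii) applied at $r_2$ splits the $w_i$ evenly. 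Each $x_i$ and each $y_i$ is a singleton handled by rule~(iii), and since $c(x_i)>c(w_i)$ and $c(y_i)>c(x_i)$, \emph{all} $2k$ of them are blue. For $k=4$ this gives $11$ blue out of $14$ vertices, already exceeding $3|V|/4$; as $k\to\infty$ the blue fraction tends to $5/6$. Your analysis uses no property of the Cole--Vishkin output beyond properness, so this is a genuine counterexample to the argument as written.

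For comparison, the paper avoids this difficulty by not trying to extract a balanced 2-coloring from a 3-coloring at all. It lets each node pick an \emph{arbitrary} outgoing edge, splits vertices into heavy (in-degree $\ge 10$) and light, balances the heavy-centered stars directly, and on the bounded-degree light subgraph computes an MIS of its square to obtain clusters each of size $\ge 2$, which are then balanced internally. The key structural point is that every coloring unit has at least two vertices, so the per-unit discrepancy is at most $1$ against at least $2$, which is what your singleton rule lacks.
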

\begin{proof}
Let each node $v$ choose one of its edges in $G$ arbitrarily, and indicate this as an outgoing edge from $v$. Let $H$ be the spanning subgraph of $G$ defined by the set of all chosen edges. Call a vertex $u$ \textit{heavy} if its in-degree in $H$ is at least $10$, and \textit{light} otherwise. Since $H$ has at most $|V|$ outgoing edges, there are at most $|V|/10$ heavy vertices.
Let $H'$ be the subgraph of $H$ induced by light vertices. We handle vertices of $H'$ in two categories of isolated and non-isolated vertices. 

(A) Light vertices that are isolated in $H'$ must have their chosen outgoing edge connect to a heavy vertex. These outgoing edges define stars, at most one centered on each heavy vertex. Each heavy vertex computes a coloring of itself and all the isolated light edges that point to it, such that the number of colors in the star differ by at most $1$. This way, we have a discrepancy---i.e., the absolute difference in the number of nodes of the two colors---of at most $1$ in each star, and thus overall a discrepancy of at most $|V|/10$.

(B) Non-isolated vertices of $H'$ form a graph with minimum degree at least $1$ and maximum degree at most $11$. Compute a maximal independent set $S$ of $(H')^2$--- that is, the graph on vertices of $H'$ where we connect two of them if their distance is at most $2$ in $H'$---in $O(\log^* b)$ rounds, using Linial's classical algorithm\cite{linial1987LOCAL}. Then, each node of $H'$ that is not in $S$ chooses the closest node in $S$ as its cluster center. Since we have a maximal independent set of $(H')^2$, each node has a cluster center within distance $3$ in $H'$. Moreover, each cluster has at least two vertices, i.e., the cluster center and all of its neighbors, which is at least one neighbor. Each node in $S$ computes a coloring of the vertices of its own cluster, in a manner that the number of colors in the cluster differ by at most one. We have no cluster with a single vertex. Each cluster with $2$ vertices has no discrepancy and each cluster with $3$ or more vertices has discrepancy at most $1$. This means, the discrepancy in the coloring of $H'$ is at most $|V|/3$. 

Taking the discrepancies in the two parts into account, we have discrepancy at most $|V|(1/3+1/10) = 13|V|/30.$ Therefore, each color has at least $17/60|V| > |V|/4$ vertices.
\end{proof}

\begin{lemma}\label[lemma]{thm:balanced_coloring2}
Consider a cluster graph where no cluster is isolated, and each cluster has a unique $b$-bit identifier. Moreover, each cluster $\fC$ has a Steiner tree $T_\fC$ of diameter $R$, such that each node is in at most $O(\log n)$ of these Steiner trees. 
There is a deterministic distributed algorithm in the \congest model with $O(b)$-bit messages that, in $O((R + \log n) \cdot \log^* b)$ rounds, colors the clusters blue or red such that each color has at most a $3/4$ fraction of the clusters.
\end{lemma}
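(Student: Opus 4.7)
The plan is to simulate the vertex-level algorithm of \cref{thm:balanced_coloring} on the cluster graph, treating each cluster $\fC$ as a super-node whose identifier is the $b$-bit cluster ID. The central primitive I will set up is that one round of communication on the cluster graph can be carried out in $O(R+\log n)$ rounds of the underlying network $G$: using the Steiner tree $T_\fC$, the root of each cluster can broadcast an $O(b)$-bit message to every node of $\fC$ in $O(R)$ rounds; boundary nodes then forward the message across $G$-edges crossing into neighboring clusters in a single round; and the reverse-direction convergecast aggregating answers back to each root is pipelined in parallel across all clusters. Because each $G$-edge lies in at most $O(\log n)$ Steiner trees and each $T_\fC$ has diameter $O(R)$, \cref{cor:pipelining_with_overlap} will ensure that all broadcasts and convergecasts for all clusters finish simultaneously in $O(R+\log n)$ rounds.

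With this simulation primitive in place, I will run the three-step procedure from \cref{thm:balanced_coloring} on the cluster graph. First, every cluster picks an arbitrary outgoing cluster-edge and notifies its target; each cluster then uses one convergecast to count its in-degree in the resulting digraph $H$ and declares itself heavy (in-degree at least $10$) or light, yielding the induced subgraph $H'$ on light clusters. Second, every isolated light cluster in $H'$ is colored by the heavy cluster it points to, which locally partitions its incoming star into two nearly balanced color classes and broadcasts the colors back. Third, the non-isolated component of $H'$ has maximum cluster-degree at most $11$, so I compute a maximal independent set of $(H')^2$ by running Linial's algorithm; starting from $b$-bit identifiers on a graph of bounded degree, this finishes in $O(\log^* b)$ simulated cluster-rounds, after which each MIS cluster serves as the center of a micro-cluster of size at least $2$ and at most $O(1)$ and evenly two-colors its members. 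The discrepancy accounting of \cref{thm:balanced_coloring} then transfers verbatim, since that argument only relies on the size bound on heavy super-nodes and on every micro-cluster having at least two members.

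Multiplying the $O(\log^* b)$ simulated cluster-rounds by the $O(R+\log n)$ cost of each simulated round will give the claimed $O((R+\log n)\log^* b)$ total round complexity. The main subtlety I expect is verifying that every per-round piece of communication at the cluster level---the outgoing-edge choice, the in-degree counting, the heavy-vertex star recoloring, and the Linial updates---can be compressed into a constant number of $O(b)$-bit messages per cluster per simulated round and still be pipelined through non-disjoint Steiner trees. This is where the $O(\log n)$-overlap hypothesis combined with \cref{cor:pipelining_with_overlap} is essential, and it is made palatable by the bounded-degree structure of $H'$, which guarantees that each cluster only needs to exchange information with $O(1)$ neighbors that matter in any given step and can therefore aggregate all needed state into a short summary at its root before the next broadcast.
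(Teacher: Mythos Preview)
Your overall plan matches the paper's approach: simulate \cref{thm:balanced_coloring} on the cluster graph, with each simulated round costing $O(R+\log n)$ via \cref{cor:pipelining_with_overlap}. The outgoing-edge selection, heavy/light classification, and the Linial-based MIS on the bounded-degree light subgraph $H'$ all go through essentially as you sketch.

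There is, however, one genuine gap. In the step where a heavy cluster colors itself together with all isolated light clusters whose outgoing edge points to it, you write that the heavy cluster ``locally partitions its incoming star into two nearly balanced color classes and broadcasts the colors back,'' and you justify the communication cost by appealing to the bounded-degree structure of $H'$. But heavy clusters are not in $H'$, and a heavy cluster can have an \emph{unbounded} number of incoming isolated light clusters --- potentially $\Theta(n)$ of them. The root of the heavy cluster cannot learn all of their identities within $O(R+\log n)$ rounds, so it cannot centrally assign colors and ``broadcast them back''; and your ``$O(1)$ neighbors that matter'' argument simply does not apply here. This is exactly the piece the paper has to work for: it avoids centralizing by initiating a one-bit token at each incoming physical edge and convergecasting these tokens up the heavy cluster's Steiner tree, pairing tokens locally whenever two meet (one gets red, one blue) and forwarding only the leftover unpaired token. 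This pairing-on-the-fly scheme achieves discrepancy at most one without any node ever seeing more than $O(1)$ tokens at once, and it runs in $O(R+\log n)$ rounds using the $O(\log n)$-overlap bound. Without this idea (or an equivalent decentralized balancing trick), your Step~2 does not fit in the stated time budget.
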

\begin{proof} We follow an approach similar to \Cref{thm:balanced_coloring}, but we have to deal with two issues: (1) nodes are replaced with clusters of weak-diameter $R$, (2) the Steiner trees of the clusters are not disjoint, and each node can be in up to $O(\log n)$ Steiner trees.

\paragraph{Selecting An Outgoing Edge Per Cluster} First, we select one outgoing edge for each cluster, in the cluster graph. For that, any two neighbors exchange their cluster identifier, in one round. Then, any node $w$ in a cluster $\fC$ that is neighboring some node $w'$ in another cluster $\fC'$ creates a proposed outgoing edge $\langle\fC'.ID, w.ID, w'.ID\rangle$. We then convergecast the minimum of these proposals to the root of the cluster $\fC$. We do this for all the clusters at the same time, in $O(R+\log n)$ rounds, using the pipeling of \Cref{cor:pipelining_with_overlap}. At the end, the center of $\fC$ knows the winning proposal $\langle\fC'.ID, w.ID, w'.ID\rangle$ that connects it to some other cluster $\fC'$. In this case, the outgoing edge in the cluster graph is $\fC\rightarrow \fC'$, and we consider the edge $w\rightarrow w'$ as the physical embodiment of this outgoing edge. By performing a broadcast in each cluster, and all clusters at the same time, we can inform all nodes of the cluster of the selected single outgoing edge, in $O(R+\log n)$ rounds, using the pipelining of \Cref{cor:pipelining_with_overlap}. In particular, node $w$ learns that its edge $\{w, w'\}$ is selected as the outgoing edge $w\rightarrow w'$ of its cluster. It can also inform $w'$ about this, in one additional round.

\paragraph{Identifying Light and Heavy Clusters} We call a cluster heavy if it has at least $10$ incoming edges, and light otherwise. Our next task is to inform each cluster whether it is heavy or light. For each cluster $\fC'$, each node $w' \in \fC'$ that has an incoming edge $w'\leftarrow w$ from another cluster starts a message describing this edge as $\langle\fC.ID, w'.ID, w.ID\rangle$. We then convergecast all of these incoming edge messages in each cluster, or at most $11$ of them, if there are more. This can be done for all clusters at the same time in $O(R+\log n)$ rounds, using the pipelining of \Cref{cor:pipelining_with_overlap}. At the end, each cluster center knows whether it has more than $11$ incoming edges or not, i.e., whether it is heavy or not. Moreover, every light node knows all of its incoming edges. Using one broadcast per cluster, by \cref{cor:pipelining_with_overlap}, we can also inform all nodes of the cluster whether the cluster is heavy or light, and about all the incoming edges if it is light, in $O(R+\log n)$ additional rounds.

\paragraph{Coloring Non-Isolated Light Clusters} 
Consider all the incoming and outgoing edges as undirected edges, and consider the subgraph $H$ made of light clusters who have at least one such edge. By repeating the above communication scheme, we can identify all such clusters and in fact implement one round of the \congest model on the graph $H'$, in $O(R+\log n)$ rounds of communication on the base graph. At this point, it is easy to follow the steps of \Cref{thm:balanced_coloring} to color light clusters of $H'$: we compute an MIS of $H^2$, in $O((R+\log n) \log^* b)$ rounds, and then each MIS cluster $\fC$ has to determine the red/blue colors of itself and its neighboring clusters. It does so in a way that the discrepancy between the number of red and blue colors that $\fC$ gives out is at most $1$.

\paragraph{Coloring Heavy Clusters, and their Incoming Isolated Light Clusters }
What is left is coloring each heavy cluster $\fC$, as well as all the light clusters isolated in $H'$ and whose selected outgoing edge was therefore to a heavy cluster. Each cluster $\fC$ does this on its own, for itself, and all such light clusters that have an outgoing edge to $\fC$. First, we initiate a token (carrying $O(1)$ bits), at the physical embodiment of every such incoming edge. We also start one token at the root of the heavy cluster. Then, we convergecast these tokens on the Steiner tree of $\fC$, in a synchronized manner from depth $R$ to the root. That is, we start with nodes of depth $R$, they send their tokens to nodes of depth $R-1$ in one round, they send their tokens to the nodes of depth $R-2$ in another round, and so on. We do this for all heavy clusters at the same time, in $O(R+\log n)$ rounds, by allocating $O(1)$ bits of the messages of each round to each of the Steiner trees that includes the edge. Notice that this is possible as we have $b=\Omega(\log n)$-bit messages and each node is in at most $O(\log n)$ trees.
Now, for each Steiner tree, every time that a node $v$ on this Steiner tree receives some tokens from its children, node $v$ pairs the tokens up with each other in pairs of two, except for leaving at most one token not paired if the number is odd. Tokens that are paired are sent backward along the same tree, from $v$ to the physical incoming edge that initiated the token. In each pair, one token carries color red and the other carries token blue. If the number of tokens that $v$ had received was odd, then it forwards the one remaining unpaired token to its parent in the Steiner tree, in the next round. If a token is left unpaired at the root, we color it arbitrarily. After performing this for $2R$ rounds, all tokens are paired up, with the exception of at most one token in the case their number is odd. 
Moreover, they have arrived back at the incoming endpoint of the physical incoming edge. Then, using one additional round we can send the color to the other endpoint of the physical incoming edge, and using another convergecast in each cluster, we can inform the center of each light cluster (that had no neighbor in $H'$) of the color that it received in this scheme, in $O(R+\log n)$ rounds, for all clusters at the same time, using the pipelining of \Cref{cor:pipelining_with_overlap}. This concludes the description of the procedure that implements the balanced coloring algorithm of \Cref{thm:balanced_coloring} on the clusters, in $O((R+\log n)\log^* b)$ rounds.
\end{proof}
\begin{remark}
Any deterministic \local-model algorithm for balanced coloring needs $\Omega(\log^* n)$ rounds, even on a cycle. 
\end{remark}
\begin{proof}
Suppose for the sake of contradiction that there is a deterministic algorithm $\mathcal{A}$ that on any $n$-node cycle with $O(\log n)$-bit identifiers, in $T\leq (\log^* n)/100$ rounds, computes a balanced coloring, such that at most $3/4$ of the nodes are blue and at most $3/4$ of them are red. Consider $n$ separate $n$-node graphs, where the $i^{th}$ one has identifiers in $[(i-1)\cdot n + 1, i\cdot n]$. By Linial's well-known lower bound~\cite{linial1987LOCAL}, we know that on each cycle, there is a configuration of the identifiers such that algorithm $\mathcal{A}$, when run on that cycle with those identifiers, colors some consecutive set of at least $H\geq (\log^* n)/5$ nodes on the cycle all blue, or all red. This is because, otherwise, we could then extend the coloring of $\mathcal{A}$ to a $4$-coloring, by processing each consecutive monochromatic path in time at most $H$ and computing a $2$-coloring of its vertices. This would result in a $4$-coloring of the cycle in $H+T \leq (\log^* n)/3$ rounds, which would be in contradiction with Linial's lower bound~\cite{linial1987LOCAL}. Hence, for each of the cycles, there is some configuration of the IDs that leads to at least one  consecutive set of at least $H\geq \log^* n/5$ nodes being colored all red or all blue. Take one such consecutive set of nodes $H$ that are colored monochromatically, for each cycle. We call these \textit{monochromatic paths}. 
Now, we have $n$ monochromatic paths, one for each cycle, and thus at least $n/2$ of them have the same red or blue color, say blue, without loss of generality. Take $n/H \ll n/2$ of these monochromatic paths, colored blue in their original cycle with certain ID assignments, and append them to each other such that we get a cycle of length $n$. If we run $\mathcal{A}$ on this new cycle, with running time at most $(\log^* n)/100$, for each monochromatic path, only nodes that are within distance at most $(\log^* n)/100$ of the other paths may notice that they are not in their original cycle. Hence, at most $(\log^* n)/50$ nodes switch their cycle per path. That is a total of at most $\frac{n}{\log^* n/5} \cdot \frac{\log^* n}{50} = \frac{n}{10}$ nodes. Hence, we have at most $n/10$ red nodes. Hence, on a certain $n$ node cycle with ID assignments from $\{1, \dots, n^2\}$, algorithm $\mathcal{A}$ fails to compute a coloring with at most $3/4$ of the nodes in each color. Having arrived at a contradiction from the assumption of $\mathcal{A}$ having round complexity $T\leq (\log^* n)/100$, we conclude that any algorithm for balanced coloring (with $3/4$ of the nodes in each color) needs round complexity $\Omega(\log^* n)$. Similar lower bound holds for any other constant balance requirement. 
\end{proof}

\subsection{Incorporating Balanced Coloring in the Algorithm of Rozho\v{n} and Ghaffari}

Next, we show how to incorporate \cref{thm:balanced_coloring} in the algorithm of Rozho\v{n} and Ghaffari~\cite{RozhonG19}. 
This implies the following theorem, which provides a decomposition that, compared to the original algorithm of Rozho\v{n} and Ghaffari, has a much better dependency on the number of bits in the identifiers.

\begin{theorem}\label{thm:ND-coloring-slow}
Consider an arbitrary graph $G$ on $n$ nodes where each node has a unique $b$-bit identifier, where $b=\Omega(\log n)$. 
There is a deterministic distributed algorithm that computes a network decomposition of $G$ with $O(\log n)$ colors and weak-diameter $O(\log^3 n)$ in $O(\log^8 n + (\log^*b) \cdot \log^5 n)$ rounds of the \congest model, using $O(b)$-bit messages. 

Moreover, for each cluster $\fC$ of vertices, we have a Steiner tree $T_\fC$ with radius $O(\log^3 n)$ in $G$, for which the set of terminal nodes is equal to $\fC$. 
Each vertex of $G$ is in $O(\log n)$ Steiner trees of any given color out of the $O(\log n)$ color classes.  
\end{theorem}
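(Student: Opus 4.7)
The plan is to modify the ball-carving procedure of Rozhoň and Ghaffari~\cite{RozhonG19} (recalled in \Cref{sec:algorithm_intuition}), replacing its reliance on the identifier bits with a balanced $2$-coloring of the current cluster graph, computed afresh at the start of every phase via \Cref{thm:balanced_coloring2}. Iterating the resulting ball carving $O(\log n)$ times, exactly as in the RG reduction from ball carving to decomposition, then produces the desired network decomposition.

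Concretely, at the start of each phase the algorithm invokes \Cref{thm:balanced_coloring2} on each nontrivial connected component of the current cluster graph, obtaining a red/blue coloring in which neither color covers more than a $3/4$-fraction of the component. The rest of the phase is unchanged from RG: blue clusters absorb proposing boundary nodes from red neighbors, while a blue cluster that collects too few proposals kills its proposing neighbors and stalls until the end of the phase. After each phase no red cluster is adjacent to a blue cluster, so every connected component that previously contained both colors splits into sub-components each of size at most $3/4$ of the original. The new invariant that replaces RG's bit-agreement invariant is therefore that every connected component of the cluster graph shrinks by at least a $3/4$-factor per phase; after $O(\log_{4/3} n)=O(\log n)$ phases each component is a singleton, so no two clusters are adjacent. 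The per-phase analysis is essentially that of RG with the bit-count $b$ replaced by $\Theta(\log n)$: a growth-threshold of $\Theta(1/\log n)$ caps the number of growth steps per phase at $O(\log^2 n)$ via the trivial size bound $n$, and the total fraction of deleted vertices across all phases remains $\leq 1/2$.

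The round complexity then splits into two additive parts. The growth steps retain the RG \congest implementation verbatim, and summing across $O(\log n)$ phases per color class and $O(\log n)$ color classes gives the $O(\log^8 n)$ term. The $O(\log n)$ balanced-coloring invocations per color class each run on a cluster graph whose Steiner trees have radius $O(\log^3 n)$ in $G$ and per-vertex overlap $O(\log n)$, so each costs $O(\log^3 n\cdot \log^* b)$ rounds via \Cref{thm:balanced_coloring2}; summed over all color classes and phases this yields the $O((\log^* b)\cdot \log^5 n)$ term. The $O(\log^3 n)$ weak-diameter and $O(\log n)$ Steiner-tree overlap of the output follow exactly as in RG, since each step adds at most $2$ to the diameter of each Steiner tree and each vertex joins at most $O(\log n)$ different clusters during the construction.

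The main obstacle is to confirm that the preconditions of \Cref{thm:balanced_coloring2} --- no isolated input cluster, Steiner trees of radius $O(\log^3 n)$ in $G$, per-vertex Steiner-tree overlap $O(\log n)$ --- are maintained at the start of every phase, so that the coloring step can be invoked at the claimed cost inside the \congest model. These conditions are preserved by the RG ball-carving invariants (which the modification alters only in the coloring step), together with the observation that components of size one correspond to already-isolated clusters and may safely be skipped.
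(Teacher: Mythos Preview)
Your proposal is correct and follows essentially the same approach as the paper: replace the identifier-bit coloring in each phase of the Rozho\v{n}--Ghaffari ball carving by a balanced red/blue coloring computed via \Cref{thm:balanced_coloring2}, observe that each nontrivial connected component of the cluster graph shrinks by a factor $3/4$ per phase so that $O(\log_{4/3} n)$ phases suffice, and then read off the round complexity as the sum of the RG \congest cost with $b$ replaced by $\Theta(\log n)$ plus the per-phase coloring cost. Your accounting of the two additive terms and the treatment of isolated clusters match the paper's argument.
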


\begin{proof}
We show how to incorporate \cref{thm:balanced_coloring} in the algorithm of Rozho\v{n} and Ghaffari~\cite{RozhonG19}. 
Note that their algorithm was explained in \cref{sec:algorithm_intuition}. Recall that in the $i$-th phase of their algorithm, each cluster is given a color based on the $i$-th bit of its identifier. 
After the phase, clusters of different colors are disconnected and will never be connected again. 

Now in each phase $i$, instead of coloring based on the $i$-th bit, we invoke \cref{thm:balanced_coloring} to get a coloring such that in each connected component of clusters consisting of at least two clusters, at most $3/4$ fraction of clusters is colored either blue or red. 
Since at the end of the phase we disconnect all blue clusters from red clusters, each connected component of clusters containing at least $2$ clusters is split into several new connected components, such that the number of clusters in each new connected component is at most $3/4$ of the number of clusters in the original connected component. 
Hence, if we set the number of phases of the algorithm to be $\log_{4/3} n$, at the end of the algorithm, each connected component of clusters contains only one cluster. 

The dependence on the number of bits in the algorithm of Rozho\v{n} and Ghaffari comes from the fact that we need $b$ phases. 
In particular, their algorithm needs $b$ phases, each with $O(b\log n)$ steps, and as such, it computes a $(\log n, b^2 \log n)$ weak-diameter network decomposition in $O(b^4 \log^4 n)$ rounds.

Using the balanced coloring scheme, we can now set $b$ to $\log_{4/3} n$, and thus get a $(\log n, \log^3 n)$ weak-diameter network decomposition in $O(\log^8 n)$ rounds, modulo that we also need to spend $O(\log^3 n\log^* b)$ additional rounds in each phase to compute the coloring, using \cref{thm:balanced_coloring}. 
Hence, the overall round complexity of the algorithm is $O(\log^8 n + (\log^5 n) \cdot \log^* b)$ rounds. 
\end{proof}

\begin{remark}
The above \cref{thm:log5balanced} shows that in order to construct a network decomposition in the \congest model, we do not need to assume $O(\log n)$ bit unique identifiers, but instead it suffices to have a port numbering of the edges and access to an oracle that colors a locally constructed graph of constant degree with constantly many colors. 
\end{remark}

\subsection{Applications in the Shattering Framework}

We now present two corollaries of the above statements that are later improved in \cref{sec:faster_construction_without_identifiers}.

\begin{corollary}
\label{thm:mis_randomized_slow}
There is a randomized distributed algorithm that computes a maximal independent set in $O(\log \Delta \cdot \log\log n + \log^9 \log n)$ rounds of the \congest model, with high probability.
\end{corollary}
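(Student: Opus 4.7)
The plan is to follow the standard shattering framework, now using the new identifier-independent decomposition in the post-shattering phase. In the pre-shattering phase, we run the \congest-model adaptation of Ghaffari's randomized MIS algorithm due to Ghaffari and Portmann \cite{ghaffariPortmann2019} for $O(\log\Delta\cdot\log\log n)$ rounds. A standard shattering analysis shows that, with high probability in $n$, the set $B$ of vertices that are still undecided (i.e., neither in the MIS nor dominated) induces a subgraph $G[B]$ whose connected components each have at most $N=\poly(\log n)$ vertices. It then suffices to compute an MIS of $G[B]$ and append it to what was already decided.

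In the post-shattering phase, we handle each connected component of $G[B]$ separately. Since different components share no vertices and no edges of $G[B]$, the same \congest-algorithm can be executed in parallel on all components without congestion conflicts internal to the components. On a single component of size $N$, using the original $b=O(\log n)$-bit identifiers, we invoke \Cref{thm:ND-coloring-slow} to build a $(\log N,\log^3 N)$ weak-diameter network decomposition in
\[
O(\log^8 N+(\log^* b)\cdot \log^5 N)=O(\log^8\log n+(\log^*\log n)\cdot\log^5\log n)=O(\log^8\log n)
\]
rounds of \congest. The identifier-independence is what makes this step live inside the $\poly(\log\log n)$ regime: a naïve use of the old Rozho\v{n}--Ghaffari decomposition would pay $\poly(b)=\poly(\log n)$ and wipe out the shattering gain.

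Given this decomposition, we compute the MIS on each component via the deterministic procedure of \cref{sec:mis} (Censor-Hillel et al. pairwise-independence derandomization, executed color-class by color-class and pipelined over the Steiner trees using \cref{cor:pipelining_with_overlap}). This costs $O(D\cdot \log^2 N)=O(\log^5 N)$ per color and $O(\log^6 N)=O(\log^6\log n)$ in total. Summing the pre- and post-shattering costs gives $O(\log\Delta\cdot\log\log n+\log^9\log n)$, where the $\log^9$ comfortably absorbs the $\log^8\log n$ decomposition cost, the $\log^6\log n$ MIS cost, and the polylog overhead of combining the two together with the already decided vertices outside $B$.

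The main obstacle is coordinating the randomized and deterministic halves inside \congest: one has to verify that the shattering lemma truly produces components of size $\poly(\log n)$ with probability at least $1-1/\poly(n)$ (not merely $1-1/\poly(N)$), and that all per-component deterministic steps can be executed simultaneously on the whole graph. The first is a standard calculation in Ghaffari-style shattering arguments; the second follows because the components of $G[B]$ are edge-disjoint in $G[B]$, and the Steiner trees produced by \Cref{thm:ND-coloring-slow} have per-vertex overlap $O(\log N)$, so the pipelining of \cref{cor:pipelining_with_overlap} stays within the $\Theta(\log n)$-bit message budget of \congest.
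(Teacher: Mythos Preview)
Your argument has a genuine gap at the shattering step. The standard shattering lemma for Ghaffari's MIS algorithm (and its \congest adaptation) does \emph{not} guarantee that the remaining components of $G[B]$ have $\poly(\log n)$ vertices. What it guarantees is that each component has at most $O(\Delta^4\log n)$ vertices and, crucially, that any $5$-independent set in a component has size $O(\log n)$. When $\Delta$ is large (say $\Delta=n^{\Theta(1)}$), the component-size bound is useless, so you cannot simply set $N=\poly(\log n)$ and invoke \Cref{thm:ND-coloring-slow} directly on $G[B]$.

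The paper handles this by an additional contraction step that you are missing: it computes a $(6,O(\log\log n))$ ruling set of each component in $O(\log\log n)$ rounds, which clusters the component into at most $O(\log n)$ clusters of strong diameter $O(\log\log n)$. The decomposition of \Cref{thm:ND-coloring-slow} is then run on this \emph{cluster graph} (which has $O(\log n)$ virtual vertices), incurring an extra $R_0=O(\log\log n)$ slowdown because each virtual vertex is a small-diameter tree; this is exactly where the $\log^9\log n$ term comes from ($O(\log\log n)\cdot O(\log^8\log n)$). Your accounting never produces a $\log^9$ term, which is a hint that something structural is missing rather than just a loose bound. A secondary difference: after building the decomposition, the paper does not use the Censor-Hillel derandomization; instead it runs $O(\log n)$ independent randomized instances of Ghaffari's algorithm per cluster (single-bit messages, so they run in parallel in \congest) and aggregates a success indicator. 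This avoids having to argue that the $O(\log^2 N)$ derandomization cost is with respect to the component size rather than the degree.
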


\begin{proof}
First, we run the randomized MIS algorithm of Ghaffari~\cite{ghaffari2016MIS} for $O(\log \Delta)$ rounds. As proven in \cite[Lemma 4.2]{ghaffari2016MIS}, this algorithm computes an independent set $S$ such that, after removing all nodes of $S$ and those that have a neighbor in $S$ from the graph, we are left with ``small" connected components, with high probability. Here, small components shows that (A) each component has at most $O(\Delta^4 \log n)$ nodes, (B) any $5$-independent set in each component --- a set where any two nodes have distance at least $5$ --- has size at most $O(\log n)$.

At this point, we run the \congest model randomized ruling set algorithm of Ghaffari~\cite[Lemma 2.2]{ghaffari2019MIS} which computes a $(6, O(\log\log n))$ ruling set of each component,  in $O(\log\log n)$ rounds, with high probability. That is, for each component $\fC$, we get a ruling set $T$ such that (I) each two vertices of the ruling set have distance at least $6$ from each other, (II) each node $v$ in the component knows the closest node $T$ to itself (ties broken arbitrarily) and that node is within distance $O(\log\log n)$. This induces a clustering of the component, i.e., a partitioning of all vertices into disjoint clusters, each with radius $O(\log\log n)$: there is one cluster for each node $v\in T$ and it includes all nodes $u$ in the component for which $v$ is the closest node in $T$ to $u$.

Now, we run the network decomposition algorithm of \Cref{thm:ND-coloring-slow} on the cluster graph where each virtual vertex is a cluster of diameter $O(\log\log n)$ around $u\in T$. 
This runs in $O(\log^9\log n)$ rounds; the additional slowdown of $O(\log\log n)$ comes from the fact that each vertex of the cluster graph is actually a cluster of strong diameter $O(\log\log n)$. 
The fact that the whole construction still works is verified in \cref{rem:change_nodes_to_trees}. 
We get a partition of the cluster graph into vertex-disjoint clusters, each with weak-diameter $O(\log^3\log n)$. In the original graph, this means clusters of weak-diameter $O(\log^4\log n)$, colored with $O(\log\log n)$ colors and such that adjacent clusters have different colors.

We now process the color classes of the network decomposition one by one, and compute the MIS for each of them separately. When we process a color, each cluster of that color works independently, as follows: we first remove nodes of the cluster that already have a neighbor in the MIS. Then, we run $O(\log n)$ independent instances of the MIS algorithm of Ghaffari~\cite{ghaffari2016MIS}, each for $R=O(\log \Delta+\log\log n)$ rounds, on this cluster. We note that since this algorithm works with single-bit messages, we can run $O(\log n)$ independent instances of it in parallel in the \congest model, with no round complexity overhead. The analysis of this algorithm \cite[Theorem 4.2]{ghaffari2016MIS} shows that in each run, each node is either in the computed MIS or has a neighbor in it, with probability at least $1-2^{-\Theta(R)}$. Since the cluster, and even the entire component, has at most $N=O(\Delta^4 \log n)$ nodes, each run succeeds to compute a correct MIS with probability at least $1-N2^{-\Theta(R)} = 1/(\Delta\log n)^{10}$. Then, we locally check each run to see if it produced a correct MIS, again using one-bit messages, so that all runs can be checked in parallel. Finally, we aggregate over a breadth first search tree of the cluster whether each run was successful or not, again using a single bit indicator for each run. Since we have $O(\log n)$ runs, at least one is successful, with high probability. Since the diameter of the cluster is $O(\log^4\log n)$, we can aggregate these indicators in $O(\log^4\log n)$ additional rounds.  We pick one successful run, add the computed MIS to the overall independent set, and we can then proceed to the next color of the decomposition.

Processing each color takes $O(\log \Delta+\log^4\log n)$ rounds. Since we have $O(\log\log n)$ colors in the decomposition, the round complexity of computing the MIS atop the given decomposition is $O(\log \Delta\cdot \log \log n+\log^8\log n)$. We also spent $O(\log^8 \log n)$ rounds to compute the decomposition, which makes the overall round complexity $O(\log \Delta\cdot \log \log n+\log^8\log n)$.
\end{proof}

We get a similar result for $\Delta + 1$ coloring: 
\begin{corollary}
\label{thm:coloring-in-shattering_slow}
There is a randomized distributed algorithm, in the \congest model, that computes a $\Delta+1$ coloring in any $n$-node graph with maximum degree at most $\Delta$ in $O(\log \Delta + \log^9\log n)$ rounds, with high probability.
\end{corollary}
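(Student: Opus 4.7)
The plan is to mirror the proof of \cref{thm:mis_randomized_slow}, replacing the MIS primitives by their $\Delta+1$ coloring counterparts. First, I would run a standard randomized $\Delta+1$ coloring algorithm in \congest for $O(\log \Delta)$ rounds and invoke its standard shattering guarantee: with high probability, the subgraph $H$ on still-uncolored vertices decomposes into connected components of size $\poly(\Delta)\cdot \log n$ in which every $5$-independent set has size $O(\log n)$. This is the only randomized step, and it is also the only place where the $\log \Delta$ term enters the final complexity.

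Next, apply the randomized ruling set algorithm of \cite[Lemma 2.2]{ghaffari2019MIS} inside each component of $H$ to obtain a $(6, O(\log \log n))$ ruling set in $O(\log \log n)$ rounds w.h.p., inducing a clustering of uncolored vertices into clusters of weak-diameter $O(\log \log n)$; each component contributes at most $O(\log n)$ such clusters. Then invoke \cref{thm:ND-coloring-slow} on the resulting cluster graph, paying the $O(\log \log n)$ simulation overhead per cluster-graph round to obtain an $(O(\log \log n), O(\log^3 \log n))$ network decomposition in $O(\log^9 \log n)$ rounds on the original graph; at this point each decomposition cluster has weak-diameter $O(\log^4 \log n)$ and comes with a Steiner tree, exactly as in the MIS proof.

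Finally, I would process the $O(\log \log n)$ color classes of the decomposition one by one. For each class, every cluster simultaneously faces a list coloring instance: each still-uncolored vertex $v$ has a list $L(v)\subseteq\{1,\dots,\Delta+1\}$ of size at least $\deg_H(v)+1$, since colors already chosen in the shattering phase or in earlier classes are simply removed from $L(v)$. Solve these per-cluster instances by running the deterministic \congest list coloring algorithm of Bamberger et al.~\cite{bamberger2020efficient} on each cluster using its Steiner tree, together with the pipelining tool \cref{cor:pipelining_with_overlap} to handle the $O(\log \log n)$-overlap of the Steiner trees. Because the algorithm is confined to a single shattered component of size $\poly(\Delta)\cdot\log n$ and runs over clusters of small weak-diameter, the cost per color class is $\poly(\log \log n)$, keeping the entire deterministic phase within $O(\log^9 \log n)$ rounds.

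The main obstacle I expect is that, in contrast to the MIS case, one cannot simply run $O(\log n)$ parallel randomized coloring instances per cluster and keep one that happens to succeed: each coloring message carries $\Omega(\log \Delta)$ bits, so naively running $\log n$ copies in parallel would blow up the bandwidth by a factor of $\log n$. Keeping the dependence on $\Delta$ purely additive therefore forces a deterministic completion inside each cluster, and the delicate point to verify is that the complexity of the chosen list coloring primitive depends only polylogarithmically on the effective component size (which is $\poly(\log n)$ after shattering, with $\log \Delta$ entering at worst inside a logarithm) and on the cluster weak-diameter, rather than incurring a multiplicative $\poly(\log \Delta)$ factor. Assuming this is established, summing the shattering, decomposition, and per-color costs yields the claimed $O(\log \Delta + \log^9 \log n)$ round complexity.
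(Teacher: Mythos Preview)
Your proposal is correct and follows essentially the same route as the paper, which itself only gives a one-line sketch deferring to the \congest shattering-based coloring of \cite[Theorem~1.3]{ghaffari2019MIS}; your expansion (shattering $\to$ ruling set $\to$ \cref{thm:ND-coloring-slow} on the cluster graph $\to$ per-color deterministic list coloring) is exactly the intended structure. The only minor deviation is that for the completion step you invoke the list-coloring primitive of Bamberger et~al.\ directly, whereas the paper simply points to the black box of \cite{ghaffari2019MIS}; both choices are legitimate, and your observation that the parallel-instances trick from the MIS proof fails for coloring (due to $\Omega(\log\Delta)$-bit messages) is precisely why a deterministic completion is needed here.
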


\begin{proof}[Proof sketch]
The proof follows in a similar manner as the proof of \Cref{thm:mis_randomized_slow}, by incorporating the network decomposition that uses balanced coloring into the \congest-model shattering-based coloring algorithm of \cite[Theorem 1.3]{ghaffari2019MIS}.
\end{proof}

\section{Faster Identifier-Independent Network Decomposition}
\label{sec:faster_construction_without_identifiers}

In this section, we show how to put the two improvements of \cref{sec:log5} and \cref{sec:handling_identifiers} together. 
The main result is that a network decomposition can be constructed with a round complexity of $O(\log^5 n + \log^4n \log^* b)$.

\subsection{Incorporating balanced coloring in the analysis of \cref{thm:log5}}

Here, we prove a formal version of \cref{thm:log5balanced_informal}. 
\begin{theorem}
\label{thm:log5balanced}
Consider an arbitrary graph $G$ on $n$ nodes where each node has a unique $b$-bit identifier, where $b=\Omega(\log n)$. 
There is a deterministic distributed algorithm that computes a network decomposition of $G$ with $O(\log n)$ colors and weak-diameter $O(\log^2 n)$, in $O(\log^5 n + (\log^*b)\log^4 n )$ rounds of the \congest model with $b$-bit messages. 

Moreover, for each cluster $\fC$ of vertices, we have a Steiner tree $T_\fC$ with radius $O(\log^2 n)$ in $G$, for which the set of terminal nodes is equal to $\fC$. 
Each vertex of $G$ is in at most $O(\log n)$ Steiner trees of each color.
\end{theorem}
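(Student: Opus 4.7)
The plan is to combine the token-based accelerated ball carving of \cref{sec:algorithm} with the balanced-coloring technique of \cref{sec:balanced_coloring}. We modify the ball carving by replacing the static identifier bit $\id_{\lev(\fC)+1}(\fC)$---which determines whether a cluster takes the proposing (red) or accepting (blue) role in a phase---by a color assigned via a fresh invocation of \cref{thm:balanced_coloring2}. Specifically, at the start of each phase, for every same-level connected component of clusters of size at least two, we run \cref{thm:balanced_coloring2} on that component, and the resulting color of each cluster serves as its ``bit'' for this phase. Tie-breaking when a vertex has multiple eligible lower-level neighbors uses any deterministic rule (e.g.\ smallest cluster identifier), as this choice does not affect the transcript-tree argument.

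Because \cref{thm:balanced_coloring2} guarantees that neither color captures more than a $3/4$-fraction of any nontrivial same-level component, we can take the maximum level to be $L = O(\log n)$ in place of $b$: after $\log_{4/3} n$ level-advancements, any surviving same-level connected component has collapsed to a single cluster. With $L = O(\log n)$, the number of phases is $2(L + \log n) = O(\log n)$ and each phase has $O(\log n)$ steps. The token and potential analysis of \cref{lem:num_of_tokens_increases,lem:potential_increases,lem:node_in_few_clusters,lem:bound_overall_num_of_tokens,lem:small_num_of_deleted_nodes,prop:clusters_finished} then goes through with $b$ replaced by $O(\log n)$, yielding at most $|S|/2$ deletions via \cref{lem:small_num_of_deleted_nodes} and weak-diameter $O(\log^2 n)$ via \cref{lem:small_weak_diameter}.

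The main obstacle is adapting the transcript-tree argument of \cref{sec:tree}. The transcript tree becomes $O(\log n)$-ary of depth $L$, and a cluster $\fC$ that stalls in phase $i$ (advancing from level $d$ to $d+1$) is remapped to the $(2i + c)$-th child of its current node, where $c \in \{0,1\}$ is the balanced-coloring color that $\fC$ received in phase $i$ rather than a fixed identifier bit. \cref{prop:tree_stalling} and the inductive argument of \cref{prop:tree_invariant} go through unchanged, since at the moment $\fC$ stalls, all its surviving neighbors at its level carry the same balanced color. For the non-adjacency conclusion of \cref{prop:clusters_nonadjacent}, we cannot appeal to identifier uniqueness as in \cref{obs:tree_agree_on_bits}. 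Instead, we argue directly from the $3/4$-shrinkage: as long as two clusters remain adjacent and at the same level, they lie in the same same-level component and receive fresh balanced colorings each phase, and in a component of size exactly two the balanced guarantee forces opposite colors, separating the pair. A careful accounting---tracking the size of the same-level component containing each cluster and using that it shrinks by at least a factor of $3/4$ each time the cluster advances a level---shows that after $L = O(\log n)$ advancements no two clusters can remain adjacent at the final level.

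For the round complexity, each step of the modified ball carving is implementable in $O(\log^2 n)$ \congest rounds via pipelining (\cref{cor:pipelining_with_overlap}) over Steiner trees of radius $O(\log^2 n)$ in which each vertex appears $O(\log n)$ times, exactly as in the proof of \cref{thm:log4}. Each of the $O(\log n)$ phases additionally requires one invocation of \cref{thm:balanced_coloring2}, which by that lemma costs $O((\log^2 n + \log n) \log^* b) = O(\log^2 n \cdot \log^* b)$ rounds. Summing across phases gives $O(\log^4 n + \log^3 n \cdot \log^* b)$ rounds per ball carving, and iterating over the $O(\log n)$ color classes as in the derivation of \cref{thm:log5} from \cref{thm:log4} yields the claimed total of $O(\log^5 n + \log^4 n \cdot \log^* b)$ rounds.
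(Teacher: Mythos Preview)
Your overall plan—replace the identifier bit $\id_{\lev(\fC)+1}(\fC)$ by a balanced red/blue color and rerun the token and potential analysis with the level cap set to $O(\log n)$ instead of $b$—is exactly the paper's approach. However, there is a genuine missing ingredient in how you obtain the coloring, and it breaks the argument.

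You invoke \cref{thm:balanced_coloring2} on the cluster graph given by \emph{current} same-level adjacency. The difficulty is that two level-$d$ clusters that are not adjacent at the start of a phase can become adjacent mid-phase: each of them may eat vertices of higher-level clusters that sit between them (every vertex in a level-$>d$ cluster adjacent to a level-$d$ cluster proposes to it), and after a few steps their frontiers meet. At that moment the proposing rule needs a color for each of them, but under your scheme one of them may have been a singleton same-level component (and hence uncolored), or the two may have received colors in separate components with no joint balance guarantee. The algorithm is then ill-defined, \cref{prop:tree_stalling} fails (a stalling cluster can retain a same-level neighbor that is uncolored or of a color for which no separation was enforced), and the transcript-tree invariant breaks. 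The paper addresses precisely this via \cref{prop:balanced_coloring_insane_version}: it colors level-$i$ clusters with respect to the graph $G_i$ in which two level-$i$ clusters are adjacent whenever their distance in $G[U_{i+}]$ is at most $h\log^2 n$, with $h$ chosen large enough that $h\log^2 n$ dominates the total growth over all remaining steps of the entire ball carving. This guarantees that any two level-$i$ clusters that can ever become adjacent are already in the same $G_i$-component and are colored jointly, while a cluster left uncolored is provably isolated from all same-level clusters for the remainder of the algorithm. That is the key idea you are missing; it is also why the paper's transcript tree uses three branches per phase (red, blue, uncolored) rather than your two, and why the paper retains a cluster's color across phases in which it stays at the same level.

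Relatedly, your non-adjacency sketch (``the same-level component containing a cluster shrinks by a factor $3/4$ each time the cluster advances a level'') does not hold as stated: without the $h\log^2 n$ buffer, same-level components can merge mid-phase; and new clusters advance into a level while others linger there for many phases, so the component a cluster sits in is not a monotonically shrinking object. The paper's argument instead shows that the number of clusters mapped to any transcript-tree node is at most a $3/4$-fraction of those at its parent (unless they are all isolated in the strong $G_i$ sense), which requires the distance-based coloring of \cref{prop:balanced_coloring_insane_version}.
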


\begin{proof}
We explain how to adapt the algorithm from the proof of \cref{thm:log5} by using balanced coloring from \cref{thm:balanced_coloring}. 

We describe what needs to be changed in the description of the algorithm from  \cref{sec:algorithm}. 
First, the number $b$ is not defined as the number of bits, but as $b = 1 + \log_{4/3} n$. 
At the beginning of each phase $i$, clusters in each level $d$ will run the algorithm from \cref{prop:balanced_coloring_insane_version} to compute a partial red and blue coloring of clusters; this has one exception, namely clusters that are in the same level as they were during the previous phase and which were, hence, already considered by the partial coloring of \cref{prop:balanced_coloring_insane_version}. 
These clusters already ran this algorithm for their current level during some previous phase and they retain their color from that previous run (if they were colored).  
The parameter $h_{\text{Prop \ref{prop:balanced_coloring_insane_version}}}$ is set such that $h \log^2 n \ge  200 (\log n + b)^2$. 

The computed color of a cluster $\fC$ plays the same role in this phase as the bit $\ell_{\lev(\fC)+1}$ plays in the original algorithm, i.e., if $u$ and $v$ are neighboring nodes such that clusters $\fC_u$ and $\fC_v$ have the same level, $u$ will consider proposing to $v$ to join $\fC_v$ if $\fC_u$ is colored red and $\fC_v$ is colored blue. 
As we will shortly see, although
\cref{prop:balanced_coloring_insane_version} only outputs a partial coloring, it guarantees that uncolored clusters will not neighbor with a cluster in the same level at any point in time during the current phase, so the fact that not all clusters are colored does not matter for the description of the algorithm. 

We now describe how to adapt the analysis of \cref{thm:log4} to the new algorithm. 
First, the analysis from \cref{sec:analysis}, i.e., the proof of the facts that we delete at most $1/2$ fraction of vertices, the resulting clusters have weak-diameter $O(\log^2n)$ and they have an accompanied Steiner tree of diameter $O(\log^2 n)$ such that each vertex is in $O(\log n)$ Steiner trees, stays the same. 

The round complexity of the network decomposition construction  is $O(\log^5 n + (\log^* b) \cdot \log^4 n)$. 
The first term comes from the analysis of \cref{thm:log5}, while the second term comes from the fact that in each of the $O(\log n)$ phases to construct one of the $O(\log n)$ colors of the resulting decomposition, we need to construct a balanced coloring via \cref{prop:balanced_coloring_insane_version}, with $R = O(\log^2 n)$. 

What remains to be argued is that the resulting clusters are non-adjacent and their Steiner trees are correctly formed, i.e., we will conclude by showing how to adapt the proof of \cref{prop:clusters_nonadjacent} and \cref{prop:trees_are_trees} from \cref{sec:tree}. 
We will slightly change the definition of the transcript tree $T$: each non-leaf vertex of $T$ does not just have $4(b+\log n)$ children, i.e., two times the number of phases, but $6(b+\log n)$, i.e., three times the number of phases. 
After phase $i$, when a cluster $\fC$, mapped to a node $\pi(\fC)$ of $T$ of depth $\lev(\fC)$, decides to go to the next level, we assign it to the $3i$-th, $(3i+1)$-th, or $(3i+2)$-th child of $\pi(\fC)$, based on whether the cluster $\fC$ was assigned a color and if so, which color was assigned to it. 

The proof of \cref{prop:tree_invariant} from \cref{sec:tree} works after the following slight change: We will now observe that if a cluster $\fC$ is left uncolored by \cref{prop:balanced_coloring_insane_version} -- we call such cluster \emph{isolated} --, we know that it will not meet with a different cluster of the same level during the rest of the algorithm (this also shows the algorithm is correctly defined). 
This is because, as we are proving \cref{prop:tree_invariant} by induction, clusters mapped to the subtree of $\pi(\fC)$ in the transcript tree $T$ can, by induction, only eat vertices from clusters in that particular subtree during the next $\left( 2(b+\log n) \right) \cdot \left( 28(b+\log n) \right) \le h \log^2 n / 3 $ phases. 
This means that an isolated cluster $\fC$ can never be adjacent with a cluster on the same level, throughout the whole algorithm. Moreover, once an isolated cluster $\fC$ goes to the next level, it will not eat vertices of other clusters anymore, as it is connected only to clusters of strictly smaller level.
Hence, in future rounds, vertices of isolated clusters can only propose to lower level clusters and join them or be deleted; whenever a lower level cluster neighboring with $\fC$ decides to go to the next level, it deletes its boundary with $\fC$ and does not neighbor with it anymore.
This means that \cref{prop:tree_invariant} holds also in the new algorithm. 
Similarly, the proof of \cref{prop:trees_are_trees} readily generalizes.

Finally, we observe that due to the balanced property of the coloring of \cref{prop:balanced_coloring_insane_version}, whenever new clusters are mapped to some node $r$ in $T$, which happens only once during some phase $i$ of the algorithm, unless all these clusters are isolated, their number is at most $3/4$-th fraction of the clusters in the parent of $r$ at the beginning of the phase $i$. 
Hence, after $1 + \log_{4/3} n$ rounds, all resulting clusters are isolated and, by \cref{prop:clusters_finished}, of level $b$. Hence, there are no edges between the final clusters, as needed. 
\end{proof}

For the shattering applications in the \congest model in \cref{sec:balanced_shattering}, we will need the fact that the above \cref{thm:log5balanced} generalizes to the following, more restrictive, setting. 

\begin{remark}
\label{rem:change_nodes_to_trees}
The above proof of \cref{thm:log5balanced} works even if each node $u$ of the graph $G$ is in the communication graph simulated by a tree of strong-diameter $R_0$. 
The round complexity then changes to $O(R_0 (\log^5 n + (\log^*b)\log^4 n) )$. 
\end{remark}
\begin{proof}
We need to check that both the main network decomposition algorithm from \cref{thm:log5balanced} and the balanced coloring from \cref{prop:balanced_coloring_insane_version} generalize to this more restrictive setting, where each \emph{virtual} node of $G$ is a tree of diameter $R_0$ in the underlying communication graph. 
In the case of \cref{prop:balanced_coloring_insane_version}, we observe that expanding each node $u$ of the Steiner tree $T_{\fC_u}$ of diameter $O(\log^2 n)$, in its underlying tree, makes $T_{\fC_u}$ a tree of diameter $O(R_0 \cdot \log^2 n)$. 
Similarly, we can set the parameter $h$ in \cref{prop:balanced_coloring_insane_version} such that $h\log n = 3R_0\cdot 28(\log n + b)$, i.e., $R_0$ times bigger than its size in \cref{thm:log5balanced}. 
This implies that the resulting coloring will have the desired properties, while its round complexity is still $O((R_0 \cdot R+R_0 \cdot \log^2 n) \cdot \log^* b)$. 

Second, we verify that the network decomposition algorithm generalizes to this setting.
Whenever a cluster $\fC$ collects some information (e.g., the number of proposing vertices) through its Steiner tree, we expand each virtual vertex in it to its corresponding tree and send the information in the new, \emph{expanded Steiner tree} of diameter $O(R_0 \cdot \log^2 n)$. 
Since each virtual node is a part of $O(\log n)$ Steiner trees, each edge in the expanded Steiner trees is a part of $O(\log n)$ expanded Steiner trees. 
This means that gathering information through clusters is done with an additional $R_0$ multiplicative increase in the round complexity. 
Similarly, whenever a virtual node proposes to a cluster, it can decide who to propose to in $O(R_0)$ rounds by gathering information from the leaves of its communication tree. 
Hence, the final round-complexity is multiplied by a factor of $R_0$, which concludes the proof. 
\end{proof}

\subsection{Balanced Coloring for Faster Decomposition}

\begin{proposition}
\label{prop:balanced_coloring_insane_version}
Consider a network $G$ with $b$-bit identifiers, where $b=\Omega(\log n)$, and $O(b)$-bit message sizes. Suppose that the vertices are partitioned into clusters of weak-diameter $R$. 
In particular, for each cluster, we are also given a Steiner tree of depth $R$, such that each vertex is in $O(\log n)$ of these Steiner trees. 
Furthermore, suppose that each cluster $\fC$ has a level $\lev(\fC) \in [1, O(\log n)]$. 
There is an algorithm that, in $O((R+h\log^2 n) \cdot \log^* b)$ rounds, returns a partial coloring of the clusters with the following guarantees: 

Let $U_i$ be the set of vertices in clusters of level $i$ and define $U_{i+}=\cup_{j\geq i} U_j$. 
We define a cluster graph $G_i$ for each level $i$, where vertices are clusters of level $i$ and two clusters are connected iff their distance in the subgraph of $G$ induced by $U_{i+}$ is at most $h \log^2 n$, for a given value $h \ge 1$. 

In the output partial coloring, each cluster which is contained in a connected component with at least two level-$i$ clusters is colored red or blue such that at most $3/4$ of the level-$i$ clusters are blue and similarly at most $3/4$ of them are red. 
Clusters that are alone in their connected component in $G_i$ are left uncolored. 
\end{proposition}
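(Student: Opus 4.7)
The plan is to reduce the proposition to a per-level invocation of \cref{thm:balanced_coloring2} on a virtual cluster graph. For each level $i$, define $G_i$ as in the statement: its vertices are the level-$i$ clusters, and its edges encode pairs of level-$i$ clusters whose physical distance in the induced subgraph $G[U_{i+}]$ is at most $h\log^2 n$. A ``virtual round'' of communication between $G_i$-adjacent clusters will be simulated by two primitives: (a) up to $h\log^2 n$ rounds of BFS-style flooding restricted to $G[U_{i+}]$, and (b) $O(R)$ rounds of convergecast/broadcast along each cluster's input Steiner tree, which is efficient because each physical vertex is in only $O(\log n)$ Steiner trees, via \cref{cor:pipelining_with_overlap}. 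One virtual round at level $i$ therefore costs $O(R+h\log^2 n)$ physical rounds.

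With this simulation, the four stages of \cref{thm:balanced_coloring2} translate directly to each $G_i$: (i) each cluster either selects a single outgoing $G_i$-edge via a min-ID flood through $U_{i+}$, or discovers it is alone in its component and is declared isolated; (ii) each cluster counts its incoming selected edges (capped at $11$) by a Steiner-tree convergecast to distinguish heavy from light; (iii) the non-isolated light clusters run Linial's $O(\log^* b)$-round MIS on $H^2$, where $H$ is the subgraph of selected edges and each round requires one current color from each cluster to be delivered to its $O(1)$ $H^2$-neighbors; (iv) the final red/blue colors are assigned by the token-pairing procedure along Steiner trees, exactly as in \cref{thm:balanced_coloring2}. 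Stages (i), (ii) and (iv) need $O(1)$ virtual rounds and stage (iii) contributes the dominating $O(\log^* b)$ virtual rounds, so the per-level physical cost is $O((R+h\log^2 n)\log^* b)$. Isolated clusters from step (i) are left uncolored, matching the statement, and the $3/4$-balance follows from the guarantee of \cref{thm:balanced_coloring2} applied to each non-trivial connected component of $G_i$.

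The main obstacle is running all $O(\log n)$ levels simultaneously within the same $O((R+h\log^2 n)\log^* b)$ budget, rather than suffering an extra $\log n$ factor from sequential processing. The argument must exploit two facts: a physical vertex $v$ with $\lev(\fC_v)=j$ only carries flooding traffic for levels $i\le j$, and most messages at each flood step are \emph{reducible} at intermediate vertices---a running minimum ID, a counter capped at $O(1)$, or a Linial color that shrinks rapidly from $b$ bits to $O(1)$ bits after a few rounds. Amortizing the $O(b)$ bits of new information per broadcast over the $h\log^2 n$ rounds of the flood leaves only $O(1)$ bits per edge per physical round per level on average, which, combined with the $\Omega(\log n)$-bit channels, lets all $O(\log n)$ levels share the physical bandwidth without slowdown. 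Making this scheduling precise, and checking that the heavy/light classification and token-pairing can be interleaved across levels without interference, is the most delicate part of the proof.
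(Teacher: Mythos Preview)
Your high-level reduction to \cref{thm:balanced_coloring2} and your identification of the main obstacle (parallelizing across the $O(\log n)$ levels without an extra $\log n$ factor) are both correct and match the paper's strategy. However, your proposed resolution of that obstacle is where the proposal and the paper diverge, and your version is left substantially incomplete.

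You propose to run, for each level $i$, a min-ID flood through $G[U_{i+}]$ so that clusters discover $G_i$-neighbors, and then to argue that all $O(\log n)$ such floods fit in the $O(b)$-bit channels by ``amortizing the $O(b)$ bits of new information over the $h\log^2 n$ rounds of the flood.'' This amortization is not obviously sound: in a min-ID flood, the running minimum at a vertex can change repeatedly, and each change is a fresh $b$-bit value that must cross every outgoing edge in the next round. Nothing in your sketch bounds the number of such changes per vertex or shows how to pipeline the bits of a value that is itself being updated. You yourself flag this as ``the most delicate part,'' and it is indeed the part that would need a genuine new idea to work.

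The paper sidesteps the whole issue. Instead of flooding identifiers, it floods \emph{single-bit} BFS tokens: each level-$i$ cluster grows into $U_{i+}$ for $h\log^2 n$ steps using one bit per edge per round, so all $O(\log n)$ levels fit trivially in an $O(\log n)\le O(b)$-bit message. This assigns every vertex of $U_{i+}$ to exactly one level-$i$ \emph{extended cluster}, and appends the resulting BFS trees to the original Steiner trees. The extended Steiner trees still have $O(\log n)$ overlap per vertex (at most one new tree per level), so \cref{thm:balanced_coloring2} can now be invoked \emph{directly} on the extended clusters, with all identifier-carrying communication happening inside Steiner trees via \cref{cor:pipelining_with_overlap}. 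The min-ID flood you propose is never needed; the one-bit BFS extension is the trick you are missing.
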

\begin{proof}
We construct the coloring in parallel for each of the $O(\log n)$ levels. 
First, for each level $i$, we construct in parallel an \emph{extended cluster} of $\fC$ denoted by $\check{\fC}$ as follows. 
We run a simultaneous BFS in $U_{i+}$ starting from all nodes that are contained in some level-$i$ cluster. 
Each level-$i$ BFS only uses a single bit in each $b = \Omega(\log n)$ bit message that can be send across each edge.  
Each node of each level-$i$ cluster starts by sending a one-bit token through the one-bit channel to each of its neighbors in $U_{i+}$. 
In general, we are allowed to forward this level-$i$ token only among nodes of $U_{i+}$. 
Each node $v\in U_{i+}$, upon receiving one (or more) level-$i$ BFS tokens, remembers the first node $w$ it receives a token from as its parent in the BFS tree, breaking ties arbitrarily.  
Moreover, in the next round, $v$ forwards this token to its own neighbors in $U_{i+}$. 
We repeat this for $h\log^2 n$ iterations.  
At the end, each node in $U_{i+}$ that can be reached from a level-$i$ cluster via $h\log^2 n$ hops in $G[U_{i+}]$ is reached and belongs to one level-$i$ BFS. 
Each level-$i$ cluster now has one (potentially singleton) tree $T_u$ attached to each of its vertices $u$, which contains all nodes of $U_{i+}$ that were reached by the token initiated in $u$. 

We define the extended cluster $\check{\fC}$ of $\fC$ as the union of all trees $T_u$ over $u \in \fC$ and the Steiner tree $T_{\check{\fC}}$ as the union of the Steiner tree $T_\fC$ together with trees $T_u$ for $u \in \fC$. 
Note that the above construction adds each node to only $O(\log n)$ extended clusters (at most one for each level), hence it is still the case that each vertex is in $O(\log n)$ Steiner trees. 
Thus, using \cref{cor:pipelining_with_overlap}, each cluster can broadcast its label to all its vertices in parallel $O(R + h \log^2 n)$ rounds. 

We can now apply the algorithm from \Cref{thm:balanced_coloring2} for each level-$i$ cluster graph $G_i$ defined such that the nodes are extended clusters of level $i$ and connections are between adjacent clusters. 

Whenever we collect or broadcast an information in a cluster during that algorithm, it can be done for all clusters of all levels in parallel in $O(R + h\log^2 n)$ rounds by \cref{cor:pipelining_with_overlap}, due to the fact that the total number of Steiner trees overlapping at any vertex is $O(\log n)$. 
Whenever we use a particular edge connecting two Steiner trees, it can be used by $O(\log n)$ runs for each level at the same time, hence instead of one \congest round we need $O(\log n)$ rounds. This complexity is, however, dominated by the complexity of broadcasting on Steiner trees, so in total, the round complexity is bounded by $O((R + h \log^2 n) \log^* b) $, as needed. 
\end{proof}

\subsection{Applications in the Shattering Framework}
\label{sec:balanced_shattering}
\cref{rem:change_nodes_to_trees} has the following two corollaries that were mentioned in \cref{sec:our_contributions}. 

\randomizedmis
\begin{proof}
The proof is the same as \Cref{thm:mis_randomized_slow} with only one exception: The $O(\log^9\log n)$ round complexity of building network decomposition is now replaced with an $O(\log^6\log n)$ round complexity, thanks to the faster decomposition provided by \Cref{thm:log5balanced} which needs $O(R_0 \cdot \log^5\log n)$ rounds, where $R_0 = O(\log \log n)$ is the diameter of each cluster formed after construction of the ruling set. 
\end{proof}

Similarly, we get the following improvement for the round-complexity of $\Delta+1$-coloring. 
\randomizedcoloring

\section{Aggregating with Overlapping Trees}
\label{sec:aggregating}

In this section, we explain how to use pipelining to speed up broadcasting and information aggregation in our setting with overlapping broadcast trees. Our end result is \cref{cor:pipelining_with_overlap} that we rely on whenever we want to optimize the round complexity of our algorithms in the \congest model. 

Recall that we face the following problem in several \congest algorithms in this paper. 
We have a collection of rooted trees $T_\fC$ such that the depth of each tree is $R = O(\log^2 n)$ and each edge of the underlying graph $G$ is present in up to $O(\log n)$ trees. 
We now want to solve one of the following two problems: 
\begin{enumerate}
    \item \emph{Broadcast}: The root of $T_\fC$ wants to send an $m = O(\log n)$-bit message to all nodes in $T_\fC$ -- this is useful e.g. when a cluster root tells the vertices in it whether the cluster grows in this step or not;
    \item \emph{Summation}: Each node $u \in T$ starts with a nonnegative $m = O(\log n)$-bit number $x_u$. 
    At the end, the root of $T$ knows the value of $(\sum_{u \in T} x_u) \mod 2^{O(m)}$ -- this is useful e.g. when a cluster collects how many nodes are proposing to it.
\end{enumerate}

For the applications in \cref{sec:handling_identifiers,sec:faster_construction_without_identifiers}, we also need to quickly solve the following two operations:

\begin{enumerate}
    \item \emph{Convergecast}: We have $O(1)$ special nodes $u\in T$, where each special node starts with a separate message. At the end, the root of $T$ knows all messages;
    \item \emph{Minimum}: Each node $u \in T$ starts with a nonnegative number $x_u$. At the end, the root of $T$ should know the value of $\min_{u \in T} x_u$. 
\end{enumerate}

To deal with the overlap of Steiner trees, in case there are $P$ trees using the same edge, we allocate only $b' = b/P$ bits (typically, $b' = \Theta(1)$) of the capacity of each edge to a given tree. 
We then show how to solve the four aforementioned operations on a single tree with $b'$-bit messages in time $O(R + m/b')$, where $m$ is the length of the messages we are transmitting/aggregating. 
Performing broadcast and convergecast operations can be done by ``pipelining'' the messages \cite{peleg00}. 
For example, to perform broadcast of a message of length $m > b'$, the root splits the message into chunks of length $m/b'$ and starts the broadcast of the $i$'th chunk in the $i$'th round. The subsequent broadcasts of different chunks do not interfere, so all of them finish in $O(R + m/b')$ rounds. 
Convergecast is handled similarly. 
To perform summation and taking minimums, each node needs to do a little bit more additional work as explained in the following lemma. 
    
\begin{lemma}
\label[lemma]{lem:congest_aggregation}
Let $T$ be a rooted tree with depth $r$. 
The tree is oriented towards its root and each node knows its parent, as well as its own depth and the overall depth of $T$. 
Moreover, each node $u$ has an $m$-bit number $x_u$. 
In one round of communication, each node can send a $b$-bit message for some $b \le m$ to all its neighbors in $T$. 
There is a protocol such that, in $O(r + m/b)$ rounds, we can perform the following operations
\begin{enumerate}
    \item \emph{Broadcast}: The root of $T$ sends a $m$-bit message to all nodes in $T$;
    \item \emph{Convergecast}: We have $O(1)$ special nodes $u\in T$, where each special node starts with a separate $m$-bit message. At the end, the root of $T$ knows all messages;
    \item \emph{Minimum}: Each node $u \in T$ starts with a nonnegative $m$-bit number $x_u$. At the end, the root of $T$ knows the value of $\min_{u \in T} x_u$;
    \item \emph{Summation}: Each node $u \in T$ starts with a nonnegative $m$-bit number $x_u$. At the end, the root of $T$ knows the value of $(\sum_{u \in T} x_u) \mod 2^{O(m)}$;
\end{enumerate}
\end{lemma}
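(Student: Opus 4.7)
The plan is to treat all four operations via the same chunked-pipelining template: split each $m$-bit payload into $k = \lceil m/b \rceil$ chunks of $b$ bits and schedule transmissions so that only one chunk traverses any tree edge per round, yielding a latency of $r + O(k) = O(r + m/b)$.

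For broadcast, the root emits chunk $i$ in round $i$, and every node retransmits a received chunk to its children in the following round; a straightforward induction on depth shows chunk $i$ is delivered everywhere by round $i + r$. Convergecast is symmetric: each of the $O(1)$ special sources injects its $k$ chunks toward the root at rate one per round, tagged by source index, and since at any given round each internal node has only $O(1)$ chunks per index to forward upward, all chunks pipeline to the root in $O(r + k)$ rounds.

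For summation, I would view each $x_u$ as a base-$2^b$ numeral and process chunks from least significant to most significant. Each node $u$ at height $h$ is scheduled to send chunk $i$ of its partial subtree-sum to its parent in round $i + h$; the value it sends is the sum of chunk $i$ from each of its children (received one round earlier), plus its own chunk $i$, plus the $O(1)$-bit carry it generated when computing chunk $i-1$. The low $b$ bits fill the message and the high $O(1)$ bits are stored as the new carry. Truncating the final answer modulo $2^{O(m)}$ absorbs the carry that would otherwise leave chunk $k$, and the root learns chunk $i$ of the total sum in round $i + r$.

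The main obstacle is the minimum, since $\min$ is not locally combinable chunk-by-chunk: chunk $i$ of the global minimum depends on which entries still agree with the minimum's prefix in higher-order chunks. The plan is to process chunks from most significant to least significant and to have each node $u$ maintain, for itself and for each child $w$, an \emph{alive} flag saying whether the prefix of chunks heard so far from that source still equals the prefix of $u$'s own evolving subtree-minimum. In round $i + h$, node $u$ at height $h$ combines chunk $i$ from each surviving child with its own chunk $i$ of $x_u$ (if still alive) to produce chunk $i$ of its subtree-minimum, then updates each alive flag by comparing that source's chunk $i$ against the chosen minimum chunk $i$ and marking as dead any source whose chunk $i$ strictly exceeds it. A short induction on $h$ shows that the sequence of chunks $u$ emits is exactly the base-$2^b$ expansion of the minimum in $u$'s subtree, so the root obtains chunk $i$ of the global minimum in round $i + r$, matching the $O(r + m/b)$ bound claimed by the lemma.
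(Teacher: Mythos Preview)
Your proposal is correct and follows essentially the same approach as the paper: chunked pipelining for broadcast and convergecast, least-significant-chunk-first summation with a locally stored carry, and most-significant-chunk-first minimum where each node maintains alive flags for the children (and itself) whose prefix still matches the running subtree minimum. One minor inaccuracy: at a node $u$ the summation carry can be $O(\log \deg(u))$ bits rather than $O(1)$ bits, but since the carry is kept locally and never transmitted this does not affect the $O(r+m/b)$ round bound.
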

\begin{proof}
For simplicity, we prove only the case $m=b$, as the generalization to $b \le m$ is direct. 
The Broadcast and Convergecast operations were already sketched above. 

The summation algorithm works as follows: each node $u$ in depth $d$ is sleeping except of rounds $r-d+1$ to $r-d+m$.
The node $u$ starts with a value $x_u$ that will change over time. 
In every round $r-d \le i \le r-d+m$, the node $u$ sends the value of the $(i+d-r)$'th least significant bit $b_u$ to its parent (this only applies if it has a parent and if $i > r-d$) and from each of its children $v$, the node $u$ receives the corresponding value $b_v$ equal to the $(i+(d+1)-r)$'th least significant bit of $x_v$.  
Then, $u$ updates the value of $x_u$ as follows:
\[
x_u \leftarrow x_u - b_u \cdot 2^{i+d-r-1} + \sum_{\text{$v$ child of $u$}} b_v \cdot 2^{i+d-r}.  
\]
Note that after this one-round update, the total sum $\sum_{u \in T} x_u$ did not change. 
On the other hand, we can easily see by induction that after round $i$, each non-root node $u$ in depth $d$ has the $i+d-r$ least significant bits of the value $x_u$ set to zero. 
Hence, after $r + O(m)$ rounds, for each $u$ except the root, we have $\left( x_u \mod 2^{O(m)} \right) = 0$ and, hence, the root has the value $\left( \sum_{u\in T} x_u\right) \mod 2^{O(m)}$, i.e., the final sum.

The case when addition is replaced by taking the minimum (or maximum) is handled similarly, but the nodes start sending the information from the most significant bit. 
More concretely, the algorithm aggregating $\min_{u \in T} x_u$ works as follows: each node $u$ in depth $d$ is sleeping except of rounds $r-d$ to $r-d+m$.
The node $u$ starts with a value $x_u$ and, moreover, it has a bit variable $b_u$ that at the beginning of round $i$ contains the $i+d-r$'th most significant bit of $\min_{w \in T(u)} x_u$ (here, $T(u)$ denotes the subtree of $T$ rooted at $u$). 
The node $u$ also maintains a possibly empty subset $S_u \subseteq \{u\} \cup \bigcup_{\text{$v$ child of $u$}} \{v\}$ such that each child $v$ of $u$ is contained in $S_u$ if and only if, at the beginning of round $i$, the $i+d-r$ most significant bits of $\min_{w \in T(v)} x_w$ are equal to those of $\min_{w \in T(u)} x_w$. Similarly, $u \in S_u$ if and only if the $i+d-r$ most significant bits of $x_u$ are equal to those of $\min_{w \in T(u)} x_w$.  
Initially, we set $S_u = \{u\} \cup \bigcup_{\text{$v$ child of $u$}} \{v\}$ and the variable $c_u$ is first set in round $r-d$. 

In every round $r-d\le i \le r-d+m$, the node $u$ sends the value of the bit $b_u$ to its parent (if it has a parent and if $i > r-d$) and from each of its children $v$, the node $u$ receives the corresponding value $b_v$. 
To update $b_u$ for the next round, the node $u$ considers all values $b_v$ where $v \in S_u$, and the $(i+(d+1)-r)$'th most significant bit of $x_u$ if $u \in S_u$. 
If at least one of those bits is equal to $0$, $b_u$ is set to $0$ and we remove all children $v$ with $b_v = 1$ from $S_u$, as well as $u$ if  the $(i+(d+1)-r)$'th most significant bit of $x_u$ is $1$.  
Otherwise, the value of $b_u$ is set to $1$ and $S_u$ is left the same. 

The correctness of the algorithm follows from the following induction argument. During round $i$, the node $u$ got to know the $i+(d+1)-r$'th most significant bit of all $v \in S_u$ such that $\min_{w \in T(v)} x_w$ and $\min_{w \in T(u)} x_w$ agree on the $i+d-r$ rightmost bits from its children as values $b_v$. 
The node $u$ then correctly updates $b_u$ as the $i+(d+1)-r$'th most significant bit of $\min_{w \in T(u)} x_w$ and accordingly updates the set $S_u$ afterwards. Hence, after $r+m$ rounds, the root node knows all $m$ bits of the value $\min_{u\in T} x_u$, as needed.  
\end{proof}
\begin{remark}
In general, we are only using the property that the respective operation $\circ$ (such as $+$ or $\min(\cdot,\cdot)$) is associative and if $p_i(x)$ denotes the rightmost (leftmost) $i$ bits of $x$, then $p_i( x_1 \circ \dots \circ x_k)$ can be computed from $p_i(x_1), \dots, p_i(x_k)$. 
For example, multiplication also has this property. 
\end{remark}

The above \cref{lem:congest_aggregation} is used via the following corollary. 

\begin{corollary}
\label{cor:pipelining_with_overlap}
Let $G$ be a communication graph on $n$ vertices. Suppose that each vertex of $G$ is part of some cluster $\fC$ such that each such cluster has a rooted Steiner tree $T_\fC$ of diameter at most  $R$ and each node of $G$ is contained in at most $P$ such trees. 
Then, in $O(P + R)$ rounds of the \congest model with $b$-bit messages for $b \ge P$, we can perform the following operations for all clusters in parallel:
\begin{enumerate}
    \item \emph{Broadcast}: The root of $T_\fC$ sends a $b$-bit message to all nodes in $\fC$;
    \item \emph{Convergecast}: We have $O(1)$ special nodes $u\in \fC$, where each special node starts with a separate $b$-bit message. At the end, the root of $T_\fC$ knows all messages;
    \item \emph{Minimum}: Each node $u \in \fC$ starts with a nonnegative $b$-bit number $x_u$. At the end, the root of $T_\fC$ knows the value of $\min_{u \in \fC} x_u$;
    \item \emph{Summation}: Each node $u \in \fC$ starts with a nonnegative $b$-bit number $x_u$. At the end, the root of $T_\fC$ knows the value of $(\sum_{u \in \fC} x_u) \mod 2^{O(b)}$;
\end{enumerate}
\end{corollary}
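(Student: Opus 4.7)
The plan is to reduce \cref{cor:pipelining_with_overlap} to $|\fC|$ parallel, independent invocations of \cref{lem:congest_aggregation}, by splitting the bandwidth of each edge evenly among the Steiner trees using it.

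First, I would fix, once and for all, a round-independent schedule for allocating the $b$-bit channel on every edge $e=\{x,y\}$ of $G$ among the (at most $P$) Steiner trees containing $e$. Both endpoints $x$ and $y$ know the set of trees containing $e$, so they can deterministically agree, e.g.\ by sorting tree identifiers, on a partition of the $b$ bits of each round's message into blocks of size $b'=\lfloor b/P\rfloor\ge 1$, with one block reserved for each tree passing through $e$. This gives every tree $T_{\fC}$ a private, persistent $b'$-bit per-round channel on each of its edges, and communications carried out for different clusters never collide.

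Second, on every tree $T_{\fC}$ I would run the protocol of \cref{lem:congest_aggregation} in parallel, treating $T_{\fC}$ as if it were the only tree in the network and using only its reserved $b'$-bit channel. The message length required for each of the four operations is $m=O(b)$, so \cref{lem:congest_aggregation} yields a round complexity of
\[
O\bigl(R+m/b'\bigr)
=O\bigl(R+b/\lfloor b/P\rfloor\bigr)
=O(R+P).
\]
Because the private channels are edge-disjoint across different trees, all clusters finish their operations within this bound simultaneously, which is exactly the claim.

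The one prerequisite for applying \cref{lem:congest_aggregation} is that each node in $T_{\fC}$ knows its parent, its depth, and the value $R$; these can be computed once, at the time the Steiner tree is built, by a single root-initiated broadcast within $T_{\fC}$, and by the same bandwidth-splitting argument this preprocessing also fits in $O(R+P)$ rounds. The main ``obstacle'' is purely bookkeeping: one must ensure that the two endpoints of each edge use a consistent allocation of bits to trees in every round, so that sender and receiver agree on which bits belong to which tree. This is handled by any deterministic local tie-breaking rule on tree identifiers, and once it is in place, the four operations follow directly from \cref{lem:congest_aggregation} with no further changes.
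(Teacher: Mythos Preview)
Your proposal is correct and follows essentially the same approach as the paper: allocate $\lfloor b/P\rfloor$ bits of each edge's bandwidth to every Steiner tree using that edge, then invoke \cref{lem:congest_aggregation} in parallel on each tree to obtain the $O(R+P)$ bound. Your write-up adds some implementation details (consistent bit allocation via tree identifiers, preprocessing of depths) that the paper leaves implicit, but the core argument is identical.
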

\begin{proof} 
Each edge allocates $\lfloor b/P \rfloor$ bits to each Steiner tree that is using it.
Then, for each Steiner tree $T_\fC$ in parallel, we use \cref{lem:congest_aggregation} to perform the given operation in $O(R + b / (b/P)) = O(R + P)$ rounds. 
\end{proof}

\section*{Acknowledgment}
This project has received funding from the European Research Council (ERC) under the European Union’s Horizon 2020 research and innovation programme (grant agreement No. 853109)

\bibliographystyle{alpha}
\bibliography{ref}

\newcommand{\etalchar}[1]{$^{#1}$}
\begin{thebibliography}{CFG{\etalchar{+}}19}

\bibitem[ABCP96]{awerbuch96}
Baruch Awerbuch, Bonnie Berger, Lenore Cowen, and David Peleg.
\newblock Fast network decompositions and covers.
\newblock {\em J.\ of Parallel and Distributed Computing}, 39(2):105--114,
  1996.

\bibitem[AGLP89]{awerbuch89}
Baruch Awerbuch, Andrew~V. Goldberg, Michael Luby, and Serge~A. Plotkin.
\newblock Network decomposition and locality in distributed computation.
\newblock In {\em Proc.\ 30th IEEE Symp. on Foundations of Computer Science
  (FOCS)}, pages 364--369, 1989.

\bibitem[AP90]{Awerbuch-Peleg1990}
Baruch Awerbuch and David Peleg.
\newblock Sparse partitions.
\newblock In {\em Proc.\ 31st IEEE Symp.\ on Foundations of Computer Science
  (FOCS)}, pages 503--513, 1990.

\bibitem[BE13]{barenboimelkin_book}
Leonid Barenboim and Michael Elkin.
\newblock {\em Distributed Graph Coloring: Fundamentals and Recent
  Developments}.
\newblock Morgan \& Claypool Publishers, 2013.

\bibitem[BEPS16]{barenboim_symmbreaking}
Leonid Barenboim, Michael Elkin, Seth Pettie, and Johannes Schneider.
\newblock The locality of distributed symmetry breaking.
\newblock {\em Journal of the ACM}, 63:20:1--20:45, 2016.

\bibitem[BKM20]{bamberger2020efficient}
Philipp Bamberger, Fabian Kuhn, and Yannic Maus.
\newblock Efficient deterministic distributed coloring with small bandwidth.
\newblock In {\em Proc.~Principles of Distributed Computing (PODC)}, pages to
  appear, arXiv:1912.02814, 2020.

\bibitem[CFG{\etalchar{+}}19]{chang2019coloring}
Yi-Jun Chang, Manuela Fischer, Mohsen Ghaffari, Jara Uitto, and Yufan Zheng.
\newblock The complexity of ({$\Delta$}+ 1) coloring in congested clique,
  massively parallel computation, and centralized local computation.
\newblock In {\em Proceedings of the 2019 ACM Symposium on Principles of
  Distributed Computing}, pages 471--480. ACM, 2019.

\bibitem[CHPS17]{censor2017derandomizing}
Keren Censor-Hillel, Merav Parter, and Gregory Schwartzman.
\newblock Derandomizing local distributed algorithms under bandwidth
  restrictions.
\newblock In {\em 31st International Symposium on Distributed Computing (DISC
  2017)}. Schloss Dagstuhl-Leibniz-Zentrum fuer Informatik, 2017.

\bibitem[CKP16]{chang16}
Y.-J. Chang, T.~Kopelowitz, and S.~Pettie.
\newblock An exponential separation between randomized and deterministic
  complexity in the {LOCAL} model.
\newblock In {\em Proc.\ 57th IEEE Symp.\ on Foundations of Computer Science
  (FOCS)}, 2016.

\bibitem[CLP18]{chang2018optimal}
Yi-Jun Chang, Wenzheng Li, and Seth. Pettie.
\newblock An optimal distributed {$(\Delta + 1) $}-coloring algorithm?
\newblock In {\em Proc.\ 50th ACM Symp.\ on Theory of Computing (STOC)}, 2018.

\bibitem[EN16]{elkin16_decomp}
Michael Elkin and Ofer Neiman.
\newblock Distributed strong diameter network decomposition.
\newblock In {\em Proc.\ 35th ACM Symp.\ on Principles of Distributed Computing
  (PODC)}, pages 211--216, 2016.

\bibitem[Gha16]{ghaffari2016MIS}
Mohsen Ghaffari.
\newblock An improved distributed algorithm for maximal independent set.
\newblock In {\em Proc.\ ACM-SIAM Symp.\ on Discrete Algorithms (SODA)}, pages
  270--277, 2016.

\bibitem[Gha19]{ghaffari2019MIS}
Mohsen Ghaffari.
\newblock Distributed maximal independent set using small messages.
\newblock In {\em Proc.\ ACM-SIAM Symp.\ on Discrete Algorithms (SODA)}, pages
  805--820, 2019.

\bibitem[GHK18]{ghaffari2018derandomizing}
Mohsen Ghaffari, David Harris, and Fabian Kuhn.
\newblock On derandomizing local distributed algorithms.
\newblock In {\em Proc.~Foundations of Computer Science (FOCS)}, pages
  662--673, 2018.

\bibitem[GKM17]{ghaffari2017complexity}
Mohsen Ghaffari, Fabian Kuhn, and Yannic Maus.
\newblock On the complexity of local distributed graph problems.
\newblock In {\em Proc.\ 49th ACM Symp.\ on Theory of Computing (STOC)}, pages
  784--797, 2017.

\bibitem[GP19]{ghaffariPortmann2019}
Mohsen Ghaffari and Julian Portmann.
\newblock Improved network decompositions using small messages with
  applications on mis, neighborhood covers, and beyond.
\newblock In {\em 33rd International Symposium on Distributed Computing (DISC
  2019)}. Schloss Dagstuhl-Leibniz-Zentrum fuer Informatik, 2019.

\bibitem[Lin87]{linial1987LOCAL}
Nathan Linial.
\newblock Distributive graph algorithms -- global solutions from local data.
\newblock In {\em Proc.\ 28th IEEE Symp.\ on Foundations of Computer Science
  (FOCS)}, pages 331--335, 1987.

\bibitem[LS93]{linial93}
Nati Linial and Michael Saks.
\newblock Low diameter graph decompositions.
\newblock {\em Combinatorica}, 13(4):441--454, 1993.

\bibitem[Pel00]{peleg00}
David Peleg.
\newblock {\em Distributed Computing: A Locality-Sensitive Approach}.
\newblock SIAM, 2000.

\bibitem[PS92]{panconesi-srinivasan}
Alessandro Panconesi and Aravind Srinivasan.
\newblock Improved distributed algorithms for coloring and network
  decomposition problems.
\newblock In {\em Proc.\ 24th ACM Symp.\ on Theory of Computing (STOC)}, pages
  581--592, 1992.

\bibitem[RG20]{RozhonG19}
V\'{a}clav Rozho\v{n} and Mohsen Ghaffari.
\newblock Polylogarithmic-time deterministic network decomposition and
  distributed derandomization.
\newblock In {\em Proc.~Symposium on Theory of Computation (STOC)}, 2020.

\end{thebibliography}

\end{document}